\numberwithin{equation}{section}
\newtheorem{theorem}{Theorem}
\newtheorem{lemma}{Lemma}
\newtheorem{remark}{Remark}
\newtheorem{definition}{Definition}
\newtheorem{proposition}{Proposition}
\begin{document}

\title{\Large\bf
Signal and Image Reconstruction with Tight Frames via Unconstrained $\ell_1-\alpha \ell_2$-Analysis Minimizations
\footnotetext{\hspace{-0.35cm}
\endgraf $^\ast$\,Corresponding author.
\endgraf {1. P. Li is with  School of Mathematics and Statistics,  Lanzhou University, Lanzhou 730000, China and also with  Gansu Center of Applied Mathematics, Lanzhou University, Lanzhou 730000, China (E-mail:lp@lzu.edu.cn)}
\endgraf {2. H. ~Ge is with Sports Engineering College, Beijing Sport University,
Beijing 100084, China (E-mail: gehuanmin@163.com)}
\endgraf {3. P. ~Geng is with School of Mathematics and Statistics, Ningxia University,
Yinchuan 750021, China (E-mail: gpb1990@126.com)}
}}
\author{Peng Li$^{1}$, Huanmin Ge$^{2}$ and Pengbo Geng$^{3}$$^\ast$ }
\date{}

\maketitle

\begin{abstract}
In the  paper,
we introduce an unconstrained analysis model based on
 the $\ell_{1}-\alpha \ell_{2}$ $(0< \alpha \leq1)$ minimization
 for the   signal and image reconstruction.
We develop some new technology lemmas for tight frame, and the recovery guarantees
based on the restricted isometry property adapted to frames. 
 The  effective algorithm is established  for the proposed  nonconvex analysis model.
We illustrate the performance of the proposed   model and algorithm  for the  signal  and compressed sensing MRI reconstruction  via
extensive numerical
experiments. And their performance 
is better than that of the existing methods.

\end{abstract}

\textbf{Key Words and Phrases.}
{Unconstrained $\ell_1-\alpha \ell_2$-analysis,
 Tight frame, 
 Restricted isometry property, Restricted orthogonality constant.}

\textbf{MSC 2020}. {94A12, 90C26, 42C15}

\section{Introduction}\label{s1}
\noindent

\subsection{Sparse Signal Reconstruction}\label{s1.1}
\noindent

In compressed sensing (CS), a crucial concern  is to reconstruct a high-dimensional
signals from a relatively small number of linear measurements
\begin{equation}\label{systemequationsnoise}
\bm{ b}=\bm{ A}\bm{x}+\bm{e},
\end{equation}
where $\bm{ b}\in\mathbb{R}^m$ is a vector of measurements, $\bm{ A}\in\mathbb{R}^{m\times n}~(m\ll n)$ is a sensing matrix  modeling  the linear measurement process,
$\bm{x}\in \mathbb{R}^{n}$ is an unknown sparse or compressible signal  and $\bm{e}\in\mathbb{R}^{m}$ is a vector of measurement errors, see, e.g., \cite{candes2006stable,donoho2006compressed}.
To reconstruct  $\bm{ x}$, the most intuitive approach is to find the sparsest signal, that is, one solves
via  the $\ell_{0}$ minimization problem:
\begin{equation}\label{VectorL0}
\min_{\bm{x}\in\mathbb{R}^n}\|\bm{x}\|_0~~\text{subject~ to}~~\bm{b}-\bm{ A}\bm{ x}\in\mathcal{B},
\end{equation}
where $\|\bm{x}\|_0$
is  the number of nonzero coordinates of $\bm{ x}$ and $\mathcal{B}$ is a bounded set determined by the error structure. 
It is known that the problem \eqref{VectorL0} is NP-hard for high dimensional signals and
faces challenges in both theoretical and computational (see, e.g., \cite{Donoho2005Stable,2006High,2006On}).
Then  many fast and effective  algorithms  have been developed to
recover $\bm{ x}$  from \eqref{systemequationsnoise}. The Least Absolute Shrinkage and Selector Operator (Lasso) is
 among the most well-known algorithms, i.e.,
\begin{equation}\label{VectorL1-Lasso}
\min_{\bm{ x}\in\mathbb{R}^n}~\lambda\|\bm{ x}\|_1+\frac{1}{2}\|\bm{ Ax} -\bm{b}\|_2^2,
\end{equation}
where $\lambda> 0$ is a parameter to balance the data fidelity term $\|\bm{Ax} -\bm{ b}\|_{2}^{2}/2$ and the regularized item $\|\bm{ x}\|_{1}$.

In the paper, we  mainly consider the case that $\bm{ x}$ in numerous practical applications is not sparse itself
but is compressible with respect to some given tight frame $\bm{ D}\in\mathbb{R}^{n\times d}~(d\geq n)$.
That is, 
$\bm{ x}=\bm{ D}\bm{ f}$, where $\bm{ f}\in\mathbb{R}^{d}$ is sparse.
Here the Fig. \ref{figure.Original_signal_and_tansformation_coefficents} gives an example, in which the original signal is density  but is sparse under a transformation. Readers can refer to \cite{majumdar2015energy} and find such a real signal example of electroencephalography (EEG) and electrocardiography (ECG).
 And we refer to \cite{han2007frames} for basic theory on tight frame.
  Many  recovery results  for standard sparse signals
  have been extended to the dictionary sparse recovery, see, e.g., \cite{bruckstein2009sparse,candes2011compressed,lin2012new} for $\ell_{1}$ and  \cite{li2014compressed,lin2016restricted} for $\ell_{p}~(0<p<1)$ and references therein.
  Specially, based on the idea of \eqref{VectorL1-Lasso}, 
the following $\ell_1$-analysis problem in \cite{candes2011compressed} is developed:
\begin{equation}\label{VectorL1-ana}
\min_{\bm{ x}\in\mathbb{R}^n}\|\bm{D}^{\top}\bm{ x}\|_1~~\text{subject~ to}
~~\bm{b}-\bm{A}\bm{x}\in\mathcal{B}.
\end{equation}

One of the most commonly used frameworks to investigate the theoretical
performance of the $\ell_{1}$-analysis method is the
restricted isometry property adapted to $\bm{D}$ ($\bm{D}$-RIP), which is first proposed by
Cand{\`e}s et al. \cite{candes2011compressed} and can be defined as follows.

\begin{definition}\label{def.DRIP}
A matrix $\bm{ A}\in\mathbb{R}^{m\times n}$
is said to obey the restricted isometry property adapted to $\bm{ D}\in\mathbb{R}^{n\times d}$
($\bm{D}$-RIP) of order $s$ with constant $\delta_{s}$ if
\begin{equation*}
(1-\delta_{s})\|\bm{Dx}\|_{2}^{2}\leq \|\bm{ A}\bm{Dx}\|_2^2\leq (1+\delta_{s})\|\bm{Dx}\|_{2}^{2}
\end{equation*}
holds for all $s$-sparse vectors $\bm{ v}\in\mathbb{R}^{d}$. The smallest constant $\delta_{s}$ is called as the the restricted isometry constant adapted to $\bm{D}$ ($\bm{D}$-RIC).
\end{definition}

Note that when $\bm{D}$ is an identity matrix, i.e., $\bm{D}=\bm{I}\in\mathbb{R}^{n\times n}$, the  definition of  $\bm{D}$-RIP  reduces to the standard  RIP  in \cite{candes2005decoding,candes2006stable}.

In the recent literature \cite{lin2012new},
the restricted orthogonality constant (ROC) in \cite{candes2005decoding} for the standard
CS  is extended to the CS with  tight frames, which is defined as follows.
\begin{definition}\label{def.DROC}
The $\bm{D}$-restricted orthogonality constant (abbreviated as $\bm{D}$-ROC) of order $(s,t)$, is the
smallest positive number which satisfies
\begin{equation*}
|\langle \bm{ADu}, \bm{ADv}\rangle -\langle \bm{Du}, \bm{Dv}\rangle|\leq \theta_{s,t} \|\bm{u}\|_2\|\bm{v}\|_2
\end{equation*}
for all $s$-sparse and $t$-sparse vectors $\bm{u}, \bm{v}\in\mathbb{R}^{d}$.
\end{definition}

Given the definition $\bm{D}$-RIP and $\bm{D}$-ROC.
For any positive real number $\tau$ , we denote $\theta_{s,\tau t}$  and  $\delta_{\tau t}$ as
$\theta_{s,\lceil \tau t \rceil}$  and  $\delta_{\lceil \tau t\rceil}$, respectively. Here and below, $\lceil t \rceil$ denotes the nearest integer greater than or equal to $t$. With the new notion,
in term of $\bm{D}$-ROC, some sufficient conditions in \cite{lin2012new} have been proposed, such as
$\delta_{s}+1.25\theta_{s,s}<1$, and $\delta_{1.25s}+\theta_{s,1.25s}<1$.

\subsection{MRI Reconstruction Based on Compressed Sensing}\label{s1.2}
\noindent

On the other hand, the idea of sparse signal reconstruction (compressed sensing) also has been used in Magnetic Resonance Imaging (MRI) reconstruction. MRI is crucial in clinical applications for disease diagnosis, since its noninvasive and nonionizing radiation properties enables advanced visualization of anatomical structure.  However, MRI
is limited by the scanning time for physical and physiological constraints \cite{lustig2007sparse}.
In order to reduce the scanning period, the CS method has been introduced into MRI achieving   the capability of accelerating the imaging speed, which is  called CS-MRI technology \cite{lustig2007sparse}.
The success of CS-MRI largely relies on sparsity representation \cite{lustig2008compressed} of the MRI images.
The image is always assumed to be sparse (or compressible) in sparse transform domain (as shown in Fig. \ref{figure.MRI_and_its_transformation}) for the sparsity-based image prior.  Consequently, one can model the reconstruction process through minimizing the regularization function to promote the sparse solution. Sparse representations are confirmed to usually lead to lower reconstruction error \cite{ravishankar2010mr}.

In recent years, some scholars used redundant representation systems instead of orthogonal ones in the MRI community. Some representatives of redundant representations systems are undecimated or shift-invariant wavelet frames
\cite{baker2011translational,guerquin2011fast,liu2015balanced,vasanawala2011practical}, patch-based methods \cite{lai2016image,qu2014magnetic,zhan2015fast}, over-complete dictionaries \cite{ravishankar2010mr,huang2014bayesian}, etc. Redundancy in a redundant system obtains robust image representations and
also introduces additional benefits. For instance, the redundancy in wavelet
enables shift-invariant property. For patch-based methods, redundancy  lead to better noise removal and artifact suppression. The redundant coming from over-complete dictionaries contains more atom signals than required to
represent images, which can better capture different image features, and make the representations
relatively sparser than orthogonal dictionaries do.  Redundancy also makes the designing or training of such dictionaries more flexible. Even though the trained dictionary is orthogonal for image patches, the overall representation system for the whole image maybe still be redundant \cite{zhan2015fast,cai2014data} due to overlapping of patches. Most of these redundant representation systems above can be categorized as tight frame systems \cite{vetterli2012foundations}. More works about CS-MRI, readers can refer to two review papers \cite{sandilya2017compressed,ye2019compressed}.
In this paper, we focus on tight frame-based MRI image reconstruction methods.

\begin{figure*}[htbp!]
\setlength{\tabcolsep}{4.0pt}\small
\begin{tabular}{c}
\includegraphics[width=16.0cm,height=8.0cm]{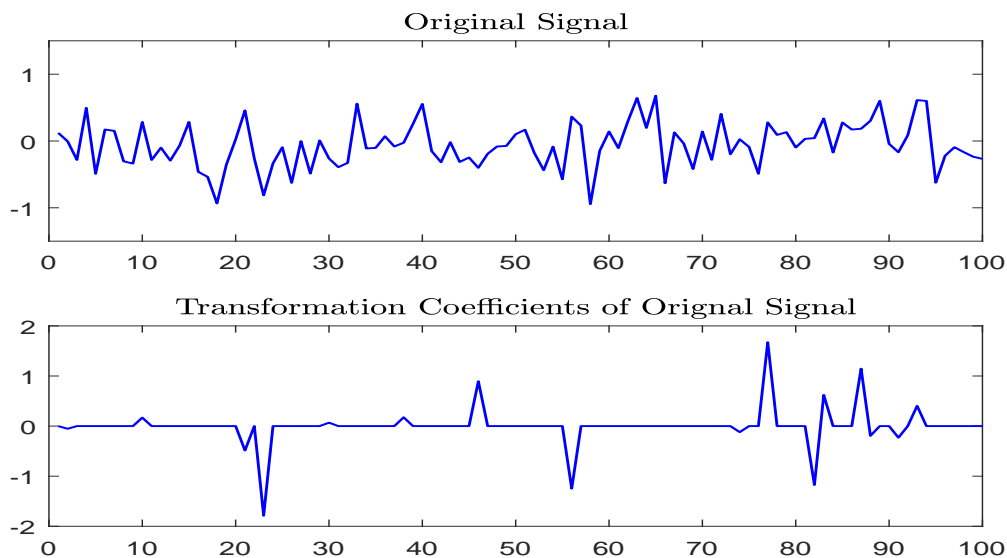}
\end{tabular}
\centering
\caption{\label{figure.Original_signal_and_tansformation_coefficents} Original signal and its transformation coefficients. Upper: the original signal, which is density (not sparse); Down: the coefficients under tight frame transformation, which is sparse. }
\vspace{-0.1cm}
\end{figure*}

\begin{figure*}[htbp!]
\setlength{\tabcolsep}{4.0pt}\small
\begin{tabular}{c}
\includegraphics[width=16.0cm,height=8.0cm]{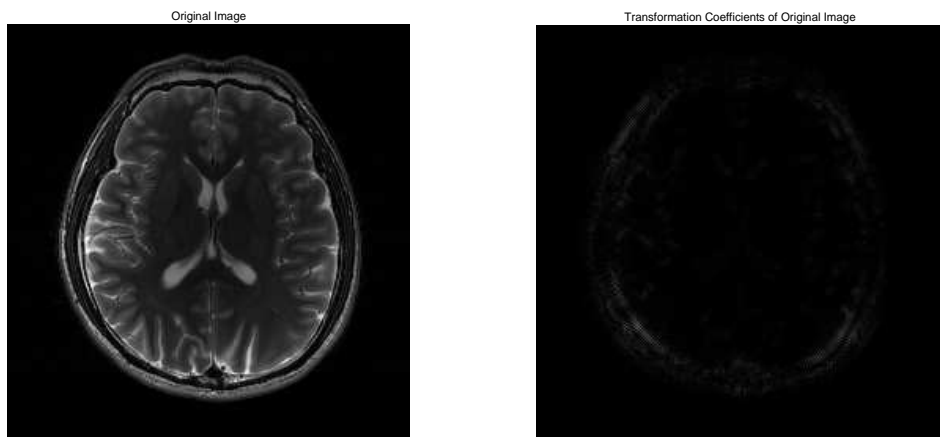}
\end{tabular}
\centering
\caption{\label{figure.MRI_and_its_transformation} Brain MRI and its transformation coefficients. Left: the original Brain MRI, which is density (not sparse); Right: the coefficients under tight frame transformation, which is sparse. Here the black and white colors denote the pixel value $0$ and $1$, respectively.}
\vspace{-0.1cm}
\end{figure*}

\subsection{Contributions}\label{s1.3}
\noindent

In this paper, we  introduce the unconstrained  $\ell_{1}-\alpha\ell_{2}$-analysis  minimization:
\begin{equation}\label{VectorL1-alphaL2-ASSO}
\min_{\bm{x}\in\mathbb{R}^n}~\lambda(\|\bm{ D}^{\top}\bm{x}\|_{1}-\alpha\|\bm{ D}^{\top}\bm{x}\|_{2})
+\frac{1}{2}\|\bm{Ax}-\bm{b}\|_{2}^{2}
\end{equation}
for some constant $0<\alpha\leq 1$, where $\lambda$ is a regularization parameter. Denote \eqref{VectorL1-alphaL2-ASSO} as Analysis $\ell_{1}-\alpha\ell_{2}$-Shrinkage and Selector Operator ($\ell_{1}-\alpha\ell_{2}$-ASSO).
It is rooted to the sparse signal under tight frame and
the constrained and unconstrained  $\|\bm{x}\|_{1}-\alpha\|\bm{ x}\|_{2}$ minimizations, which has recently attracted a lot of attention. The constrained $\|\bm{x}\|_{1}-\alpha\|\bm{x}\|_{2}$ minimization   \cite{li2020minimization,ge2021dantzig,liu2017further,lou2018fast,lou2015computing,yin2015minimization} is
\begin{equation}\label{VectorL1-alphaL2}
\min_{\bm{x}\in\mathbb{R}^n}~\|\bm{x}\|_{1}-\alpha\|\bm{x}\|_{2}
\quad \text{subject \ to} \quad \bm{b}-\bm{A}\bm{x}\in\mathcal{B}.
\end{equation}
And they have been
introduced and studied different  conditions  based on  RIP  for the recovery of $\bm{x}$.
The unconstrained $\|\bm{x}\|_{1}-\alpha\|\bm{x}\|_{2}$ minimization
\cite{liu2017further,lou2018fast,lou2015computing,yin2015minimization,geng2020Unconstrained} is
\begin{equation}\label{VectorL1-L2-SSO}
\min_{\bm{x}\in\mathbb{R}^n}~\lambda(\|\bm{x}\|_{1}-\|\bm{x}\|_{2})+\frac{1}{2}\|\bm{A}\bm{x}-\bm{b}\|_2^2.
\end{equation}
It is a key bridge for finding the solution of the constrained $\|\bm{x}\|_{1}-\alpha\|\bm{x}\|_{2}$ minimization \eqref{systemequationsnoise}.
There is an effective algorithm based on   the different of convex algorithm (DCA) to solve \eqref{VectorL1-L2-SSO}, see
 \cite{lou2015computing,yin2015minimization}. Numerical examples in \cite{li2020minimization,ge2021dantzig,lou2015computing,yin2015minimization}  demonstrate that the $\ell_{1}-\alpha\ell_{2}$ minimization consistently outperforms the $\ell_{1}$ minimization and the $\ell_{p}$ minimization in \cite{lai2013improved} when the measurement matrix $\bm{A}$ is highly coherent.

Motivated by the smoothing and decomposition
transformations in \cite{tan2014smoothing}, the $\ell_{1}-\alpha\ell_{2}$-ASSO  is written as a  general nonsmooth convex optimization problem:
\begin{equation}\label{VectorL1-alphaL2-RASSO}
\min_{\bm{x}\in\mathbb{R}^n}~\lambda(\|\bm{z}\|_{1}-\alpha\|\bm{z}\|_{2})
+\frac{1}{2}\|\bm{Ax}-\bm{b}\|_{2}^{2}+\frac{\rho}{2}\|\bm{D}^{\top}\bm{x}-\bm{z}\|_{2}^{2}.
\end{equation}
We refer to it as Relaxed Analysis $\ell_{1}-\alpha\ell_{2}$-Shrinkage and Selector Operator ($\ell_{1}-\alpha\ell_{2}$-RASSO).

The main  contributions are as follows:
\begin{enumerate}
\item[(i)] We develop a new elementary technique for tight frames (Propositions \ref{prop.DROC} and \ref{NonsparseROC}) and
 show sufficient conditions based on  $\bm{D}$-RIP and $\bm{D}$-ROC
for the recovery of $\bm{x}$ from \eqref{systemequationsnoise} via the  $\ell_{1}-\alpha\ell_{2}$-ASSO \eqref{VectorL1-alphaL2-ASSO}, see to Theorem \ref{StableRecoveryviaVectorL1-alphaL2-ASSO}.

\item[(ii)] We show sufficient conditions based on  $\bm{D}$-RIP frame
for the recovery of the signal $\bm{x}$ from \eqref{systemequationsnoise} via the  $\ell_{1}-\alpha\ell_{2}$-RASSO \eqref{VectorL1-alphaL2-RASSO}, see to Theorem \ref{StableRecoveryviaVectorL1-alphaL2-RASSO}.

\item[(iii)]In order to find the solution of  \eqref{VectorL1-alphaL2-ASSO},
we proposed  an algorithm based on ADMM,  which offers a noticeable boost to some sparse reconstruction algorithms for tight frames.  Numerical examples based the effective algorithm is presented for both sparse signals and MRI.
The recovery performance of the propose method  is highly comparable (often better) to existing state-of-the-art methods.
\end{enumerate}

\subsection{Notations and Organization}\label{s1.4}
\noindent

Throughout the paper, we use the following basic notations.
 For any positive integer $n$, let $[[1,n]]$ be the set $\{1,\ldots,n\}$.
 For the index set $S\subseteq [[1,n]]$, let $|S|$ be  the number of entries in $S$,  $\bm{x}_S$ be the vector equal to $\bm{x}$ on $S$ and to zero on $S^c$, and $\bm{D}_{S}$ be $\bm{D}$ with all but the columns indexed by $S$ set to zero vector.
  For $\bm{x}\in\mathbb{R}^n$,  denote
$\bm{x}_{\max(s)}$ as the vector $\bm{x}$ with all but the largest $s$ entries in absolute value set to zero,
and $\bm{x}_{-\max(s)}=\bm{x}-\bm{x}_{\max(s)}$. Let $\|\bm{D}^{\top}\bm{x}\|_{\alpha,1-2}$ be
$\|\bm{D}^{\top}\bm{x}\|_1-\alpha\|\bm{D}^{\top}\bm{x}\|_2$.
Especially,
when $\alpha=1$, denote $\|\bm{D}^{\top}\bm{x}\|_{\alpha,1-2}$ with $\|\bm{D}^{\top}\bm{x}\|_{1-2}$.
 And we denote $n\times n$ identity matrix by $\bm{I}_{n}$,  zeros matrix by $\bm{ O}$, and the transpose of matrix
 $\bm{A}$ by $\bm{A}^{\top}$. Use the phrase ``$s$-sparse vector" to refer to vectors of sparsity at most $s$. We use boldfaced letter to denote matrix or vector. We denote the standard inner product and Euclidean norm $\langle \cdot,  \cdot\rangle$ and $\|\cdot\|_{2}$, respectively.

The rest of the paper is organized as follows. In Section \ref{adds2}, we describe some preliminaries about the tight frame and the unconstraint $\ell_{1}-\alpha\ell_{2}$ analysis models.  In Section \ref{s3}, we give the main conclusions and show their proof.  Then we design an algorithm based on projected PISTA to solve the proposed model \eqref{VectorL1-alphaL2-ASSO} in Section \ref{s4}.  And we also apply our algorithm to solve the proposed model for signal and MRI reconstruction in Section \ref{s5}. Conclusions are given in Section \ref{s6}.

\section{Preliminaries}\label{adds2}
\noindent

In this section, we first recall some significant lemmas in order to analyse 
the  $\ell_{1}-\alpha\ell_{2}$-ASSO \eqref{VectorL1-alphaL2-ASSO} and  $\ell_{1}-\alpha\ell_{2}$-RASSO \eqref{VectorL1-alphaL2-RASSO}.

In view of the above-mentioned facts in Section \ref{s1}, the $\bm{D}$-RIP plays  a key role for the
stable recovery of signal via the analysis approach. The following lemma show some basic properties in  \cite{lin2012new}  for the $\bm{D}$-RIP with $\delta_{s}$ and $\theta_{s,t}$.
\begin{lemma}\label{Basicproperty}
\begin{enumerate}
\item[(i)]For any $s_{1}\leq s_{2}\leq d$, we have $\delta_{s_{1}}\leq \delta_{s_{2}}$.
\item[(ii)]For any positive integers $s$ and $t_{1}\leq t_{2}$, we have $\theta_{s,t_{2}}\leq\sqrt{\frac{t_{2}}{t_{1}}}\theta_{s,t_{1}}$.
\item[(iii)]$\theta_{s_{1},t_{1}}\leq\theta_{s_{2},t_{2}}$ if  $s_{1}\leq s_{2}$ and  $t_{1}\leq t_{2}$ with $s_{2}+t_{2}\leq d$.
\item[(iv)]For all nonnegative integers $s,t\leq d$,  we have $\theta_{s,t}\leq \delta_{s+t}$.
\end{enumerate}
\end{lemma}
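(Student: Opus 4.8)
The four assertions all follow directly from Definitions \ref{def.DRIP} and \ref{def.DROC}, together with the fact that $\bm D$ is a (Parseval) tight frame, i.e. $\bm D\bm D^{\top}=\bm I_{n}$, which gives $\|\bm D\bm u\|_{2}\leq\|\bm u\|_{2}$ for every $\bm u\in\mathbb{R}^{d}$. Items (i) and (iii) are pure monotonicity statements: when $s_{1}\leq s_{2}$ every $s_{1}$-sparse vector is $s_{2}$-sparse, so the family of inequalities defining $\delta_{s_{1}}$ (resp. $\theta_{s_{1},t_{1}}$) is contained in the family defining $\delta_{s_{2}}$ (resp. $\theta_{s_{2},t_{2}}$); since each constant is by definition the smallest number making the corresponding family hold, enlarging the family can only enlarge the constant. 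I would simply spell this out in one line for $\delta$ and one line for $\theta$ (using $s_{2}+t_{2}\leq d$ so that the $\theta_{s_{2},t_{2}}$ constraints are nonvacuous).

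For item (ii) I would fix an $s$-sparse $\bm u$ and a $t_{2}$-sparse $\bm v$ and consider the linear functional $L(\bm w):=\langle\bm{ADu},\bm{ADw}\rangle-\langle\bm{Du},\bm{Dw}\rangle$, which by definition of $\theta_{s,t_{1}}$ satisfies $|L(\bm w)|\leq\theta_{s,t_{1}}\|\bm u\|_{2}\|\bm w\|_{2}$ whenever $\bm w$ is $t_{1}$-sparse. Let $T\supseteq\mathrm{supp}(\bm v)$ with $|T|=t_{2}$ and average over all subsets $T'\subseteq T$ with $|T'|=t_{1}$: each coordinate of $T$ lies in $\binom{t_{2}-1}{t_{1}-1}$ such subsets, so $\bm v=\binom{t_{2}-1}{t_{1}-1}^{-1}\sum_{T'}\bm v_{T'}$ and hence $|L(\bm v)|\leq\binom{t_{2}-1}{t_{1}-1}^{-1}\theta_{s,t_{1}}\|\bm u\|_{2}\sum_{T'}\|\bm v_{T'}\|_{2}$. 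Applying Cauchy--Schwarz over the $\binom{t_{2}}{t_{1}}$ terms, using $\sum_{T'}\|\bm v_{T'}\|_{2}^{2}=\binom{t_{2}-1}{t_{1}-1}\|\bm v\|_{2}^{2}$ and the identity $\binom{t_{2}}{t_{1}}\big/\binom{t_{2}-1}{t_{1}-1}=t_{2}/t_{1}$, yields $|L(\bm v)|\leq\sqrt{t_{2}/t_{1}}\,\theta_{s,t_{1}}\|\bm u\|_{2}\|\bm v\|_{2}$, which is (ii). A cruder decomposition of $T$ into $\lceil t_{2}/t_{1}\rceil$ disjoint blocks of size $\leq t_{1}$ only gives the factor $\sqrt{\lceil t_{2}/t_{1}\rceil}$, so this averaging over all $t_{1}$-subsets is the step that makes the constant sharp, and I expect it to be the one requiring the most care.

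For item (iv) I would use polarization. If $\bm u$ is $s$-sparse and $\bm v$ is $t$-sparse, then $\bm u\pm\bm v$ are $(s+t)$-sparse and, writing $\Delta(\bm w):=\|\bm{ADw}\|_{2}^{2}-\|\bm{Dw}\|_{2}^{2}$,
\[
\langle\bm{ADu},\bm{ADv}\rangle-\langle\bm{Du},\bm{Dv}\rangle=\tfrac14\big(\Delta(\bm u+\bm v)-\Delta(\bm u-\bm v)\big).
\]
Definition \ref{def.DRIP} gives $|\Delta(\bm w)|\leq\delta_{s+t}\|\bm{Dw}\|_{2}^{2}$ for $(s+t)$-sparse $\bm w$, so the right-hand side is at most $\tfrac{\delta_{s+t}}{4}\big(\|\bm D(\bm u+\bm v)\|_{2}^{2}+\|\bm D(\bm u-\bm v)\|_{2}^{2}\big)=\tfrac{\delta_{s+t}}{2}\big(\|\bm{Du}\|_{2}^{2}+\|\bm{Dv}\|_{2}^{2}\big)$ by the parallelogram law. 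Normalizing (by separate homogeneity in $\bm u$ and $\bm v$) so that $\|\bm{Du}\|_{2}=\|\bm{Dv}\|_{2}$ turns this into $\delta_{s+t}\|\bm{Du}\|_{2}\|\bm{Dv}\|_{2}$, and undoing the normalization then gives $|\langle\bm{ADu},\bm{ADv}\rangle-\langle\bm{Du},\bm{Dv}\rangle|\leq\delta_{s+t}\|\bm{Du}\|_{2}\|\bm{Dv}\|_{2}\leq\delta_{s+t}\|\bm u\|_{2}\|\bm v\|_{2}$, hence $\theta_{s,t}\leq\delta_{s+t}$. The one subtlety here is matching the $\|\bm D\cdot\|_{2}$-normalization built into $\delta_{s+t}$ with the $\|\cdot\|_{2}$-normalization built into $\theta_{s,t}$; this is exactly where the tight-frame bound $\|\bm{Du}\|_{2}\leq\|\bm u\|_{2}$ enters, and apart from it the argument is a routine transcription of the definitions.
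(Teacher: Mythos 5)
Your proof is correct. The paper itself gives no proof of this lemma — it simply quotes these as "basic properties" from the cited reference \cite{lin2012new} — and your arguments (monotonicity of the defining families for (i) and (iii), averaging over all $t_1$-subsets plus Cauchy--Schwarz for the sharp $\sqrt{t_2/t_1}$ factor in (ii), and polarization with the parallelogram law together with the tight-frame bound $\|\bm{Du}\|_2\le\|\bm u\|_2$ for (iv)) are exactly the standard ones used there.
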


Here, we proposed the standard  properties for the dictionary $\bm{D}$.
\begin{proposition}\label{prop.DROC}
Let $\bar{\bm{D}}\in\mathbb{R}^{(d-n)\times d}$ be the orthogonal complement of $\bm{D}\in\mathbb{R}^{n\times d}$, i.e., $\bar{\bm{D}}\bm{D}^{\top}=\bm{0}$, $\bar{\bm{D}}\bar{\bm{D}}^{\top}=\bm{I}_{n-d}$. Then,
\begin{enumerate}
\item[(i)] For any $\bm{v}\in\mathbb{R}^{d}$, $\|\bm{v}\|_{2}^{2}=\|\bm{Dv}\|_{2}^{2}+\|\bar{\bm{D}}\bm{v}\|_{2}^{2}$.
\item[(ii)] For any $s$-sparse signal $\bm{v}\in\mathbb{R}^{d}$, we have
$$
(1-\delta_{s})\|\bm{v}\|_{2}^{2}\leq \|\bm{A}\bm{Dv}\|_2^2+\|\bar{\bm{D}}\bm{v}\|_{2}^{2}\leq (1+\delta_{s})\|\bm{v}\|_{2}^{2}.
$$
\item[(iii)]$\langle \bm{u}, \bm{v}\rangle=\langle \bar{\bm{D}}\bm{u}, \bar{\bm{D}}\bm{v}\rangle
+\langle \bm{D}\bm{u}, \bm{D}\bm{v}\rangle$ for any $\bm{u}\in\mathbb{R}^{d}$.

\item[(iv)] For any $\|\bm{u}\|_{0}\leq s$ and $\|\bm{v}\|_{0}\leq t$, we have
$$
|\langle \bm{A}\bm{Du}, \bm{A}\bm{Dv}\rangle+\langle\bar{\bm{D}}\bm{u}, \bar{\bm{D}}\bm{v}\rangle
-\langle\bm{u}, \bm{v}\rangle|
\leq\theta_{s,t}\|\bm{u}\|_{2}\|\bm{v}\|_{2}.
$$
Moreover, if $\text{\rm supp}(\bm{u})\cap\text{\rm supp}(\bm{v})=\emptyset$, we have
$$
|\langle \bm{A}\bm{Du}, \bm{A}\bm{Dv}\rangle+\langle\bar{\bm{D}}\bm{u}, \bar{\bm{D}}\bm{v}\rangle|
\leq\theta_{s,t}\|\bm{u}\|_{2}\|\bm{v}\|_{2}.
$$
\end{enumerate}
\end{proposition}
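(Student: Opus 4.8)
The plan is to establish the four items of Proposition~\ref{prop.DROC} by exploiting the single structural fact that $\bm{D}\in\mathbb{R}^{n\times d}$ together with its orthogonal complement $\bar{\bm{D}}\in\mathbb{R}^{(d-n)\times d}$ forms (up to the row-normalization $\bm{D}\bm{D}^{\top}=\bm{I}_{n}$, $\bar{\bm{D}}\bar{\bm{D}}^{\top}=\bm{I}_{d-n}$, $\bar{\bm{D}}\bm{D}^{\top}=\bm{0}$) a partition of an orthonormal basis of $\mathbb{R}^{d}$; equivalently, the $d\times d$ matrix obtained by stacking $\bm{D}^{\top}$ and $\bar{\bm{D}}^{\top}$ columnwise is orthogonal, so $\bm{D}^{\top}\bm{D}+\bar{\bm{D}}^{\top}\bar{\bm{D}}=\bm{I}_{d}$. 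I would state and prove this identity first, since (i) and (iii) are immediate consequences: for (iii), $\langle\bm{u},\bm{v}\rangle=\bm{u}^{\top}(\bm{D}^{\top}\bm{D}+\bar{\bm{D}}^{\top}\bar{\bm{D}})\bm{v}=\langle\bm{Du},\bm{Dv}\rangle+\langle\bar{\bm{D}}\bm{u},\bar{\bm{D}}\bm{v}\rangle$, and (i) is the special case $\bm{u}=\bm{v}$. The one point that needs a short argument is why $\bm{D}^{\top}\bm{D}+\bar{\bm{D}}^{\top}\bar{\bm{D}}=\bm{I}_{d}$: the matrix $\bm{P}=\bm{D}^{\top}\bm{D}$ is the orthogonal projection onto $\mathrm{range}(\bm{D}^{\top})$ (because $\bm{P}^{2}=\bm{D}^{\top}(\bm{D}\bm{D}^{\top})\bm{D}=\bm{P}$ and $\bm{P}^{\top}=\bm{P}$), likewise $\bar{\bm{P}}=\bar{\bm{D}}^{\top}\bar{\bm{D}}$ is the projection onto $\mathrm{range}(\bar{\bm{D}}^{\top})$, these two ranges are orthogonal by $\bar{\bm{D}}\bm{D}^{\top}=\bm{0}$ and have dimensions $n$ and $d-n$, hence they are orthogonal complements in $\mathbb{R}^{d}$ and the two projections sum to $\bm{I}_{d}$.

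Next I would derive (ii) and (iv) by feeding (i) and (iii) into the definitions of $\bm{D}$-RIP (Definition~\ref{def.DRIP}) and $\bm{D}$-ROC (Definition~\ref{def.DROC}). For (ii): let $\bm{v}$ be $s$-sparse; the $\bm{D}$-RIP gives $(1-\delta_{s})\|\bm{Dv}\|_{2}^{2}\le\|\bm{ADv}\|_{2}^{2}\le(1+\delta_{s})\|\bm{Dv}\|_{2}^{2}$; add $\|\bar{\bm{D}}\bm{v}\|_{2}^{2}$ to all three sides, bound $(1-\delta_{s})\|\bar{\bm{D}}\bm{v}\|_{2}^{2}\le\|\bar{\bm{D}}\bm{v}\|_{2}^{2}\le(1+\delta_{s})\|\bar{\bm{D}}\bm{v}\|_{2}^{2}$ (using $0\le\delta_{s}$, which holds by definition; if one is worried about $\delta_{s}>1$ the lower bound is only helpful when $\delta_s\le1$, which is the regime of interest), and then use part (i) to replace $\|\bm{Dv}\|_{2}^{2}+\|\bar{\bm{D}}\bm{v}\|_{2}^{2}$ by $\|\bm{v}\|_{2}^{2}$, obtaining $(1-\delta_{s})\|\bm{v}\|_{2}^{2}\le\|\bm{ADv}\|_{2}^{2}+\|\bar{\bm{D}}\bm{v}\|_{2}^{2}\le(1+\delta_{s})\|\bm{v}\|_{2}^{2}$. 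For (iv): with $\bm{u}$ $s$-sparse and $\bm{v}$ $t$-sparse, Definition~\ref{def.DROC} gives $|\langle\bm{ADu},\bm{ADv}\rangle-\langle\bm{Du},\bm{Dv}\rangle|\le\theta_{s,t}\|\bm{u}\|_{2}\|\bm{v}\|_{2}$; now rewrite $\langle\bm{Du},\bm{Dv}\rangle=\langle\bm{u},\bm{v}\rangle-\langle\bar{\bm{D}}\bm{u},\bar{\bm{D}}\bm{v}\rangle$ using part (iii), so the quantity inside the absolute value becomes $\langle\bm{ADu},\bm{ADv}\rangle+\langle\bar{\bm{D}}\bm{u},\bar{\bm{D}}\bm{v}\rangle-\langle\bm{u},\bm{v}\rangle$, which yields the first inequality of (iv) verbatim. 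For the ``moreover'' part, when $\mathrm{supp}(\bm{u})\cap\mathrm{supp}(\bm{v})=\emptyset$ we have $\langle\bm{u},\bm{v}\rangle=0$, so the term $-\langle\bm{u},\bm{v}\rangle$ drops out and the bound collapses to $|\langle\bm{ADu},\bm{ADv}\rangle+\langle\bar{\bm{D}}\bm{u},\bar{\bm{D}}\bm{v}\rangle|\le\theta_{s,t}\|\bm{u}\|_{2}\|\bm{v}\|_{2}$.

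I do not expect a genuine obstacle here: the whole proposition is a formal unpacking of the orthogonal-decomposition identity $\bm{D}^{\top}\bm{D}+\bar{\bm{D}}^{\top}\bar{\bm{D}}=\bm{I}_{d}$ against the two definitions. The only place demanding care is making the ``$\bm{P}+\bar{\bm{P}}=\bm{I}_{d}$'' claim rigorous rather than asserting it—one should be explicit that $\bm{D}\bm{D}^{\top}=\bm{I}_{n}$ is exactly the tight-frame normalization being used, that it forces $\bm{D}^{\top}\bm{D}$ to be idempotent and symmetric hence an orthogonal projector of rank $n$, and dually for $\bar{\bm{D}}$; orthogonality of the two ranges plus the dimension count $n+(d-n)=d$ then closes it. A secondary subtlety worth a sentence is the sign convention on $\delta_{s}$ used in (ii): the argument as written wants $\delta_s\ge 0$ for the added $\|\bar{\bm{D}}\bm{v}\|_2^2$ term, which is automatic since $\delta_{s}$ is a nonnegative constant by Definition~\ref{def.DRIP}. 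Everything else is one-line algebra, so after setting up the projector identity the four parts follow in the order (i), (iii), (ii), (iv).
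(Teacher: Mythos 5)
Your proposal is correct and follows essentially the same route as the paper: both rest on the orthogonal-decomposition identity $\bm{D}^{\top}\bm{D}+\bar{\bm{D}}^{\top}\bar{\bm{D}}=\bm{I}_{d}$, prove (ii) by adding $\|\bar{\bm{D}}\bm{v}\|_{2}^{2}$ to the $\bm{D}$-RIP bounds, and get (iv) by rewriting $\langle\bm{Du},\bm{Dv}\rangle$ via (iii) inside the $\bm{D}$-ROC definition. The only (cosmetic) difference is that the paper derives (iii) from (i) by polarization, whereas you read it off directly from the projector identity — and your explicit justification that $\bm{D}^{\top}\bm{D}$ and $\bar{\bm{D}}^{\top}\bar{\bm{D}}$ are complementary orthogonal projectors is actually more careful than the paper's one-line appeal to the stacked matrix.
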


\begin{proof}
Despite the results $(i)$ and $(ii)$ in Proposition \ref{prop.DROC} are clear,  we here give their proofs.
As far as we know,   the results (iii) and (iv) are first proved.

$(i)$
For any $v\in\mathbb{R}^{d}$, there is
\begin{align*}
\|\bm{Dv}\|_{2}^{2}+\|\bar{\bm{D}}\bm{v}\|_{2}^{2}
&=\langle \bm{D}^{\top} \bm{D}\bm{v}, \bm{v}\rangle +\langle \bar{\bm{D}}^{\top}\bar{\bm{D}}\bm{v},\bm{v}\rangle
\nonumber\\
&=\bm{v}^{\top}\begin{bmatrix}
     \bm{D}^{\top}  &  \bar{\bm{D}}^{\top} \\
  \end{bmatrix}
\begin{bmatrix}
     \bm{D} \\
    \bar{\bm{D}} \\
  \end{bmatrix}\bm{v}
=\|\bm{v}\|_{2}^{2},
\end{align*}
where the last equality comes from the definition of tight frame $\bm{D}\bm{D}^{\top}=\bm{I}_{n}$ and the definition of the orthogonal complement of frame $\bar{\bm{D}}\bar{\bm{D}}^{\top}=\bm{I}_{n-d}$.

$(ii)$ 
By the definition of $\bm{D}$-RIP, we have
\begin{align}\label{prop.DROC.eq6}
\|\bm{A}\bm{Dv}\|_2^2+\|\bar{\bm{D}}\bm{v}\|_{2}^{2}
&\leq (1+\delta_{s})\|\bm{Dv}\|_{2}^{2}+\|\bar{\bm{D}}\bm{v}\|_{2}^{2}
\leq (1+\delta_{s})\left(\|\bm{Dv}\|_{2}^{2}+\|\bar{\bm{D}}\bm{v}\|_{2}^{2}\right)\nonumber\\
&=(1+\delta_{s})\|\bm{v}\|_{2}^{2},
\end{align}
where the equality comes from the item $(i)$. Similarly, we can show the lower bound
\begin{eqnarray}\label{prop.DROC.eq7}
\|\bm{A}\bm{Dv}\|_{2}^{2}+\|\bar{\bm{D}}\bm{v}\|_{2}^{2}
\geq(1-\delta_{s})\|\bm{v}\|_{2}^{2}.
\end{eqnarray}
The combination of the upper bound \eqref{prop.DROC.eq6} and the lower bound \eqref{prop.DROC.eq7} gives the item (ii).

Next, we show our new conclusions items (iii) and (iv).

$(iii)$ 
According to the parallelogram identity, we have
\begin{align*}
\langle \bar{\bm{D}}\bm{u}, \bar{\bm{D}}\bm{v}\rangle
&=\frac{1}{4}\left(\|\bar{\bm{D}}(\bm{u}+\bm{v})\|_{2}^{2}
-\|\bar{\bm{D}}(\bm{u}-\bm{v})\|_{2}^{2}\right),\nonumber\\
\langle \bm{D}\bm{u}, \bm{D}\bm{v}\rangle
&=\frac{1}{4}\left(\|\bm{D}(\bm{u}+\bm{v})\|_{2}^{2}
-\|\bm{D}(\bm{u}-\bm{v})\|_{2}^{2}\right).
\end{align*}
Taking sum over both side of above two equalities, one has
\begin{align}\label{prop.DROC.eq5}
&\langle \bar{\bm{D}}\bm{u}, \bar{\bm{D}}\bm{v}\rangle
+\langle \bm{D}\bm{u}, \bm{D}\bm{v}\rangle\nonumber\\
&=\frac{1}{4}\left(\left(\|\bar{\bm{D}}(\bm{u}+\bm{v})\|_{2}^{2}+\|\bm{D}(\bm{u}+\bm{v})\|_{2}^{2}\right)
-\left(\|\bar{\bm{D}}(\bm{u}-\bm{v})\|_{2}^{2}+\|\bm{D}(\bm{u}-\bm{v})\|_{2}^{2}\right)\right)\nonumber\\
&=\frac{1}{4}\left(\|\bm{u}+\bm{v}\|_{2}^{2}-\|\bm{u}-\bm{v}\|_{2}^{2}\right),
\end{align}
where the second equality holds because of the item $(i)$.  Last, by the above \eqref{prop.DROC.eq5} and the following parallelogram identity
\begin{equation*}
\langle \bm{u}, \bm{v}\rangle
=\frac{1}{4}\left(\|(\bm{u}+\bm{v})\|_{2}^{2}
-\|(\bm{u}-\bm{v})\|_{2}^{2}\right),
\end{equation*}
we get
\begin{equation*}
\langle \bm{u}, \bm{v}\rangle=\langle \bar{\bm{D}}\bm{u}, \bar{\bm{D}}\bm{v}\rangle
+\langle \bm{D}\bm{u}, \bm{D}\bm{v}\rangle.
\end{equation*}

$(iv)$ 
It follows from the item $(ii)$ and the definition of  $\bm{D}$-ROC that
\begin{eqnarray*}
&&|\langle \bm{A}\bm{Du}, \bm{A}\bm{Dv}\rangle+\langle\bar{\bm{D}}\bm{u}, \bar{\bm{D}}\bm{v}\rangle
-\langle\bm{u}, \bm{v}\rangle|
=|\langle \bm{A}\bm{Du}, \bm{A}\bm{Dv}\rangle-\langle\bm{D}\bm{u}, \bm{D}\bm{v}\rangle|nonumber\\
&&\leq\theta_{s,t}\|\bm{u}\|_{2}\|\bm{v}\|_{2},
\end{eqnarray*}
which is our desired  conclusion.
\end{proof}

Next, the following inequality is a modified cone
constraint inequality for the unconstrained $\|\bm{D}^{\top}\bm{x}\|_{1}-\alpha\|\bm{D}^{\top}\bm{x}\|_{2}$ minimization \eqref{VectorL1-alphaL2-ASSO}. Its proof is similar as that of \cite[Lemma 2.3]{geng2020Unconstrained} and we omit its proof.

\begin{lemma}\label{ConeTubeconstraint-L1-L2-ASSO}
Let  $\bm{b}=\bm{Ax}+\bm{e}$ with
$\|\bm{e}\|_{2}\leq\lambda$, $T = \text{\rm supp}(\bm{D}^{\top}\bm {x}_{\max(t)})$, and $\hat{\bm{x}}$ be a
minimization solution of \eqref{VectorL1-alphaL2-ASSO}. Then
\begin{align}\label{ConeTubeconstraint-L1-L2-ASSO.eq1}
\|\bm{Ah}\|_{2}^{2}+2\lambda\|\bm{D}_{T^c}^{\top}\bm{h}\|_{1}\leq2\lambda(\|\bm{D}_{T}^{\top}\bm{h}\|_1
+\alpha\|\bm{D}^{\top}\bm{h}\|_2+2\|\bm{D}_{T^c}^{\top}\bm{x}\|_{1}
+\|\bm{Ah}\|_{2}),
\end{align}
where ${\bm h}=\hat{{\bm x}}-{\bm x}$.  And moreover,
\begin{align*}
\|\bm{Ah}\|_2^2+2\lambda(\|{\bm D}_{T^c}^{\top}{\bm h}\|_1-\alpha\|{\bm D}_{T^c}^{\top}{\bm h}\|_2)
\leq2\lambda(\|{\bm D}_{T}^{\top}{\bm h}\|_1+2\|{\bm D}_{T^c}^{\top}{\bm x}\|_1
+\alpha\|{\bm D}_{T}^{\top}{\bm h}\|_2+\|\bm {Ah}\|_2),
\end{align*}
\begin{align*}
\|{\bm D}_{T^c}^{\top}{\bm h}\|_1\leq\|{\bm D}_{T}^{\top}{\bm h}\|_1+2\|{\bm D}_{T^c}^{\top}{\bm x}\|_1
+\alpha\|{\bm D}^{\top}{\bm h}\|_2+\|\bm {Ah}\|_2,
\end{align*}
and
\begin{align*}
\|{\bm Ah}\|_2^2\leq2\lambda(\|{\bm D}_{T}^{\top}{\bm h}\|_1+2\|{\bm D}_{T^c}^{\top}{\bm x}\|_1
+\alpha\|{\bm D}^{\top}{\bm h}\|_2+\|\bm {Ah}\|_2).
\end{align*}
\end{lemma}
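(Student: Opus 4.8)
The plan is to establish the first, master inequality \eqref{ConeTubeconstraint-L1-L2-ASSO.eq1} and then extract the three consequences from it by elementary manipulations, since the paper explicitly says the proof mirrors that of \cite[Lemma 2.3]{geng2020Unconstrained}. First I would exploit optimality of $\hat{\bm x}$ for the functional $F(\bm x)=\lambda(\|\bm D^\top\bm x\|_1-\alpha\|\bm D^\top\bm x\|_2)+\tfrac12\|\bm A\bm x-\bm b\|_2^2$: since $F(\hat{\bm x})\le F(\bm x)$ and $\bm b=\bm A\bm x+\bm e$, expanding $\tfrac12\|\bm A\hat{\bm x}-\bm b\|_2^2-\tfrac12\|\bm A\bm x-\bm b\|_2^2=\tfrac12\|\bm A\bm h\|_2^2-\langle\bm A\bm h,\bm e\rangle$ with $\bm h=\hat{\bm x}-\bm x$ gives
\begin{align*}
\tfrac12\|\bm A\bm h\|_2^2\le\lambda\bigl(\|\bm D^\top\bm x\|_1-\alpha\|\bm D^\top\bm x\|_2-\|\bm D^\top\hat{\bm x}\|_1+\alpha\|\bm D^\top\hat{\bm x}\|_2\bigr)+\langle\bm A\bm h,\bm e\rangle.
\end{align*}
Then Cauchy--Schwarz and $\|\bm e\|_2\le\lambda$ bound $\langle\bm A\bm h,\bm e\rangle\le\lambda\|\bm A\bm h\|_2$.

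Next I would handle the $\ell_1$ part with the usual support splitting at $T=\mathrm{supp}(\bm D^\top\bm x_{\max(t)})$. Writing $\|\bm D^\top\hat{\bm x}\|_1=\|\bm D^\top\bm x+\bm D^\top\bm h\|_1$ and using $\|\bm D_T^\top(\bm x+\bm h)\|_1\ge\|\bm D_T^\top\bm x\|_1-\|\bm D_T^\top\bm h\|_1$ together with $\|\bm D_{T^c}^\top(\bm x+\bm h)\|_1\ge\|\bm D_{T^c}^\top\bm h\|_1-\|\bm D_{T^c}^\top\bm x\|_1$, one gets
\begin{align*}
\|\bm D^\top\bm x\|_1-\|\bm D^\top\hat{\bm x}\|_1\le\|\bm D_T^\top\bm h\|_1-\|\bm D_{T^c}^\top\bm h\|_1+2\|\bm D_{T^c}^\top\bm x\|_1.
\end{align*}
For the $\ell_2$ part I would simply drop $-\alpha\|\bm D^\top\bm x\|_2\le 0$ is the wrong sign, so instead use $\alpha\|\bm D^\top\hat{\bm x}\|_2-\alpha\|\bm D^\top\bm x\|_2\le\alpha\|\bm D^\top\bm h\|_2$ by the triangle inequality. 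Substituting both bounds, multiplying through by $2$, and rearranging the $\|\bm D_{T^c}^\top\bm h\|_1$ term to the left yields exactly \eqref{ConeTubeconstraint-L1-L2-ASSO.eq1}.

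For the remaining three inequalities: the second follows from the first by writing $\alpha\|\bm D^\top\bm h\|_2\le\alpha\|\bm D_T^\top\bm h\|_2+\alpha\|\bm D_{T^c}^\top\bm h\|_2$ (triangle inequality on disjoint supports, in fact equality in the squared sense but the triangle bound suffices) and moving the $-\alpha\|\bm D_{T^c}^\top\bm h\|_2$ contribution to the left-hand side. The third inequality comes from the first by discarding the nonnegative term $\|\bm A\bm h\|_2^2$ on the left and dividing by $2\lambda$. The fourth comes from the first by discarding the nonnegative term $2\lambda\|\bm D_{T^c}^\top\bm h\|_1$ on the left.

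I expect the main obstacle to be purely bookkeeping: keeping the signs straight when the concave $-\alpha\|\cdot\|_2$ term is moved across the inequality (it is the ``wrong-signed'' piece that prevents the functional from being convex in the naive sense and forces the $+\alpha\|\bm D^\top\bm h\|_2$ slack term to appear on the right), and making sure the support decomposition of $\bm D^\top$ is applied consistently since $\bm D^\top\bm x$ need not be exactly $t$-sparse — only $\bm D^\top\bm x_{\max(t)}$ is, so the tail $\|\bm D_{T^c}^\top\bm x\|_1$ genuinely appears. There is no deep analytic content here; the $\bm D$-RIP/$\bm D$-ROC machinery of Proposition \ref{prop.DROC} is not needed for this lemma and enters only later when \eqref{ConeTubeconstraint-L1-L2-ASSO.eq1} is combined with cone estimates to prove Theorem \ref{StableRecoveryviaVectorL1-alphaL2-ASSO}.
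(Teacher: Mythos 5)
Your proof is correct and follows exactly the route the paper intends: the paper omits the proof of this lemma, pointing to \cite[Lemma 2.3]{geng2020Unconstrained}, and your argument (optimality of $\hat{\bm x}$, expansion of the quadratic term, Cauchy--Schwarz with $\|\bm e\|_2\le\lambda$, support splitting at $T$ for the $\ell_1$ part, triangle inequality for the $\alpha\ell_2$ part) is precisely that standard argument, matching the structure of the paper's own proof of the analogous Lemma \ref{ConeTubeconstraint-L1-L2-RASSO} minus the relaxation variable. The derivation of the three corollary inequalities from \eqref{ConeTubeconstraint-L1-L2-ASSO.eq1} is also correct.
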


For the $\ell_{1}-\alpha\ell_{2}$-RASSO  \eqref{VectorL1-alphaL2-RASSO}, there is a similar cone
constraint inequality.
\begin{lemma}\label{ConeTubeconstraint-L1-L2-RASSO}
Let  ${\bm b}={\bm {Ax}}+{\bm e}$  with
$\|\bm e\|_2\leq\lambda$, and  $T = \text{\rm supp}({\bm D}^{\top}{\bm x}_{\max(s)})$. Then the
minimization solution $\hat{\bm x}$ of \eqref{VectorL1-alphaL2-RASSO} satisfies
\begin{align}\label{ConeTubeconstraint-L1-L2-RASSO.eq1}
\|\bm {Ah}\|_2^2+2\lambda\|{\bm D}_{T^c}^{\top}{\bm h}\|_1\leq2\lambda\bigg(\|{\bm D}_{T}^{\top}{\bm h}\|_1+\alpha\|{\bm D}^{\top}{\bm h}\|_2+2\|{\bm D}_{T^c}^{\top}{\bm x}\|_1
+\|\bm {Ah}\|_2+\frac{(\alpha+1)^2d}{2}\frac{\lambda}{\rho}\bigg),
\end{align}
where ${\bm h}=\hat{{\bm x}}-{\bm x}$.
\end{lemma}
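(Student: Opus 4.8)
The plan is to follow the same route used for Lemma \ref{ConeTubeconstraint-L1-L2-ASSO} (equivalently \cite[Lemma 2.3]{geng2020Unconstrained}), the only genuinely new ingredient being the splitting variable $\bm{z}$ and the quadratic penalty $\tfrac{\rho}{2}\|\bm{D}^{\top}\bm{x}-\bm{z}\|_2^2$. Let $(\hat{\bm{x}},\hat{\bm{z}})$ be a minimizer of \eqref{VectorL1-alphaL2-RASSO}, and set $\bm{h}=\hat{\bm{x}}-\bm{x}$ and $\bm{w}=\bm{D}^{\top}\hat{\bm{x}}-\hat{\bm{z}}\in\mathbb{R}^{d}$. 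First I would compare the objective value at $(\hat{\bm{x}},\hat{\bm{z}})$ with that at the feasible pair $(\bm{x},\bm{D}^{\top}\bm{x})$, whose penalty term vanishes; using $\bm{b}=\bm{Ax}+\bm{e}$ and expanding $\|\bm{A}\hat{\bm{x}}-\bm{b}\|_2^2=\|\bm{Ah}-\bm{e}\|_2^2$, the common term $\tfrac12\|\bm{e}\|_2^2$ cancels and one is left with
\[
\lambda(\|\hat{\bm{z}}\|_1-\alpha\|\hat{\bm{z}}\|_2)+\tfrac12\|\bm{Ah}\|_2^2-\langle\bm{Ah},\bm{e}\rangle+\tfrac{\rho}{2}\|\bm{w}\|_2^2\le\lambda(\|\bm{D}^{\top}\bm{x}\|_1-\alpha\|\bm{D}^{\top}\bm{x}\|_2).
\]

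Next I would eliminate $\bm{z}$. The reverse triangle inequality gives $\|\hat{\bm{z}}\|_1\ge\|\bm{D}^{\top}\hat{\bm{x}}\|_1-\|\bm{w}\|_1$ and $\|\hat{\bm{z}}\|_2\le\|\bm{D}^{\top}\hat{\bm{x}}\|_2+\|\bm{w}\|_2$, hence $\|\hat{\bm{z}}\|_1-\alpha\|\hat{\bm{z}}\|_2\ge\|\bm{D}^{\top}\hat{\bm{x}}\|_{\alpha,1-2}-(\|\bm{w}\|_1+\alpha\|\bm{w}\|_2)$. Since $\bm{w}\in\mathbb{R}^{d}$ we have $\|\bm{w}\|_1\le\sqrt{d}\,\|\bm{w}\|_2$, so $\lambda(\|\bm{w}\|_1+\alpha\|\bm{w}\|_2)\le\lambda(1+\alpha)\sqrt{d}\,\|\bm{w}\|_2$, which by Young's inequality (with the weight chosen so the quadratic term matches the penalty) is at most $\tfrac{\rho}{2}\|\bm{w}\|_2^2+\tfrac{(1+\alpha)^2 d}{2}\tfrac{\lambda^2}{\rho}$. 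Substituting this bound and cancelling the two $\tfrac{\rho}{2}\|\bm{w}\|_2^2$ terms reduces the inequality to exactly the ASSO estimate
\[
\lambda\|\bm{D}^{\top}\hat{\bm{x}}\|_{\alpha,1-2}+\tfrac12\|\bm{Ah}\|_2^2-\langle\bm{Ah},\bm{e}\rangle\le\lambda\|\bm{D}^{\top}\bm{x}\|_{\alpha,1-2}+\tfrac{(1+\alpha)^2 d}{2}\tfrac{\lambda^2}{\rho},
\]
i.e.\ the same relation as in Lemma \ref{ConeTubeconstraint-L1-L2-ASSO} with a single extra additive constant on the right.

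From this point the argument is the one already used for Lemma \ref{ConeTubeconstraint-L1-L2-ASSO}: split $\bm{D}^{\top}\hat{\bm{x}}=\bm{D}^{\top}\bm{x}+\bm{D}^{\top}\bm{h}$ across $T$ and $T^{c}$ and apply triangle inequalities together with $\|\bm{D}^{\top}\hat{\bm{x}}\|_2\le\|\bm{D}^{\top}\bm{x}\|_2+\|\bm{D}^{\top}\bm{h}\|_2$ and $\|\bm{D}^{\top}\bm{x}\|_1=\|\bm{D}_{T}^{\top}\bm{x}\|_1+\|\bm{D}_{T^c}^{\top}\bm{x}\|_1$ to get a lower bound for $\|\bm{D}^{\top}\hat{\bm{x}}\|_{\alpha,1-2}-\|\bm{D}^{\top}\bm{x}\|_{\alpha,1-2}$ in terms of $\|\bm{D}_{T}^{\top}\bm{h}\|_1$, $\|\bm{D}_{T^c}^{\top}\bm{h}\|_1$, $\|\bm{D}^{\top}\bm{h}\|_2$ and $\|\bm{D}_{T^c}^{\top}\bm{x}\|_1$; bound $\langle\bm{Ah},\bm{e}\rangle\le\|\bm{Ah}\|_2\|\bm{e}\|_2\le\lambda\|\bm{Ah}\|_2$ via Cauchy--Schwarz and $\|\bm{e}\|_2\le\lambda$; then rearrange and multiply through by $2$ to reach \eqref{ConeTubeconstraint-L1-L2-RASSO.eq1}, the residual constant appearing as $2\lambda\cdot\tfrac{(\alpha+1)^2 d}{2}\tfrac{\lambda}{\rho}$ inside the parentheses. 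I expect the only real obstacle to be the second step: picking the Young's-inequality weights so the penalty $\tfrac{\rho}{2}\|\bm{w}\|_2^2$ is absorbed exactly, and keeping track that the dimension in $\|\bm{w}\|_1\le\sqrt{d}\,\|\bm{w}\|_2$ is the frame size $d$, not $n$, which is what produces the factor $d$ in the error term.
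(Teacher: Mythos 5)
Your proposal is correct and reaches exactly the inequality \eqref{ConeTubeconstraint-L1-L2-RASSO.eq1}, with the same constant $\tfrac{(\alpha+1)^2 d}{2}\tfrac{\lambda}{\rho}$, but the key step is handled by a genuinely different mechanism than in the paper. The paper invokes the partial first-order optimality condition in $\bm{z}$, namely $\bm{0}\in\lambda\bm{v}-\rho(\bm{D}^{\top}\hat{\bm{x}}-\hat{\bm{z}})$ with $\bm{v}$ a Fr\'echet subgradient of $\|\cdot\|_1-\alpha\|\cdot\|_2$ at $\hat{\bm{z}}$, substitutes $\hat{\bm{z}}=\bm{D}^{\top}\hat{\bm{x}}-\tfrac{\lambda}{\rho}\bm{v}$, and then bounds the residual $\tfrac{\lambda^2}{2\rho}\big(\|\bm{v}\|_2^2-2\|\bm{v}\|_1-2\alpha\|\bm{v}\|_2\big)$ from below by $-\tfrac{(1+\alpha)^2\lambda^2 d}{2\rho}$ via a coordinatewise completion of the square over the $d$ entries. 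You instead never touch the optimality system: you keep $\bm{w}=\bm{D}^{\top}\hat{\bm{x}}-\hat{\bm{z}}$ abstract, peel it off with the reverse triangle inequality, bound $\|\bm{w}\|_1+\alpha\|\bm{w}\|_2\le(1+\alpha)\sqrt{d}\,\|\bm{w}\|_2$, and use Young's inequality with weight $\rho$ so that the resulting quadratic is absorbed exactly by the penalty $\tfrac{\rho}{2}\|\bm{w}\|_2^2$ already present in the objective. Your route is more elementary and arguably safer, since it sidesteps the somewhat delicate use of a subdifferential calculus for the nonsmooth nonconvex function $\|\cdot\|_1-\alpha\|\cdot\|_2$; what it gives up is the explicit identity $\bm{w}=\tfrac{\lambda}{\rho}\bm{v}$, which characterizes the splitting residual and could be reused elsewhere. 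The remaining steps (comparison with the feasible point $(\bm{x},\bm{D}^{\top}\bm{x})$, the $T$/$T^c$ splitting, and the Cauchy--Schwarz bound on $\langle\bm{Ah},\bm{e}\rangle$) coincide with the paper's, so the lemma follows as you describe.
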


\begin{proof}
Our proof is motivated by the proof of \cite[Lemma 2.3]{geng2020Unconstrained} and \cite[Lemma IV.4]{tan2014smoothing}.
Assume that $(\hat{\bm{x}},\hat{\bm{z}})$ is another solution of \eqref{VectorL1-alphaL2-RASSO}, then
\begin{align}\label{ConeTubeconstraint-L1-L2-RASSO.eq2}
0&\geq \bigg(\lambda(\|\hat{{\bm z}}\|_{1}-\alpha\|\hat{{\bm z}}\|_{2})+\frac{1}{2}\|{\bm  A}\hat{{\bm  x}}-{\bm  b}\|_{2}^{2}+\frac{\rho}{2}\|{\bm D}^{\top}\hat{{\bm x}}-\hat{{\bm z}}\|_{2}^{2}\bigg)\nonumber\\
&\hspace{12pt}-\bigg(\lambda(\|{\bm D}^{\top}{\bm x}\|_{1}-\alpha\|{\bm D}^{\top}{\bm x}\|_{2})+\frac{1}{2}\|{\bm  A}{\bm  x}-{\bm  b}\|_{2}^{2}\bigg)\nonumber\\
&=\frac{1}{2}\bigg(\|{\bm  A}\hat{{\bm  x}}-{\bm  b}\|_{2}^{2}-\|{\bm  A}{\bm  x}-{\bm  b}\|_{2}^{2}\bigg)\nonumber\\
&\hspace{12pt}+\bigg(\lambda(\|\hat{{\bm z}}\|_{1}-\alpha\|\hat{{\bm z}}\|_{2})+\frac{\rho}{2}\|{\bm D}^{\top}\hat{{\bm x}}-\hat{{\bm z}}\|_{2}^{2}-\lambda(\|{\bm D}^{\top}{\bm x}\|_{1}-\alpha\|{\bm D}^{\top}{\bm x}\|_{2})\bigg)\nonumber\\
&=:I_1+I_2.
\end{align}

Next, we deal with terms $I_1$ and $I_2$, respectively. By $\hat{{\bm x}}={\bm h}+{\bm x}$ and ${\bm Ax}-{\bm b}=-{\bm e}$,
\begin{align}\label{ConeTubeconstraint-L1-L2-RASSO.eq3}
I_1&=\frac{1}{2}\bigg(\|{\bm  A}{\bm  h}-{\bm e}\|_{2}^{2}-\|-{\bm  e}\|_{2}^{2}\bigg)
=\frac{1}{2}\|{\bm  A}{\bm  h}\|_{2}^{2}-\langle {\bm  A}{\bm  h}, {\bm e}\rangle \nonumber\\
&\overset{(1)}{\geq} \frac{1}{2}\|{\bm  A}{\bm  h}\|_{2}^{2}-\|{\bm  A}{\bm  h}\|_{2}\|{\bm e}\|_{2}
\overset{(2)}{\geq} \frac{1}{2}\|{\bm  A}{\bm  h}\|_{2}^{2}-\lambda\|{\bm  A}{\bm  h}\|_{2},
\end{align}
where the inequality (1) comes from the Cauchy-Schwarz inequality, and  the inequality (2) is from $\|{\bm e}\|_{2}\leq \lambda$.

Then, we estimate the term $I_{2}$. The subgradient optimality condition for \eqref{VectorL1-alphaL2-RASSO} is
\begin{align*}
{\bm 0}&={\bm A}^{\top}({\bm A}\hat{{\bm x}}-{\bm b})+\rho{\bm D}({\bm D}^{\top}\hat{{\bm x}}-\hat{{\bm z}}),\nonumber\\
{\bm 0}&\in\lambda{\bm v}-\rho({\bm D}^{\top}\hat{{\bm x}}-\hat{{\bm z}}),
\end{align*}
where ${\bm v}$ is a Frech$\acute{e}$t  sub-differential  of the function $\|\hat{{\bm z}}\|_{1}-\alpha\|\hat{{\bm z}}\|_{2}$.  Then, we derive that
\begin{align}\label{ConeTubeconstraint-L1-L2-RASSO.eq4}
I_2&= \lambda(\|\hat{{\bm z}}\|_{1}-\|{\bm D}^{\top}{\bm x}\|_{1})-\lambda\alpha(\|\hat{{\bm z}}\|_{2}-\|{\bm D}^{\top}{\bm x}\|_{2})
+\frac{\rho}{2}\|{\bm D}^{\top}\hat{{\bm x}}-\hat{{\bm z}}\|_{2}^{2}\nonumber\\
&= \lambda\left(\left\|{\bm D}^{\top}\hat{{\bm x}}-\frac{\lambda}{\rho}{\bm v}\right\|_{1}-\|{\bm D}^{\top}{\bm x}\|_{1}\right)
-\lambda\alpha\left(\left\|{\bm D}^{\top}\hat{{\bm x}}-\frac{\lambda}{\rho}{\bm v}\right\|_{2}-\|{\bm D}^{\top}{\bm x}\|_{2}\right)
+\frac{\rho}{2}\|\frac{\lambda}{\rho}{\bm v}\|_{2}^{2}\nonumber\\
&\overset{(a)}{\geq}\lambda\left(\|{\bm D}^{\top}\hat{{\bm x}}\|_1-\|{\bm D}^{\top}{\bm x}\|_1 -\frac{\lambda}{\rho}\|{\bm v}\|_{1}\right)
-\lambda\alpha\left(\|{\bm D}^{\top}\hat{{\bm x}}\|_2-\|{\bm D}^{\top}{\bm x}\|_2 +\frac{\lambda}{\rho}\|{\bm v}\|_{2} \right)
+\frac{\lambda^2}{2\rho}\|{\bm v}\|_{2}^{2}\nonumber\\
&=\lambda\left(\left(\|{\bm D}^{\top}\hat{{\bm x}}\|_1-\|{\bm D}^{\top}{\bm x}\|_1\right)
-\alpha\left(\|{\bm D}^{\top}\hat{{\bm x}}\|_2-\|{\bm D}^{\top}{\bm x}\|_2\right)\right)\nonumber\\
&\hspace*{12pt}+\frac{\lambda^2}{2\rho}(\|{\bm v}\|_{2}^{2}-2\|{\bm v}\|_{1}-2\alpha\|{\bm v}\|_{2})\nonumber\\
&=:I_{21}+I_{22},
\end{align}
where (a)  is because of $\|{\bm u}\|_{p}-\|{\bm v}\|_{p}\leq \|\bm{u +v}\|_{p}\leq \|{\bm u}\|_{p}+\|{\bm v}\|_{p}$ for any $1\leq p\leq\infty$. Note that
\begin{align}\label{ConeTubeconstraint-L1-L2-RASSO.eq5}
I_{21}&=\lambda\left(\|{\bm D}_{T}^{\top}(\bm{ h+ x})+{\bm D}_{T^c}^{\top}(\bm{ h+ x})\|_1-\|{\bm D}_{T}^{\top}{\bm x}
+{\bm D}_{T^c}^{\top}{\bm x}\|_1\right)\nonumber\\
&\hspace*{12pt}-\lambda\alpha\left(\|{\bm D}^{\top}(\bm{ h+ x})\|_2-\|{\bm D}^{\top}{\bm x}\|_2\right)\nonumber\\
&\geq\lambda\left(\|{\bm D}_{T}^{\top}{\bm x}\|_1-\|{\bm D}_{T}^{\top}{\bm h}\|_1+\|{\bm D}_{T^c}^{\top}{\bm h}\|_1-\|{\bm D}_{T^c}^{\top}{\bm x}\|_1-
\|{\bm D}_{T}^{\top}{\bm x}\|_1-\|{\bm D}_{T^c}^{\top}{\bm x}\|_1\right)\nonumber\\
&\hspace*{12pt}-\lambda\alpha\left(\|{\bm D}^{\top}{\bm h}\|_2\right)\nonumber\\
&=\lambda(\|{\bm D}_{T^c}^{\top}{\bm h}\|_1-\|{\bm D}_{T}^{\top}{\bm h}\|_1-2\|{\bm D}_{T^c}^{\top}{\bm x}\|_1-\alpha\|{\bm D}^{\top}{\bm h}\|_2),
\end{align}
where the inequality is because of $\|\bm{u +v}\|_{1}\geq \|{\bm u}\|_{1}-\|{\bm v}\|_{1}$ and $\|\bm{u +v}\|_{1}\leq\|{\bm u}\|_{1}+\|{\bm v}\|_{1}$. On the other hand,
\begin{align}\label{ConeTubeconstraint-L1-L2-RASSO.eq6}
I_{22}&\overset{(1)}{\geq}\frac{\lambda^2}{2\rho}\left(\|{\bm v}\|_{2}^{2}-2(1+\alpha)\|{\bm v}\|_{1}\right)\nonumber\\
&=\frac{\lambda^2}{2\rho}\sum_{j=1}^{d}\left((|v|_j-(1+\alpha))^2-(1+\alpha)^2\right)
\overset{(2)}{\geq} -\frac{(1+\alpha)^2\lambda^2d}{2\rho},
\end{align}
where (1) is because of $\|{\bm v}\|_{2}\leq \|{\bm v}\|_{1}$, and the equality (2) holds if and only if $|v|_j=(1+\alpha)$ for all $j=1,\ldots,d$.   Substituting the estimate \eqref{ConeTubeconstraint-L1-L2-RASSO.eq5} and \eqref{ConeTubeconstraint-L1-L2-RASSO.eq6} into \eqref{ConeTubeconstraint-L1-L2-RASSO.eq4}, one has
\begin{align}\label{ConeTubeconstraint-L1-L2-RASSO.eq7}
I_2\geq\lambda\left(\|{\bm D}_{T^c}^{\top}{\bm h}\|_1-\|{\bm D}_{T}^{\top}{\bm h}\|_1-2\|{\bm D}_{T^c}^{\top}{\bm x}\|_1-\alpha\|{\bm D}^{\top}{\bm h}\|_2-\frac{(1+\alpha)^2d\lambda }{2\rho}\right).
\end{align}

Last, substituting the estimate of $I_1$ in \eqref{ConeTubeconstraint-L1-L2-RASSO.eq3} and $I_2$ in \eqref{ConeTubeconstraint-L1-L2-RASSO.eq7} into the inequality \eqref{ConeTubeconstraint-L1-L2-RASSO.eq2}, we get \eqref{ConeTubeconstraint-L1-L2-RASSO.eq1}.
\end{proof}

The following lemma is the fundamental properties of the metric  $\|{\bm  x}\|_1-\alpha\|{\bm  x}\|_2$ with
$0\leq\alpha\leq 1$, see \cite{yin2015minimization} and  \cite{ge2021dantzig}.
\begin{lemma}\label{LocalEstimateL1-L2}
For any $\bm {x}\in \mathbb{R}^n$, the following statements hold:
\begin{description}
  \item (a) Let $0 \leq \alpha\leq 1$, $T=\text{\rm supp}({\bm  x})$ and $\|{\bm  x}\|_0=s$. Then
\begin{align}\label{e:l12alphas}
(s-\alpha\sqrt{s})\min_{j\in T}|x_j|\leq\|{\bm  x}\|_1-\alpha\|{\bm  x}\|_2\leq(\sqrt{s}-\alpha)\|{\bm  x}\|_2.
\end{align}
\item (b) Let $S, S_1, S_2\subseteq [n]$ satisfy $S=S_1\cup S_2$ and $S_1\cap S_2=\emptyset$. Then
\begin{align}\label{e:uplowerbound}
\|\bm {x}_{S_1}\|_{1}-\alpha\|\bm{x}_{S_1}\|_{2}+\|\bm{x}_{S_2}\|_{1}-\alpha\|\bm{x}_{S_2}\|_{2}\leq\|\bm{x}_{S}\|_{1}-
\alpha\|\bm{x}_{S}\|_{2}.
\end{align}
\end{description}
\end{lemma}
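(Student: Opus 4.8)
The statement splits into two logically independent parts, and in both the only machinery needed is a pair of elementary norm comparisons: the Cauchy--Schwarz bound $\|\bm{x}\|_1\le\sqrt{\|\bm{x}\|_0}\,\|\bm{x}\|_2$ and the subadditivity of $\|\cdot\|_2$. The plan is to handle (a) and (b) in turn, and within (a) to dispatch the upper bound first (it is immediate) and then spend the real effort on the lower bound.

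For the upper bound in \eqref{e:l12alphas} I would simply note that $\bm{x}$ is supported on $T$ with $|T|=s$, so Cauchy--Schwarz gives $\|\bm{x}\|_1=\sum_{j\in T}|x_j|\le\sqrt{s}\,\|\bm{x}\|_2$; subtracting $\alpha\|\bm{x}\|_2$ from both sides yields $\|\bm{x}\|_1-\alpha\|\bm{x}\|_2\le(\sqrt{s}-\alpha)\|\bm{x}\|_2$. For the lower bound (the only part that needs an idea) the plan is a ``flat part plus excess'' decomposition: set $a:=\min_{j\in T}|x_j|$ (the case $\bm{x}=\bm{0}$ being trivial, so $a>0$), and write $|x_j|=a+t_j$ with $t_j\ge0$ for $j\in T$. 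Then $\|\bm{x}\|_1=sa+\sum_{j\in T}t_j$, while the triangle inequality for $\|\cdot\|_2$ followed by $\|\cdot\|_2\le\|\cdot\|_1$ gives $\|\bm{x}\|_2\le\big(\sum_{j\in T}a^2\big)^{1/2}+\big(\sum_{j\in T}t_j^2\big)^{1/2}\le\sqrt{s}\,a+\sum_{j\in T}t_j$. Substituting both estimates and using $0<\alpha\le1$ to absorb the $\alpha\sum_{j\in T}t_j$ term, I get $\|\bm{x}\|_1-\alpha\|\bm{x}\|_2\ge(s-\alpha\sqrt{s})\,a+(1-\alpha)\sum_{j\in T}t_j\ge(s-\alpha\sqrt{s})\,a$, which is the claim.

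For part (b) the plan is to exploit that $S_1$ and $S_2$ partition $S$. Disjointness makes $\|\cdot\|_1$ exactly additive, so $\|\bm{x}_{S_1}\|_1+\|\bm{x}_{S_2}\|_1=\|\bm{x}_S\|_1$, and makes $\|\cdot\|_2^2$ additive, so $\|\bm{x}_S\|_2^2=\|\bm{x}_{S_1}\|_2^2+\|\bm{x}_{S_2}\|_2^2$; combining the latter with $\sqrt{p^2+q^2}\le p+q$ for $p,q\ge0$ yields $\|\bm{x}_S\|_2\le\|\bm{x}_{S_1}\|_2+\|\bm{x}_{S_2}\|_2$. Multiplying this inequality by $-\alpha\le0$ and adding the $\ell_1$ identity produces exactly \eqref{e:uplowerbound}.

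The single non-routine point is the lower bound in (a): the naive estimate $\|\bm{x}\|_2\le\|\bm{x}\|_1$ gives only $\|\bm{x}\|_1-\alpha\|\bm{x}\|_2\ge(1-\alpha)\|\bm{x}\|_1\ge(1-\alpha)sa$, which is strictly weaker than $(s-\alpha\sqrt{s})a$ as soon as $s>1$. The ``$a+t_j$'' splitting is what fixes this: it isolates the flat component, which contributes $\sqrt{s}\,a$ to $\|\bm{x}\|_2$ — matching the target term $-\alpha\sqrt{s}\,a$ sharply — while on the excess component the crude bound $\|\cdot\|_2\le\|\cdot\|_1$ is harmless because it is multiplied by the nonnegative factor $1-\alpha$. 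Everything else in the proof is Cauchy--Schwarz and the triangle inequality, so no further difficulty is anticipated.
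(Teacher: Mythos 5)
Your proof is correct. Note, however, that the paper itself does not prove Lemma \ref{LocalEstimateL1-L2}: it is quoted from \cite{yin2015minimization} and \cite{ge2021dantzig} with no in-paper argument, so there is nothing internal to compare against, and your write-up in effect supplies the missing self-contained proof. The one genuinely non-trivial step, the lower bound in (a), is handled correctly by your ``flat part plus excess'' decomposition $|x_j|=a+t_j$: the triangle inequality gives $\|\bm{x}\|_2\le\sqrt{s}\,a+\bigl(\sum_{j\in T}t_j^2\bigr)^{1/2}\le\sqrt{s}\,a+\sum_{j\in T}t_j$, and the residual term $(1-\alpha)\sum_{j\in T}t_j$ is nonnegative and can be dropped; this matches the standard argument for the $\alpha=1$ case in the cited literature and extends it verbatim to $\alpha\in[0,1]$. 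The upper bound in (a) is indeed just Cauchy--Schwarz, and (b) follows from exact $\ell_1$-additivity and $\ell_2$-subadditivity over the disjoint supports, with the sign of $-\alpha$ handled correctly. One cosmetic remark: you restrict to $0<\alpha\le1$ in the lower bound, while the lemma allows $\alpha=0$; your argument needs only $1-\alpha\ge0$ and $s-\alpha\sqrt{s}\ge0$, so it covers that endpoint without change (and the degenerate case $\bm{x}=\bm{0}$, where $T=\emptyset$, is vacuous as you note).
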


Now, we can give the key technical tool used in the main results.
It provides an estimation  based on $\bm{D}$-ROC for $|\langle \bm{ ADu}, \bm{ ADv} \rangle +\langle \bar{{\bm D}}{\bm u},\bar{{\bm D}}{\bm v}\rangle|$, where
one of $\bm{u}$ or $\bm{v}$ is sparse.
Our idea is inspired by \cite[Lemma 5.1]{cai2013compressed} and \cite[Lemma 2.3]{li2019signal}.

\begin{proposition}\label{NonsparseROC}
Let $s_1, s_2\leq n$. Suppose ${\bm u},{\bm v}\in\mathbb{R}^d$ satisfy $\text{\rm supp}({\bm u})\cap \text{\rm supp}({\bm v})=\emptyset$ and $\bm u$ is $s_1$ sparse. If $\|{\bm v}\|_{1-2}\leq(s_2-\sqrt{s_2})\eta$ and $\|{\bm v}\|_\infty\leq\eta$, then
\begin{align}\label{NonsparseROC.eq1}
|\langle \bm{ ADu}, \bm{ ADv} \rangle +\langle \bar{{\bm D}}{\bm u},\bar{{\bm D}}{\bm v}\rangle|
\leq\bigg(1+\frac{\sqrt{2}}{2}\bigg)\eta \sqrt{s_2}\theta_{s_1,s_2}\|\bm{u}\|_2.
\end{align}
\end{proposition}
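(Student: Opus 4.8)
The plan is to reduce the estimate to a sum of ``sparse versus sparse'' interactions, each of which can be controlled by Proposition~\ref{prop.DROC}(iv), and then to balance the resulting geometric-type series using the hypotheses on $\|\bm v\|_{1-2}$ and $\|\bm v\|_\infty$. First I would sort the coordinates of $\bm v$ by decreasing absolute value and partition $\text{\rm supp}(\bm v)$ into consecutive blocks $S_1, S_2, \ldots$ of size $s_2$ each (the last block possibly smaller), so that $\bm v = \sum_{k\ge 1}\bm v_{S_k}$ and each $\bm v_{S_k}$ is $s_2$-sparse. Since $\text{\rm supp}(\bm u)\cap\text{\rm supp}(\bm v)=\emptyset$, each pair $(\bm u,\bm v_{S_k})$ has disjoint support, so the second (disjoint-support) bound in Proposition~\ref{prop.DROC}(iv) gives
\begin{align*}
|\langle \bm{ADu},\bm{ADv}\rangle + \langle\bar{\bm D}\bm u,\bar{\bm D}\bm v\rangle|
\le \sum_{k\ge 1}|\langle \bm{ADu},\bm{AD}\bm v_{S_k}\rangle + \langle\bar{\bm D}\bm u,\bar{\bm D}\bm v_{S_k}\rangle|
\le \theta_{s_1,s_2}\|\bm u\|_2 \sum_{k\ge 1}\|\bm v_{S_k}\|_2.
\end{align*}
So everything comes down to bounding $\sum_{k\ge 1}\|\bm v_{S_k}\|_2$ by $(1+\tfrac{\sqrt2}{2})\eta\sqrt{s_2}$.

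The main work is this tail estimate, and it is the standard ``shifting inequality'' argument adapted to the $\ell_1-\ell_2$ quantity. For the first block I would simply use $\|\bm v_{S_1}\|_2 \le \sqrt{s_2}\,\|\bm v\|_\infty \le \sqrt{s_2}\,\eta$. For $k\ge 2$, every coordinate in $S_k$ is bounded in absolute value by every coordinate in $S_{k-1}$, so $\|\bm v_{S_k}\|_\infty \le \|\bm v_{S_{k-1}}\|_1/s_2$ and hence $\|\bm v_{S_k}\|_2 \le \sqrt{s_2}\,\|\bm v_{S_k}\|_\infty \le \|\bm v_{S_{k-1}}\|_1/\sqrt{s_2}$. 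Summing over $k\ge 2$ telescopes the $\ell_1$ masses:
\begin{align*}
\sum_{k\ge 2}\|\bm v_{S_k}\|_2 \le \frac{1}{\sqrt{s_2}}\sum_{k\ge 1}\|\bm v_{S_k}\|_1 = \frac{\|\bm v\|_1}{\sqrt{s_2}}.
\end{align*}
This alone is not quite enough, so I would instead combine a refined first-block bound with the hypothesis on $\|\bm v\|_{1-2}$. Writing $\|\bm v\|_1 = \|\bm v\|_{1-2} + \|\bm v\|_2 \le (s_2-\sqrt{s_2})\eta + \|\bm v\|_2$, and noting that $\|\bm v\|_2\le \|\bm v_{S_1}\|_2 + \sum_{k\ge2}\|\bm v_{S_k}\|_2$, one can feed the bound back into itself; alternatively, apply Lemma~\ref{LocalEstimateL1-L2}(a) on $S_1$ together with the sparse-block bound to control $\|\bm v_{S_1}\|_2$ in terms of $\eta\sqrt{s_2}$ and $\|\bm v_{S_1}\|_{1-2}$, and use Lemma~\ref{LocalEstimateL1-L2}(b) to absorb the $\|\bm v_{S_1}\|_{1-2}$ piece into the global budget $(s_2-\sqrt{s_2})\eta$. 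Carrying the constants through, the first block contributes at most $\eta\sqrt{s_2}$ and the tail contributes at most $\tfrac{\sqrt2}{2}\eta\sqrt{s_2}$ (the $\sqrt2/2$ arising from a Cauchy--Schwarz step of the form $\sum a_k \le \sqrt{(\sum a_k^2)\cdot(\#\text{terms})}$ combined with $\sum\|\bm v_{S_k}\|_2^2\le\|\bm v\|_2^2$), giving the claimed factor $1+\tfrac{\sqrt2}{2}$.

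The step I expect to be the genuine obstacle is getting the constant exactly $1+\tfrac{\sqrt2}{2}$ rather than something larger: the naive telescoping gives $\|\bm v\|_1/\sqrt{s_2}$ for the tail, and bounding $\|\bm v\|_1$ crudely by $s_2\eta$ would only yield a constant like $2$. The sharp bound requires using both hypotheses simultaneously — the $\ell_\infty$ bound to control the leading block and the flatness it forces, and the $\ell_1-\ell_2$ bound as the true ``size'' of $\bm v$ — and then optimizing the split between how much of $\bm v$'s mass sits in $S_1$ versus the tail. Concretely I would parametrize by $\|\bm v_{S_1}\|_2 = t\sqrt{s_2}\eta$ with $t\in[0,1]$, bound the tail by something like $\sqrt{(1-t^2)}\cdot(\text{budget})/\sqrt{s_2}$ after invoking Lemma~\ref{LocalEstimateL1-L2}, and check that the maximum of first-block-plus-tail over $t\in[0,1]$ is $(1+\tfrac{\sqrt2}{2})\eta\sqrt{s_2}$, attained at $t=\tfrac{\sqrt2}{2}$. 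Once that scalar optimization is verified, the proposition follows by combining it with the block decomposition and Proposition~\ref{prop.DROC}(iv) displayed above.
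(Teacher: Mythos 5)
Your reduction to sparse-versus-sparse interactions is sound: splitting $\text{supp}(\bm v)$ into decreasingly ordered blocks of size $s_2$ and applying the disjoint-support case of Proposition~\ref{prop.DROC}(iv) blockwise correctly reduces everything to showing $\sum_{k}\|\bm v_{S_k}\|_2\leq\bigl(1+\tfrac{\sqrt2}{2}\bigr)\eta\sqrt{s_2}$. But that inequality is precisely where your argument stops being a proof. The telescoping bound you write down gives only the constant $2$, as you acknowledge, and the refinement you propose is never carried out: you say you would ``feed the bound back into itself,'' or ``alternatively'' invoke Lemma~\ref{LocalEstimateL1-L2}, and then assert that the optimization of first-block-plus-tail over $t\in[0,1]$ yields $1+\tfrac{\sqrt2}{2}$ at $t=\tfrac{\sqrt2}{2}$. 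That assertion is internally inconsistent with the functional form you propose: if the first block contributes $t\sqrt{s_2}\,\eta$ and the tail contributes $\sqrt{1-t^2}\cdot\sqrt{s_2}\,\eta$, then $\max_{t\in[0,1]}\bigl(t+\sqrt{1-t^2}\bigr)=\sqrt2\neq 1+\tfrac{\sqrt2}{2}$, which shows the scalar optimization has not actually been checked and the claimed budget for the tail has not been derived from the hypothesis $\|\bm v\|_{1-2}\leq(s_2-\sqrt{s_2})\eta$. Since you yourself flag this step as ``the genuine obstacle,'' the proposal is an outline of a strategy rather than a proof; the central quantitative estimate that produces the constant $1+\tfrac{\sqrt2}{2}$ is missing.

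For comparison, the paper sidesteps the blockwise tail estimate entirely. When $\|\bm v\|_0\leq s_2$ it applies Proposition~\ref{prop.DROC}(iv) directly (your single-block case). Otherwise it invokes Lemma~\ref{e:convexl1-2}, which says that a vector with $\|\bm v\|_\infty\leq\eta$ and $\|\bm v\|_{1-2}\leq(s_2-\sqrt{s_2})\eta$ can be written as a \emph{convex combination} $\bm v=\sum_j\gamma_j\bm v^j$ of $s_2$-sparse vectors satisfying $\|\bm v^j\|_\infty\leq\bigl(1+\tfrac{\sqrt2}{2}\bigr)\eta$. Then $\|\bm v^j\|_2\leq\|\bm v^j\|_\infty\sqrt{s_2}\leq\bigl(1+\tfrac{\sqrt2}{2}\bigr)\eta\sqrt{s_2}$ for every piece, and the bound follows by the triangle inequality over the convex weights, with no telescoping or optimization needed. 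The factor $1+\tfrac{\sqrt2}{2}$ is thus inherited from that decomposition lemma, which is exactly the piece of machinery your block argument would have to reconstruct by hand. If you want to salvage your route, you must prove the inequality $\sum_k\|\bm v_{S_k}\|_2\leq\bigl(1+\tfrac{\sqrt2}{2}\bigr)\eta\sqrt{s_2}$ rigorously (tracking how the budget $(s_2-\sqrt{s_2})\eta$ and the self-referential term $\|\bm v\|_2$ in $\|\bm v\|_1=\|\bm v\|_{1-2}+\|\bm v\|_2$ interact with the block masses), or simply cite a convex-combination lemma of the type the paper uses.
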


Before giving the proof of Proposition \ref{NonsparseROC},  we first recall
a convex combination of sparse vectors for any point based on  the metric $\|\cdot\|_1-\|\cdot\|_2$.

\begin{lemma}\label{e:convexl1-2}\cite[Lemma2.2]{ge2021new}
Let a vector $\bm {\nu}\in \mathbb{R}^d$ satisfy $\|\bm {\nu}\|_{\infty}\leq \eta$, where $\alpha$ is a positive constant.
Suppose $\|\bm {\nu}\|_{1-2}\leq (s-\sqrt{s})\eta$
 with   a positive integer $s$ and $s\leq |\mathrm{supp}(\bm {\nu})|$.  Then $\bm {\nu}$ can be represented as a convex
combination of $s$-sparse vectors $\bm {v}^{(i)}$, i.e.,
\begin{align}\label{e:com}
\bm {\nu}=\sum_{i=1}^N\lambda_i\bm {v}^{(i)},
\end{align}
where $N$ is a positive integer,
 \begin{align}\label{e:thetaisum}
 0<\lambda_i\leq1,\ \ \ \  \sum_{i=1}^N\lambda_i=1,
 \end{align}
\begin{align}\label{e:setS}
\mathrm{supp}(\bm {v}^{(i)})\subseteq \mathrm{supp}(\bm {\nu}), \ \ \|\bm {v}^{(i)}\|_0\leq {s},\ \ \|\bm {v}^{(i)}\|_{\infty}\leq \Big(1+\frac{\sqrt{2}}{2}\Big)\eta,
\end{align}
and
\begin{align}\label{e:newinequality}
\sum_{i=1}^N\lambda_i\|\bm {v}^{(i)}\|_2^2
 \leq \Big[\Big(1+\frac{\sqrt{2}}{2}\Big)^2(s-\sqrt{s})+1\Big]\theta^2.
\end{align}
\end{lemma}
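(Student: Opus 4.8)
The plan is to prove the decomposition in two stages: first convert the mixed $\ell_1-\ell_2$ hypothesis into a pure $\ell_1$ bound, and then invoke the polytope / convex-combination machinery that already underlies \cite[Lemma 5.1]{cai2013compressed}. (Throughout I read $\|\bm\nu\|_{1-2}=\|\bm\nu\|_1-\|\bm\nu\|_2$, and I read the stray ``$\alpha$'' in the hypothesis and the ``$\theta$'' in \eqref{e:newinequality} as the single constant $\eta$.) After the trivial normalisation $\eta=1$, the case $|\mathrm{supp}(\bm\nu)|=s$ is immediate, since then $\bm\nu$ is already $s$-sparse and one takes $N=1$, $\lambda_1=1$, $\bm v^{(1)}=\bm\nu$; so I assume $|\mathrm{supp}(\bm\nu)|>s$.

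\textbf{Key reduction.} I would first show that the two hypotheses force $\|\bm\nu\|_1\le s\eta$. From $\|\bm\nu\|_\infty\le\eta$ we get $\|\bm\nu\|_2^2=\sum_j\nu_j^2\le\eta\sum_j|\nu_j|=\eta\|\bm\nu\|_1$, hence $\|\bm\nu\|_2\le\sqrt{\eta\|\bm\nu\|_1}$. Feeding this into $\|\bm\nu\|_1-\|\bm\nu\|_2\le(s-\sqrt s)\eta$ and writing $L=\|\bm\nu\|_1/\eta$ yields $L-\sqrt L\le s-\sqrt s$. Since $t\mapsto t-\sqrt t$ is increasing on $[1/4,\infty)$ and $s\ge1$, this gives $L\le s$, i.e. $\|\bm\nu\|_1\le s\eta$. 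This is the one genuinely new ingredient: it shows the $\ell_1-\ell_2$ constraint is strictly stronger than the plain constraint $\|\bm\nu\|_1\le s\eta$ used in the classical setting, and it lets the rest of the argument reuse existing technology.

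\textbf{Construction and verification.} With $\|\bm\nu\|_1\le s\eta$ and $\|\bm\nu\|_\infty\le\eta$ in hand, I would apply the convex-combination result \cite[Lemma 5.1]{cai2013compressed} (equivalently, expand $\bm\nu$ over the extreme points of the polytope $\{\bm w:\ \mathrm{supp}(\bm w)\subseteq\mathrm{supp}(\bm\nu),\ \|\bm w\|_\infty\le\eta,\ \|\bm w\|_1=\|\bm\nu\|_1\}$ by Carath\'eodory's theorem). Its vertices are $s$-sparse vectors $\bm v^{(i)}$ with $\mathrm{supp}(\bm v^{(i)})\subseteq\mathrm{supp}(\bm\nu)$, $\|\bm v^{(i)}\|_0\le s$, $\|\bm v^{(i)}\|_1=\|\bm\nu\|_1\le s\eta$, and $\|\bm v^{(i)}\|_\infty\le\eta\le(1+\tfrac{\sqrt2}{2})\eta$, which delivers \eqref{e:thetaisum} and \eqref{e:setS} at once. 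The second-moment bound \eqref{e:newinequality} then follows from a crude estimate: for each $i$, $\|\bm v^{(i)}\|_2^2\le\|\bm v^{(i)}\|_\infty\|\bm v^{(i)}\|_1\le\eta\cdot s\eta=s\eta^2$, so $\sum_i\lambda_i\|\bm v^{(i)}\|_2^2\le s\eta^2$; and one checks $s\le(1+\tfrac{\sqrt2}{2})^2(s-\sqrt s)+1$ for every integer $s\ge1$ (the difference vanishes at $s=1$ and its $s$-derivative is positive thereafter). Thus all four conclusions hold, in fact with the sharper $\ell_\infty$ constant $\eta$.

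\textbf{Main obstacle.} The step carrying the real weight is the convex-combination decomposition itself, namely that a vector on the $\ell_1$-ball of radius $s\eta$ with $\|\cdot\|_\infty\le\eta$ lies in the convex hull of $s$-sparse vertices supported inside $\mathrm{supp}(\bm\nu)$. If one declines to cite \cite{cai2013compressed} and instead reproves it (as the quoted source apparently does, by a direct greedy ``water-filling'' construction on the sorted coordinates of $\bm\nu$), then the inflation of the $\ell_\infty$ bound to $(1+\tfrac{\sqrt2}{2})\eta$ is precisely the price paid to keep each constructed piece exactly $s$-sparse while simultaneously controlling $\sum_i\lambda_i\|\bm v^{(i)}\|_2^2$; balancing those two requirements coordinate-by-coordinate, rather than through the elementary reduction above, is where the bookkeeping becomes delicate. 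The route I propose sidesteps this by first collapsing the hypothesis to a pure $\ell_1$ bound, after which the classical lemma applies verbatim.
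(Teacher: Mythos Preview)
The paper does not prove this lemma at all; it is quoted verbatim from \cite[Lemma~2.2]{ge2021new} and used as a black box in the proof of Proposition~\ref{NonsparseROC}. So there is no in-paper proof to compare against.

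Your argument is correct. The reduction step---using $\|\bm\nu\|_2^2\le\|\bm\nu\|_\infty\|\bm\nu\|_1$ to convert the $\ell_1-\ell_2$ hypothesis into $\|\bm\nu\|_1\le s\eta$ via the monotonicity of $t\mapsto t-\sqrt t$ on $[1/4,\infty)$---is valid (the edge case $L<1/4$ is harmless since then $L<1\le s$ trivially). Once $\|\bm\nu\|_1\le s\eta$ and $\|\bm\nu\|_\infty\le\eta$ are in hand, the Cai--Zhang polytope representation underlying \cite[Lemma~5.1]{cai2013compressed} delivers the convex combination with support inside $\mathrm{supp}(\bm\nu)$ and with the \emph{sharper} bound $\|\bm v^{(i)}\|_\infty\le\eta$, and your elementary check $s\le(1+\tfrac{\sqrt2}{2})^2(s-\sqrt s)+1$ for integers $s\ge1$ disposes of \eqref{e:newinequality}. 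You also correctly identified that the stray ``$\alpha$'' in the hypothesis and ``$\theta$'' in \eqref{e:newinequality} are typos for $\eta$.

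This route is almost certainly different from the one in \cite{ge2021new}: the constant $1+\tfrac{\sqrt2}{2}$ appearing in both \eqref{e:setS} and \eqref{e:newinequality} strongly suggests a bespoke greedy construction tailored to the $\ell_1-\ell_2$ metric, in which that inflation factor arises organically from balancing sparsity against the second-moment control coordinate by coordinate. Your reduction is shorter and actually yields a strictly stronger conclusion ($\|\bm v^{(i)}\|_\infty\le\eta$ rather than $(1+\tfrac{\sqrt2}{2})\eta$), at the cost of outsourcing the combinatorial core to Cai--Zhang. The direct construction, by contrast, is self-contained and keeps the dependence on the $\ell_1-\ell_2$ quantity explicit, which is why the cited source likely prefers it.
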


Now, we give the proof of Proposition \ref{NonsparseROC} in details.
\begin{proof}
Suppose $\|\bm{v}\|_0=t$. We consider two cases as follows.

\textbf{Case I: $t\leq s_2$.}

By the item (iv) of Proposition \ref{prop.DROC} and $\|{\bm v}\|_\infty\leq\alpha$, we have
\begin{align}\label{NonsparseROC.eq2}
|\langle \bm{ ADu}, \bm{ ADv} \rangle +\langle \bar{{\bm D}}{\bm u},\bar{{\bm D}}{\bm v}\rangle|&\leq\theta_{s_1,t}\|{\bm u}\|_2\|{\bm v}\|_2\leq\theta_{s_1,t}\|{\bm u}\|_2\|{\bm v}\|_{\infty}\sqrt{\|{\bm v}\|_0}\nonumber\\
&\leq\eta\sqrt{s_2}\theta_{s_1,s_2}\|{\bm u}\|_2.
\end{align}

\textbf{Case II: $t> s_2$.}

We shall prove by induction. Assume that \eqref{NonsparseROC.eq1} holds for $t-1$.
Note that $\|{\bm v}\|_{1-2}\leq\eta (s_2-\sqrt{s_2})$ and $\|{\bm v}\|_{\infty}\leq \eta$.  By Lemma \ref{e:convexl1-2}, ${\bm v}$ can be represented as the convex hull of $s_2$-sparse vectors:
$$
{\bm v}=\sum_{j=1}^N\gamma_j{\bm v}^j,
$$
where ${\bm v}^j$ is $s_2$-sparse for any $j\in[N]$ and
\begin{align*}
\sum_{j=1}^N\gamma_j=1,~~0<\gamma_j\leq 1, j\in[N].
\end{align*}

Since $\bm{v}^j$ is $s_2$-sparse and $s_2\leq t-1$, we  use the induction assumption,
\begin{align}\label{NonsparseROC.eq3}
&|\langle \bm{ ADu}, \bm{ ADv} \rangle +\langle \bar{{\bm D}}{\bm u},\bar{{\bm D}}{\bm v}\rangle|
\leq\sum_{j=1}^N\gamma_j|\langle \bm{ ADu}, \bm{ AD}\bm{v}^j\rangle
+\langle \bar{{\bm D}}{\bm u},\bar{{\bm D}}{\bm v}^j\rangle|\nonumber\\
&\overset{(a)}{\leq} \sum_{j=1}^N\gamma_j\bigg(\theta_{s_1,s_2}\|\bm{v}^j\|_2\|\bm{u}\|_2\bigg)
\leq \sum_{j=1}^N\gamma_j\bigg(\theta_{s_1,s_2}\|\bm{v}^j\|_{\infty}\sqrt{\|\bm{v}^j\|_{0}}\|\bm{u}\|_2\bigg)\nonumber\\
&\overset{(b)}{\leq}\sum_{j=1}^N\gamma_j\bigg(\bigg(1+\frac{\sqrt{2}}{2}\bigg)\eta\sqrt{s_2}\theta_{s_1,s_2}\|\bm{u}\|_2\bigg)
=\bigg(1+\frac{\sqrt{2}}{2}\bigg)\eta\sqrt{s_2}\theta_{s_1,s_2}\|\bm{u}\|_2,
\end{align}
where (a)  is in view of the item (iv) of Proposition \ref{prop.DROC}, and (b)  is in virtue of Lemma \ref{e:convexl1-2}.

The combination of \textbf{Case I} and \textbf{Case II} is \eqref{NonsparseROC.eq1}.
\end{proof}

\section{Main Result based on  D-RIC and  D-ROC }\label{s3}
\noindent

In this section, we develop sufficient conditions
based on  $\bm{D}$-RIC and  $\bm{D}$-ROC of the
$\ell_{1}-\alpha \ell_2$-ASSO \eqref{VectorL1-alphaL2-ASSO} and $\ell_{1}-\alpha \ell_2$-RASSO \eqref{VectorL1-alphaL2-RASSO}
for the signal recovery applying the technique of the convex combination.

\subsection{Auxiliary Lemmas Under RIP Frame}\label{s3.1}
\noindent

Combining the above auxiliary results in Section \ref{adds2},
we first introduce some main inequalities, which play an important role in establishing the recovery condition of the
$\ell_{1}-\alpha \ell_2$-ASSO \eqref{VectorL1-alphaL2-ASSO} and $\ell_{1}-\alpha \ell_2$-RASSO \eqref{VectorL1-alphaL2-RASSO}.

\begin{proposition}\label{lem:CrossItem}
For positive integers  $s\geq 2$ and $k\geq 1$,
let $S = \text{\rm supp}({\bm D}^{\top}{\bm h}_{\max(s)})$ and $T = \text{\rm supp}({\bm D}^{\top}{\bm x}_{\max(s)})$.
Assume that
\begin{eqnarray}\label{lem:CrossItem.eq1}
\|\bm{D}_{S^c}^{\top}\bm{h}\|_1-\alpha\|\bm{D}_{S^c}^{\top}\bm{h}\|_2
\leq a\|\bm{D}_{S}^{\top}\bm{h}\|_1+b\|\bm{D}_{S}^{\top}\bm{h}\|_2+c\|\bm{D}_{T^c}^{\top}\bm{x}\|_1+\eta\|{\bm {Ah}}\|_2
+\gamma,
\end{eqnarray}
where $a>0$ and $b,c, \eta,\gamma\geq0$ satisfy $(a-1)\sqrt{s}+(b+1)\geq0$. Then the following statements hold:

\begin{description}
  \item[(i)] For the set $S$, one has
   \begin{align}\label{e:CrossItem1}
&|\langle \bm{AD}\bm{D}_{S}^{\top}{\bm h},\bm{AD}\bm{D}_{S^c}^{\top}{\bm h}\rangle
|\nonumber\\
&\leq \theta_{s,s}\sqrt{s}\Big(1+\frac{\sqrt{2}}{2}\Big)\Big(\frac{a\sqrt{s}+b}{\sqrt{s}-1}\frac{\|\bm{D}_{S}^{\top}\bm{h}\|_2}{\sqrt{s}}
+\frac{c\|\bm{D}_{T^c}^{\top}\bm{x}\|_1+\eta\|{\bm Ah}\|_2+\gamma}{s-\sqrt{s}}\Big)\|\bm{D}_{S}^{\top}{\bm h}\|_2,
\end{align}
  \item
   \item[(ii)] Let the set
   \begin{align}\label{def:S}
\tilde{S}=S\cup \Big\{i:|(\bm{D}_{S^c}^{\top}\bm{h})(i)|>\frac{1}{t-1}\frac{a\sqrt{s}+b}{\sqrt{s}-1}\frac{\|\bm{D}_{S}^{\top}\bm{h}\|_2}{\sqrt{s}}
+\frac{c\|\bm{D}_{T^c}^{\top}\bm{x}\|_1+\eta\|{\bm Ah}\|_2+\gamma}{(t-1)(s-\sqrt{s})}\Big\}
\end{align}
for  any $t>3$. Then
   \begin{align}\label{e:CrossItem2}
&|\langle \bm{AD}\bm{D}_{\tilde{S}}^{\top}{\bm h},\bm{AD}\bm{D}_{\tilde{S}^c}^{\top}{\bm h}\rangle
|\leq\theta_{ts, (t-1)s} \frac{\sqrt{\lceil(t-1)s\rceil}}{t-1}  \Big(1+\frac{\sqrt{2}}{2}\Big)
\nonumber\\
&\times\Big(\frac{a\sqrt{s}+b}{\sqrt{s}-1}\frac{\|\bm{D}_{S}^{\top}\bm{h}\|_2}{\sqrt{s}}
+\frac{c\|\bm{D}_{T^c}^{\top}\bm{x}\|_1+\eta\|{\bm Ah}\|_2+\gamma}{s-\sqrt{s}}\Big)\|\bm{D}_{\tilde{S}}^{\top}{\bm h}\|_2.
\end{align}

\item[(iii)]
    Let  $\bm {A}$ satisfy $\bm D$-RIP with
\begin{align}\label{RIPConditiona1add}
\rho_{s,t}=:\delta_{ts}+\sqrt{\frac{\lceil(t-1)s\rceil}{(t-1)^2s}}\frac{(\sqrt{2}+1)(\sqrt{s}+\alpha)}{\sqrt{2}(\sqrt{s}-1)}
\theta_{st, (t-1)s}<1,
\end{align}
with  $t\geq3$. Then
\begin{align}\label{prop:NSP.eq1}
\|\bm{D}_{S}^{\top}\bm{h}\|_2\leq&\frac{(\sqrt{2}+1)\theta_{ts,(t-1)s}}{\sqrt{2}(1-\rho_{s,t})}\frac{\sqrt{\lceil(t-1)s\rceil}}{(t-1)(s-\sqrt{s})}\Big(c\|\bm{D}_{T^c}^{\top}\bm{x}\|_1+\gamma\Big)
\nonumber\\
&+
\Big(\frac{(\sqrt{2}+1)\theta_{ts,(t-1)s}}{\sqrt{2}(1-\rho_{s,t})}\frac{\sqrt{(t-1)s}}{(t-1)(s-\sqrt{s})}+\frac{\sqrt{1+\delta_{ts}}}{1-\rho_{s,t}}\Big)\eta\|\bm{Ah}\|_{2},
\end{align}
where $\eta\geq 1$.

\item[(iv)]
Let  $\bm {A}$ satisfy $\bm D$-RIP with
\begin{align}\label{RIPCondition1}
\rho_{s}=:\delta_{s}+\frac{(\sqrt{2}+1)(\sqrt{s}+\alpha)}{\sqrt{2}(\sqrt{s}-1)}
\theta_{s,s}<1.
\end{align}
Then
\begin{align}\label{prop:NSP.eq2}
\|\bm{D}_{S}^{\top}\bm{h}\|_2\leq&\frac{(\sqrt{2}+1)\theta_{s,s}}{\sqrt{2}(1-\rho_{s})(\sqrt{s}-1)}\Big(c\|\bm{D}_{T^c}^{\top}\bm{x}\|_1+\gamma\Big)
\nonumber\\
&+
\Big(\frac{(\sqrt{2}+1)\theta_{s,s}}{\sqrt{2}(1-\rho_{s})(\sqrt{s}-1)}+\frac{\sqrt{1+\delta_{s}}}{1-\rho_{s}}\Big)\eta\|\bm{Ah}\|_{2},
\end{align}
where $\eta\geq 1$.

\item[(v)] For the term $\|\bm{D}_{S^c}^{\top}\bm{h}\|_2$, there is
\begin{align}\label{DSC}
&\|\bm{D}_{S^c}^{\top}\bm{h}\|_2\nonumber\\
&\leq\Bigg(\sqrt{\frac{a\sqrt{s}+b}{\sqrt{s}}+\frac{\alpha^2}{4s}}+\frac{\alpha+\bar{\varepsilon}}{2\sqrt{s}}\Bigg)\|\bm{D}_{S}^{\top}\bm{h}\|_2
+\frac{1}{2\bar{\varepsilon}}\big(c\|\bm{D}_{T^c}^{\top}\bm{x}\|_1+\eta\|\bm{Ah}\|_2+\gamma\big),
\end{align}
where $\bar{\varepsilon}>0$ is a constant.
\end{description}
\end{proposition}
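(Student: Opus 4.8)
The plan is to take the five parts in order, since (iii)--(v) all build on the cross-term estimates of (i)--(ii). Write $R:=c\|\bm{D}_{T^c}^{\top}\bm{x}\|_1+\eta\|\bm{Ah}\|_2+\gamma$ and
\[
\eta_0:=\frac{a\sqrt{s}+b}{\sqrt{s}-1}\,\frac{\|\bm{D}_{S}^{\top}\bm{h}\|_2}{\sqrt{s}}+\frac{R}{s-\sqrt{s}},
\]
the bracketed factor appearing in \eqref{e:CrossItem1}--\eqref{e:CrossItem2}. For (i), apply Proposition \ref{NonsparseROC} to $\bm{u}=\bm{D}_{S}^{\top}\bm{h}$ ($s$-sparse) and $\bm{v}=\bm{D}_{S^c}^{\top}\bm{h}$ with $s_1=s_2=s$ and its parameter ``$\eta$'' set to $\eta_0$; its two hypotheses hold because $\|\bm{D}_{S^c}^{\top}\bm{h}\|_\infty\le\|\bm{D}_{S}^{\top}\bm{h}\|_1/s\le\|\bm{D}_{S}^{\top}\bm{h}\|_2/\sqrt{s}\le\eta_0$ (the last inequality is exactly where the standing assumption $(a-1)\sqrt{s}+(b+1)\ge0$, i.e. $(a\sqrt{s}+b)/(\sqrt{s}-1)\ge1$, is used), and because, using $\alpha\le1$ and $\|\bm{D}_{S}^{\top}\bm{h}\|_1\le\sqrt{s}\|\bm{D}_{S}^{\top}\bm{h}\|_2$, \eqref{lem:CrossItem.eq1} yields $\|\bm{D}_{S^c}^{\top}\bm{h}\|_1-\|\bm{D}_{S^c}^{\top}\bm{h}\|_2\le(a\sqrt{s}+b)\|\bm{D}_{S}^{\top}\bm{h}\|_2+R=(s-\sqrt{s})\eta_0$. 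Proposition \ref{NonsparseROC} then gives \eqref{e:CrossItem1} (the $\langle\bar{\bm D}\cdot,\bar{\bm D}\cdot\rangle$ term is retained throughout, as there).

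For (ii), sparsify $\bm{D}_{S^c}^{\top}\bm{h}$ at the threshold $\tau:=\eta_0/(t-1)$, setting $S_1:=\{i\notin S:|(\bm{D}^{\top}\bm{h})(i)|>\tau\}$ and $\tilde S:=S\cup S_1$ as in \eqref{def:S}. The key is the count $|\tilde S|\le ts$: Lemma \ref{LocalEstimateL1-L2}(a) applied to the $|S_1|$-sparse vector $\bm{D}_{S_1}^{\top}\bm{h}$ gives $(|S_1|-\sqrt{|S_1|})\tau<\|\bm{D}_{S_1}^{\top}\bm{h}\|_{1-2}$; Lemma \ref{LocalEstimateL1-L2}(b) with the bound $\|\bm{D}_{S^c}^{\top}\bm{h}\|_{1-2}\le(s-\sqrt{s})\eta_0$ established inside (i) gives $\|\bm{D}_{S_1}^{\top}\bm{h}\|_{1-2}\le(s-\sqrt{s})\eta_0=(t-1)(s-\sqrt{s})\tau$; and since $t>2$ forces $\sqrt{(t-1)s}<(t-1)\sqrt{s}$, i.e. $(t-1)s-\sqrt{(t-1)s}>(t-1)(s-\sqrt{s})$, these combine to $|S_1|<(t-1)s$. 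Moreover $\bm{D}_{\tilde S^c}^{\top}\bm{h}$ has $\|\cdot\|_\infty\le\tau$ and, by Lemma \ref{LocalEstimateL1-L2}(b), $\|\cdot\|_{1-2}\le(s-\sqrt{s})\eta_0\le(\lceil(t-1)s\rceil-\sqrt{\lceil(t-1)s\rceil})\tau$. Then Proposition \ref{NonsparseROC} with $\bm{u}=\bm{D}_{\tilde S}^{\top}\bm{h}$ ($ts$-sparse), $\bm{v}=\bm{D}_{\tilde S^c}^{\top}\bm{h}$, $s_1=ts$, $s_2=(t-1)s$, parameter $\tau$, yields \eqref{e:CrossItem2}.

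For (iii)--(iv) I run the usual $\bm{D}$-RIP comparison. With $\bm{w}:=\bm{D}_{\tilde S}^{\top}\bm{h}$ (in case (iv): $\tilde S=S$, $\bm{w}=\bm{D}_{S}^{\top}\bm{h}$, and use (i) and $\delta_s,\theta_{s,s},\rho_s$ throughout), the identities $\bm{D}\bm{D}^{\top}\bm{h}=\bm{h}$, $\bar{\bm D}\bm{D}^{\top}\bm{h}=\bm 0$ and Proposition \ref{prop.DROC}(ii)--(iii) give
\[
(1-\delta_{ts})\|\bm{w}\|_2^2\le\|\bm{ADw}\|_2^2+\|\bar{\bm D}\bm{w}\|_2^2=\langle\bm{ADw},\bm{Ah}\rangle-\bigl(\langle\bm{ADw},\bm{AD}\bm{D}_{\tilde S^c}^{\top}\bm{h}\rangle+\langle\bar{\bm D}\bm{w},\bar{\bm D}\bm{D}_{\tilde S^c}^{\top}\bm{h}\rangle\bigr).
\]
Bound the first term by Cauchy--Schwarz and $\|\bm{ADw}\|_2\le\sqrt{1+\delta_{ts}}\|\bm{w}\|_2$ (Proposition \ref{prop.DROC}(ii)), and the parenthesized term by \eqref{e:CrossItem2}; divide by $\|\bm{w}\|_2$, replace $\|\bm{D}_{S}^{\top}\bm{h}\|_2$ on the right by the larger $\|\bm{w}\|_2$ (as $S\subseteq\tilde S$), and move those terms left: their coefficient is at least $1-\rho_{s,t}>0$ by \eqref{RIPConditiona1add} (here one also uses $a\sqrt{s}+b\le\sqrt{s}+\alpha$, which holds for the cone inequalities of Lemmas \ref{ConeTubeconstraint-L1-L2-ASSO}--\ref{ConeTubeconstraint-L1-L2-RASSO}, where $a\le1$, $b\le\alpha$). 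Rearranging, unpacking $R$, using $\eta\ge1$ to fold $\sqrt{1+\delta_{ts}}\|\bm{Ah}\|_2$ into $\eta\|\bm{Ah}\|_2$, and finally $\|\bm{D}_S^{\top}\bm{h}\|_2\le\|\bm{w}\|_2$ once more, we reach \eqref{prop:NSP.eq1}, and analogously \eqref{prop:NSP.eq2}.

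Part (v) is a short self-contained computation: substitute the cone bound $\|\bm{D}_{S^c}^{\top}\bm{h}\|_1\le(a\sqrt{s}+b)\|\bm{D}_{S}^{\top}\bm{h}\|_2+R+\alpha\|\bm{D}_{S^c}^{\top}\bm{h}\|_2$ into $\|\bm{D}_{S^c}^{\top}\bm{h}\|_2^2\le\|\bm{D}_{S^c}^{\top}\bm{h}\|_\infty\|\bm{D}_{S^c}^{\top}\bm{h}\|_1\le(\|\bm{D}_{S}^{\top}\bm{h}\|_2/\sqrt{s})\|\bm{D}_{S^c}^{\top}\bm{h}\|_1$, obtain a quadratic inequality in $\|\bm{D}_{S^c}^{\top}\bm{h}\|_2$, solve it, and finish with $\sqrt{p+q}\le\sqrt p+\sqrt q$ (to split off the $R$-dependent part) and Young's inequality $\sqrt{pq}\le\tfrac12(\bar\varepsilon p+q/\bar\varepsilon)$ applied to that part, giving \eqref{DSC}. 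The genuinely delicate step is the count $|\tilde S|\le ts$ in (ii): a naive ``$\ell_1$-mass above threshold'' estimate would circularly require control of $\|\bm{D}_{S^c}^{\top}\bm{h}\|_2$ (which is part (v)), so one must route through the $\ell_1-\ell_2$ functional via Lemma \ref{LocalEstimateL1-L2}(a),(b) and the strict gap $\sqrt{(t-1)s}<(t-1)\sqrt{s}$, tracking constants closely enough that the orthogonality constant emerging from Proposition \ref{NonsparseROC} is exactly $\theta_{ts,(t-1)s}$; the remainder is bookkeeping with Propositions \ref{prop.DROC} and \ref{NonsparseROC} and elementary norm inequalities.
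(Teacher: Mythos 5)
Your proposal is correct and follows essentially the same route as the paper's Appendix~A proof: the cone inequality \eqref{lem:CrossItem.eq1} is converted into the $\|\cdot\|_{1-2}$ and $\|\cdot\|_\infty$ hypotheses of Proposition \ref{NonsparseROC} for (i)--(ii), the counting bound $|\tilde S\setminus S|<(t-1)s$ comes from Lemma \ref{LocalEstimateL1-L2}(a),(b) (which the paper delegates to a citation), (iii)--(iv) sandwich $\langle \bm{Ah}, \bm{AD}\bm{D}_{\tilde S}^{\top}\bm{h}\rangle$ between the $\bm{D}$-RIP lower bound and the Cauchy--Schwarz upper bound, and (v) is the same quadratic-inequality computation. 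Your explicit remark that the passage from $a\sqrt{s}+b$ to $\sqrt{s}+\alpha$ in (iii) is an extra (implicitly assumed) hypothesis is a fair observation about a step the paper leaves tacit.
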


\begin{proof}
Please see Appendix \ref{appendx1}.
\end{proof}

\subsection{Main Result Under RIP Frame }\label{s3.2}
\noindent

Now, we show  the stable recovery conditions based  $\bm{D}$-RIP for
the  $\ell_{1}-\alpha \ell_{2}$-ASSO (\ref{VectorL1-alphaL2-ASSO}) and
the  $\ell_{1}-\alpha \ell_{2}$-RASSO (\ref{VectorL1-alphaL2-RASSO}), respectively.
\begin{theorem}\label{StableRecoveryviaVectorL1-alphaL2-ASSO}
Consider $\bm { b}=\bm { Ax}+\bm {e}$ with $\|\bm {e}\|_{2}\leq \eta$.
  Let  $\hat{\bm {x}}$ be the minimizer of the  $\ell_{1}-\alpha \ell_{2}$-ASSO (\ref{VectorL1-alphaL2-ASSO}).
 The following statements hold:
 \begin{description}
     \item[(i)]
    If the  measurement matrix $\bm {A}$ satisfies \eqref{RIPConditiona1add},
  then
  \begin{align*}
&\|\hat{\bm {x}}-\bm {x}\|_2\nonumber\\
\leq&
\Bigg(\left(\sqrt{\frac{\alpha+\sqrt{s}}{\sqrt{s}}+\frac{\alpha^2}{4s}}+\frac{\alpha+1}{2\sqrt{s}}+1
\right)\left(\tau+\frac{\left(\tau(1-\rho_{s,t})+\sqrt{1+\delta_{ts}}\right)C}{C(1-\rho_{s,t})+(\sqrt{s}+\alpha)\sqrt{1+\delta_{ts}}}\right)\nonumber\\
&\hspace*{12pt}+\frac{1}{2}\left(1+\frac{C(1-\rho_{s,t})}{C(1-\rho_{s,t})+(\sqrt{s}+\alpha)\sqrt{1+\delta_{ts}}}\right)\Bigg)
2\|\bm{D}_{T^c}^{\top}\bm{x}\|_1\nonumber\\
&+\left(\left(\sqrt{\frac{\alpha+\sqrt{s}}{\sqrt{s}}+\frac{\alpha^2}{4s}}+\frac{\alpha+1}{2\sqrt{s}}+1
\right)\frac{\tau(1-\rho_{s,t})+\sqrt{1+\delta_{ts}}}{1-\rho_{s,t}}+\frac{1}{2}\right)\nonumber\\
&\hspace*{12pt}\times\left(C+\frac{(\sqrt{s}+\alpha)\sqrt{1+\delta_{ts}}}{(1-\rho_{s,t})}\right)2\lambda,
\end{align*}
where the constants $\tau$ and $C$ are as follows
\begin{align}\label{Constant.t>=3}
\begin{cases}
\tau=\frac{(\sqrt{2}+1)\theta_{ts,(t-1)s}}{\sqrt{2}(1-\rho_{s,t})}\frac{\sqrt{(t-1)s}}{(t-1)(s-\sqrt{s})},&\\
C=1+\frac{(\sqrt{2}+1)\theta_{ts,(t-1)s}}{\sqrt{2}(1-\rho_{s,t})}\frac{(\sqrt{s}+\alpha)}{(s-\sqrt{s})}\frac{\sqrt{(t-1)s}}{t-1}.
\end{cases}
\end{align}

    \item[(ii)]If the  measurement matrix $\bm {A}$ satisfies
  \eqref{RIPCondition1}, then
      \begin{align*}
\|\hat{\bm {x}}-\bm {x}\|_2
\leq&
\Bigg(\left(\sqrt{\frac{\alpha+\sqrt{s}}{\sqrt{s}}+\frac{\alpha^2}{4s}}+\frac{\alpha+1}{2\sqrt{s}}+1
\right)\left(\tau+\frac{\left(\tau(1-\rho_{s})+\sqrt{1+\delta_{s}}\right)C}{C(1-\rho_{s})+(\sqrt{s}+\alpha)\sqrt{1+\delta_{s}}}\right)\nonumber\\
&\hspace*{12pt}+\frac{1}{2}\left(1+\frac{C(1-\rho_{s})}{C(1-\rho_{s})+(\sqrt{s}+\alpha)\sqrt{1+\delta_{s}}}\right)\Bigg)
2\|\bm{D}_{T^c}^{\top}\bm{x}\|_1\nonumber\\
&+\left(\left(\sqrt{\frac{\alpha+\sqrt{s}}{\sqrt{s}}+\frac{\alpha^2}{4s}}+\frac{\alpha+1}{2\sqrt{s}}+1
\right)\frac{\tau(1-\rho_{s})+\sqrt{1+\delta_{s}}}{1-\rho_{s}}+\frac{1}{2}\right)\nonumber\\
&\hspace*{12pt}\times\left(C+\frac{(\sqrt{s}+\alpha)\sqrt{1+\delta_{s}}}{(1-\rho_{s})}\right)2\lambda,
\end{align*}
where the constants $\tau$ and $C$ are as follows
\begin{align}\label{Constant.t=2}
\begin{cases}
\tau=\frac{(\sqrt{2}+1)\theta_{s,s}}{\sqrt{2}(1-\rho_{s})}\frac{1}{\sqrt{s}-1},&\\
C=1+\frac{(\sqrt{2}+1)\theta_{s,s}}{\sqrt{2}(1-\rho_{s})}\frac{\sqrt{s}+\alpha}{\sqrt{s}-1}.
\end{cases}
\end{align}
  \end{description}

\end{theorem}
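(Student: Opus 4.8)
The plan is to derive the error bound by combining the cone constraint inequality from Lemma~\ref{ConeTubeconstraint-L1-L2-ASSO} with the auxiliary estimates in Proposition~\ref{lem:CrossItem}, so I first need to put the cone inequality into the abstract form \eqref{lem:CrossItem.eq1}. Writing $S = \text{\rm supp}(\bm{D}^{\top}\bm{h}_{\max(s)})$ and $T = \text{\rm supp}(\bm{D}^{\top}\bm{x}_{\max(s)})$, the third displayed inequality of Lemma~\ref{ConeTubeconstraint-L1-L2-ASSO} (after replacing $T$ there by $S$, which is legitimate since that lemma holds for the support of the $t$ largest analysis coefficients of $\bm{x}$, and the same argument with $\bm{h}$ in place of $\bm{x}$ yields the version localized on $S$) gives exactly a bound of the shape $\|\bm{D}_{S^c}^{\top}\bm{h}\|_1 - \alpha\|\bm{D}_{S^c}^{\top}\bm{h}\|_2 \le a\|\bm{D}_{S}^{\top}\bm{h}\|_1 + b\|\bm{D}_{S}^{\top}\bm{h}\|_2 + c\|\bm{D}_{T^c}^{\top}\bm{x}\|_1 + \eta\|\bm{Ah}\|_2 + \gamma$ with $a=1$, $b=\alpha$, $c=2$, $\eta=1$, $\gamma=0$ (and the structural condition $(a-1)\sqrt{s}+(b+1)\ge 0$ holds trivially). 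With these constants pinned down, parts (iii) and (v) of Proposition~\ref{lem:CrossItem} become directly applicable.

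Next I would chain the estimates. Part (iii) of Proposition~\ref{lem:CrossItem}, under the RIP condition \eqref{RIPConditiona1add}, bounds $\|\bm{D}_{S}^{\top}\bm{h}\|_2$ by a linear combination of $c\|\bm{D}_{T^c}^{\top}\bm{x}\|_1+\gamma$ and $\eta\|\bm{Ah}\|_2$; substituting $c=2$, $\gamma=0$, $\eta=1$ produces a bound of the form $\|\bm{D}_{S}^{\top}\bm{h}\|_2 \le \tau\,(2\|\bm{D}_{T^c}^{\top}\bm{x}\|_1) /(s-\sqrt s)\cdot(\text{stuff}) + (\tau + \sqrt{1+\delta_{ts}}/(1-\rho_{s,t}))\|\bm{Ah}\|_2$, with $\tau$ and $C$ as defined in \eqref{Constant.t>=3}. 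Then part (v) upgrades this to a bound on $\|\bm{D}_{S^c}^{\top}\bm{h}\|_2$, and since $\bm{D}$ is a tight frame, $\|\bm{h}\|_2 = \|\bm{D}^{\top}\bm{h}\|_2 \le \|\bm{D}_{S}^{\top}\bm{h}\|_2 + \|\bm{D}_{S^c}^{\top}\bm{h}\|_2$, so we get $\|\bm{h}\|_2$ controlled by $\|\bm{D}_{S}^{\top}\bm{h}\|_2$, $\|\bm{D}_{T^c}^{\top}\bm{x}\|_1$, and $\|\bm{Ah}\|_2$. The constant $\bar\varepsilon$ in part (v) I would fix by choosing it to match $\alpha+1$ (as the appearance of $(\alpha+1)/(2\sqrt s)$ in the statement suggests), collapsing the $\sqrt{(a\sqrt s+b)/\sqrt s + \alpha^2/(4s)}+(\alpha+\bar\varepsilon)/(2\sqrt s)$ prefactor to $\sqrt{(\alpha+\sqrt s)/\sqrt s + \alpha^2/(4s)}+(\alpha+1)/(2\sqrt s)$.

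The remaining ingredient is a bound on $\|\bm{Ah}\|_2$ itself. Here I would return to the fourth displayed inequality of Lemma~\ref{ConeTubeconstraint-L1-L2-ASSO}, namely $\|\bm{Ah}\|_2^2 \le 2\lambda(\|\bm{D}_{S}^{\top}\bm{h}\|_1 + 2\|\bm{D}_{T^c}^{\top}\bm{x}\|_1 + \alpha\|\bm{D}^{\top}\bm{h}\|_2 + \|\bm{Ah}\|_2)$, use $\|\bm{D}_{S}^{\top}\bm{h}\|_1 \le \sqrt s\,\|\bm{D}_{S}^{\top}\bm{h}\|_2$ and $\|\bm{D}^{\top}\bm{h}\|_2 \le \|\bm{D}_{S}^{\top}\bm{h}\|_2 + \|\bm{D}_{S^c}^{\top}\bm{h}\|_2$, and then feed in the bounds on $\|\bm{D}_{S}^{\top}\bm{h}\|_2$ (from part (iii)) and $\|\bm{D}_{S^c}^{\top}\bm{h}\|_2$ (from part (v)). This yields a quadratic inequality in $\|\bm{Ah}\|_2$ of the form $\|\bm{Ah}\|_2^2 \le P\|\bm{Ah}\|_2 + Q$, from which $\|\bm{Ah}\|_2 \le P + \sqrt Q$ or a comparable elementary consequence, giving $\|\bm{Ah}\|_2 \lesssim \lambda + \|\bm{D}_{T^c}^{\top}\bm{x}\|_1$ with constants expressed through $C$, $\tau$, $\rho_{s,t}$, $\delta_{ts}$. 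Substituting this back into the bound for $\|\bm{h}\|_2$ and collecting the coefficients of $2\|\bm{D}_{T^c}^{\top}\bm{x}\|_1$ and $2\lambda$ separately produces the stated inequality in part (i). Part (ii) is the identical argument with the choice $t=2$: Proposition~\ref{lem:CrossItem}(iv) replaces (iii), condition \eqref{RIPCondition1} replaces \eqref{RIPConditiona1add}, $\rho_s$ replaces $\rho_{s,t}$, $\delta_s$ replaces $\delta_{ts}$, and $\tau, C$ take the values in \eqref{Constant.t=2}; since the structure of Proposition~\ref{lem:CrossItem} is parallel in the two cases, essentially no new work is needed.

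The main obstacle I anticipate is the bookkeeping in the final step: tracking how the $\|\bm{Ah}\|_2$ bound, obtained from a quadratic inequality, propagates through parts (iii)/(iv) and (v) and then recombines, without losing or mislabeling any of the several constants ($\tau$, $C$, $\rho_{s,t}$, $\delta_{ts}$, the $\sqrt{1+\delta_{ts}}$ factors, the choice of $\bar\varepsilon$). In particular, the denominators of the form $C(1-\rho_{s,t}) + (\sqrt s + \alpha)\sqrt{1+\delta_{ts}}$ in the theorem statement must emerge from solving the coupled inequalities for $\|\bm{Ah}\|_2$ and $\|\bm{D}_S^{\top}\bm{h}\|_2$ simultaneously, so the algebra — not any conceptual difficulty — is where care is required. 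Everything else is a routine assembly of the tight-frame identity $\|\bm{h}\|_2 = \|\bm{D}^{\top}\bm{h}\|_2$, the triangle inequality to split over $S$ and $S^c$, and the pre-packaged estimates of Proposition~\ref{lem:CrossItem}.
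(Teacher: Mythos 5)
Your plan reproduces, essentially step for step, the paper's proof of the companion Theorem \ref{StableRecoveryviaVectorL1-alphaL2-RASSO} (the paper only writes that one out and notes Theorem \ref{StableRecoveryviaVectorL1-alphaL2-ASSO} is identical with $\gamma=0$): verify the abstract cone condition \eqref{lem:CrossItem.eq1} with $a=1$, $b=\alpha$, $c=2$, $\eta=1$, $\gamma=0$; invoke Proposition \ref{lem:CrossItem}(iii) (resp.\ (iv)) for $\|\bm{D}_{S}^{\top}\bm{h}\|_2$ and part (v) with $\bar\varepsilon=1$ for $\|\bm{D}_{S^c}^{\top}\bm{h}\|_2$; close the loop with a quadratic inequality in $\|\bm{Ah}\|_2$; and finish with $\|\bm{h}\|_2=\|\bm{D}^{\top}\bm{h}\|_2\le\|\bm{D}_{S}^{\top}\bm{h}\|_2+\|\bm{D}_{S^c}^{\top}\bm{h}\|_2$. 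The one place where your plan deviates in a way that matters is the $\|\bm{Ah}\|_2$ step: you start from the fourth displayed inequality of Lemma \ref{ConeTubeconstraint-L1-L2-ASSO}, which has already thrown away the $-\|\bm{D}_{S^c}^{\top}\bm{h}\|_1$ term, and you then propose to control $\alpha\|\bm{D}^{\top}\bm{h}\|_2$ by feeding in the part-(v) bound for $\|\bm{D}_{S^c}^{\top}\bm{h}\|_2$. That bound itself contains $\|\bm{Ah}\|_2$, so it inflates the linear coefficient of your quadratic and will not reproduce the constants $C$ and $\bar{C}=C+(\sqrt{s}+\alpha)\sqrt{1+\delta_{ts}}/(1-\rho_{s,t})$ appearing in the statement. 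The paper instead works from the first displayed inequality, keeps $-\|\bm{D}_{S^c}^{\top}\bm{h}\|_1$ on the right, and absorbs $\alpha\|\bm{D}_{S^c}^{\top}\bm{h}\|_2$ into it via $\alpha\|\cdot\|_2\le\|\cdot\|_1$, so that only $\|\bm{D}_{S}^{\top}\bm{h}\|_2$ enters the quadratic
\begin{equation*}
\|\bm{Ah}\|_2^2-2\lambda\big(1+(\sqrt{s}+\alpha)\bar{\tau}\big)\|\bm{Ah}\|_2-2\lambda\big(1+(\sqrt{s}+\alpha)\tau\big)\cdot 2\|\bm{D}_{T^c}^{\top}\bm{x}\|_1\le 0,
\end{equation*}
whose solution yields exactly the stated $C$, $\bar C$. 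With that single substitution (and a cleaner justification of the passage from $T$ to $S=\mathrm{supp}(\bm{D}^{\top}\bm{h}_{\max(s)})$ in the cone inequality, which the paper also treats as immediate), your argument is the paper's.
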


\begin{theorem}\label{StableRecoveryviaVectorL1-alphaL2-RASSO}
Consider $\bm { b}=\bm { Ax}+\bm {e}$ with $\|\bm {e}\|_{2}\leq \eta$. Let $\hat{\bm {x}}$ be the minimizer
of the  $\ell_{1}-\alpha \ell_{2}$-RASSO (\ref{VectorL1-alphaL2-RASSO}).
The following statements hold:
\begin{description}
  \item[(i)] If the  measurement matrix $\bm {A}$ satisfies the $\bm D$-RIP  with \eqref{RIPConditiona1add}, then
  \begin{align*}
&\|\hat{\bm {x}}-\bm {x}\|_2\nonumber\\
\leq&
\Bigg(\left(\sqrt{\frac{\alpha+\sqrt{s}}{\sqrt{s}}+\frac{\alpha^2}{4s}}+\frac{\alpha+1}{2\sqrt{s}}+1
\right)\left(\tau+\frac{\left(\tau(1-\rho_{s,t})+\sqrt{1+\delta_{ts}}\right)C}{C(1-\rho_{s,t})+(\sqrt{s}+\alpha)\sqrt{1+\delta_{ts}}}\right)\nonumber\\
&\hspace*{12pt}+\frac{1}{2}\left(1+\frac{C(1-\rho_{s,t})}{C(1-\rho_{s,t})+(\sqrt{s}+\alpha)\sqrt{1+\delta_{ts}}}\right)\Bigg)\nonumber\\
&\times\left(2\|\bm{D}_{T^c}^{\top}\bm{x}\|_1+\frac{(\alpha+1)^2d}{2}\frac{\lambda}{\rho}\right)\nonumber\\
&+\left(\left(\sqrt{\frac{\alpha+\sqrt{s}}{\sqrt{s}}+\frac{\alpha^2}{4s}}+\frac{\alpha+1}{2\sqrt{s}}+1
\right)\frac{\tau(1-\rho_{s,t})+\sqrt{1+\delta_{ts}}}{1-\rho_{s,t}}+\frac{1}{2}\right)\nonumber\\
&\hspace*{12pt}\times\left(C+\frac{(\sqrt{s}+\alpha)\sqrt{1+\delta_{ts}}}{(1-\rho_{s,t})}\right)2\lambda,
\end{align*}
where the constants $\tau$ and $C$ are in \eqref{Constant.t>=3}.

  \item[(ii)] If the  measurement matrix $\bm {A}$ satisfies the $\bm D$-RIP  with \eqref{RIPCondition1}, then
  \begin{align*}
\|\hat{\bm {x}}-\bm {x}\|_2
\leq&
\Bigg(\left(\sqrt{\frac{\alpha+\sqrt{s}}{\sqrt{s}}+\frac{\alpha^2}{4s}}+\frac{\alpha+1}{2\sqrt{s}}+1
\right)\left(\tau+\frac{\left(\tau(1-\rho_{s})+\sqrt{1+\delta_{s}}\right)C}{C(1-\rho_{s})+(\sqrt{s}+\alpha)\sqrt{1+\delta_{s}}}\right)\nonumber\\
&\hspace*{12pt}+\frac{1}{2}\left(1+\frac{C(1-\rho_{s})}{C(1-\rho_{s})+(\sqrt{s}+\alpha)\sqrt{1+\delta_{s}}}\right)\Bigg)\nonumber\\
&\times\left(2\|\bm{D}_{T^c}^{\top}\bm{x}\|_1+\frac{(\alpha+1)^2d}{2}\frac{\lambda}{\rho}\right)\nonumber\\
&+\left(\left(\sqrt{\frac{\alpha+\sqrt{s}}{\sqrt{s}}+\frac{\alpha^2}{4s}}+\frac{\alpha+1}{2\sqrt{s}}+1
\right)\frac{\tau(1-\rho_{s})+\sqrt{1+\delta_{s}}}{1-\rho_{s}}+\frac{1}{2}\right)\nonumber\\
&\hspace*{12pt}\times\left(C+\frac{(\sqrt{s}+\alpha)\sqrt{1+\delta_{s}}}{(1-\rho_{s})}\right)2\lambda,
\end{align*}
where the constants $\tau$ and $C$ are in \eqref{Constant.t=2}.
\end{description}

\end{theorem}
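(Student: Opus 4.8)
The plan follows the same route as for Theorem~\ref{StableRecoveryviaVectorL1-alphaL2-ASSO}; the only structural change is that the cone-type inequality is taken from Lemma~\ref{ConeTubeconstraint-L1-L2-RASSO} instead of Lemma~\ref{ConeTubeconstraint-L1-L2-ASSO}, which amounts to replacing the residual quantity $2\|\bm{D}_{T^c}^{\top}\bm{x}\|_1$ throughout by $2\|\bm{D}_{T^c}^{\top}\bm{x}\|_1+\frac{(\alpha+1)^2d}{2}\frac{\lambda}{\rho}$, the restricted-isometry bookkeeping being otherwise untouched. Write $\bm{h}=\hat{\bm{x}}-\bm{x}$, let $S=\mathrm{supp}(\bm{D}^{\top}\bm{h}_{\max(s)})$ collect the indices of the $s$ largest-magnitude entries of $\bm{D}^{\top}\bm{h}$, and let $T=\mathrm{supp}(\bm{D}^{\top}\bm{x}_{\max(s)})$. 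Discarding the term $\|\bm{Ah}\|_2^2\ge0$ in \eqref{ConeTubeconstraint-L1-L2-RASSO.eq1} and dividing by $2\lambda$ gives
\[
\|\bm{D}_{T^c}^{\top}\bm{h}\|_1\le\|\bm{D}_{T}^{\top}\bm{h}\|_1+\alpha\|\bm{D}^{\top}\bm{h}\|_2+2\|\bm{D}_{T^c}^{\top}\bm{x}\|_1+\|\bm{Ah}\|_2+\frac{(\alpha+1)^2d}{2}\frac{\lambda}{\rho}.
\]

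First I would transfer this from $T$ to $S$: since $|S|=|T|=s$ and $S$ picks out the largest-magnitude coordinates of $\bm{D}^{\top}\bm{h}$, one has $\|\bm{D}_{S^c}^{\top}\bm{h}\|_1\le\|\bm{D}_{T^c}^{\top}\bm{h}\|_1$ and $\|\bm{D}_{T}^{\top}\bm{h}\|_1\le\|\bm{D}_{S}^{\top}\bm{h}\|_1$; combining these with $\|\bm{D}^{\top}\bm{h}\|_2\le\|\bm{D}_{S}^{\top}\bm{h}\|_2+\|\bm{D}_{S^c}^{\top}\bm{h}\|_2$ and moving $\alpha\|\bm{D}_{S^c}^{\top}\bm{h}\|_2$ to the left produces exactly hypothesis \eqref{lem:CrossItem.eq1} of Proposition~\ref{lem:CrossItem} with $(a,b,c,\eta,\gamma)=\bigl(1,\alpha,2,1,\frac{(\alpha+1)^2d}{2}\frac{\lambda}{\rho}\bigr)$; the side condition $(a-1)\sqrt{s}+(b+1)\ge0$ becomes $\alpha+1\ge0$ and holds automatically.

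Next I would feed this into Proposition~\ref{lem:CrossItem}. For part~(i), under hypothesis \eqref{RIPConditiona1add} with $t\ge3$, items~(iii) and (v) yield estimates of the shape $\|\bm{D}_{S}^{\top}\bm{h}\|_2\le c_1\bigl(2\|\bm{D}_{T^c}^{\top}\bm{x}\|_1+\gamma\bigr)+c_2\|\bm{Ah}\|_2$ and $\|\bm{D}_{S^c}^{\top}\bm{h}\|_2\le\bigl(\sqrt{\frac{\alpha+\sqrt s}{\sqrt s}+\frac{\alpha^2}{4s}}+\frac{\alpha+1}{2\sqrt s}\bigr)\|\bm{D}_{S}^{\top}\bm{h}\|_2+\frac12\bigl(2\|\bm{D}_{T^c}^{\top}\bm{x}\|_1+\|\bm{Ah}\|_2+\gamma\bigr)$, where I take $\bar\varepsilon=1$ in \eqref{DSC} and $c_1,c_2$ are the $\tau,C,\rho_{s,t},\delta_{ts}$ combinations of \eqref{Constant.t>=3}; for part~(ii) one uses items~(iv) and (v) under \eqref{RIPCondition1} and the constants \eqref{Constant.t=2}. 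Since $\bm{D}\bm{D}^{\top}=\bm{I}_n$ we have $\|\hat{\bm{x}}-\bm{x}\|_2=\|\bm{D}^{\top}\bm{h}\|_2\le\|\bm{D}_{S}^{\top}\bm{h}\|_2+\|\bm{D}_{S^c}^{\top}\bm{h}\|_2$, so substituting the two estimates above bounds $\|\hat{\bm{x}}-\bm{x}\|_2$ by a constant multiple of $2\|\bm{D}_{T^c}^{\top}\bm{x}\|_1+\gamma$ plus a constant multiple of $\|\bm{Ah}\|_2$.

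Finally I would close the loop on $\|\bm{Ah}\|_2$. Dropping the nonnegative term $2\lambda\|\bm{D}_{T^c}^{\top}\bm{h}\|_1$ in \eqref{ConeTubeconstraint-L1-L2-RASSO.eq1}, bounding $\|\bm{D}_{T}^{\top}\bm{h}\|_1\le\sqrt{s}\,\|\bm{D}_{S}^{\top}\bm{h}\|_2$ and $\alpha\|\bm{D}^{\top}\bm{h}\|_2\le\alpha\bigl(\|\bm{D}_{S}^{\top}\bm{h}\|_2+\|\bm{D}_{S^c}^{\top}\bm{h}\|_2\bigr)$, and then inserting the two estimates of the previous step, one reaches a quadratic inequality $\|\bm{Ah}\|_2^2\le2\lambda\bigl(\kappa_1\|\bm{Ah}\|_2+\kappa_2(2\|\bm{D}_{T^c}^{\top}\bm{x}\|_1+\gamma)\bigr)$; solving it for $\|\bm{Ah}\|_2$ and using $\sqrt{uv}\le\frac12(u+v)$ to linearise the square root gives a bound on $\|\bm{Ah}\|_2$ that is linear in $2\lambda$ and in $2\|\bm{D}_{T^c}^{\top}\bm{x}\|_1+\gamma$, with the coefficient of $2\lambda$ equal to $C+\frac{(\sqrt s+\alpha)\sqrt{1+\delta_{ts}}}{1-\rho_{s,t}}$ (resp.\ $C+\frac{(\sqrt s+\alpha)\sqrt{1+\delta_{s}}}{1-\rho_{s}}$ in part~(ii)). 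Substituting this back into the expression of the previous step and collecting terms gives the stated inequalities. The only real difficulty is bookkeeping: one must carry the constants $\tau$, $C$, $\rho$, $\sqrt{1+\delta}$ through the two substitutions so that the final expression matches the display, and choose the free parameters ($\bar\varepsilon=1$ in \eqref{DSC} and the weight in the arithmetic--geometric mean step) so as to reproduce exactly the additive $\frac12$ terms occurring in the theorem; the new summand $\frac{(\alpha+1)^2d}{2}\frac{\lambda}{\rho}$ simply rides inside every occurrence of $\gamma$ and never enters the $\bm{D}$-RIP estimates, which is why the conclusion is precisely that of Theorem~\ref{StableRecoveryviaVectorL1-alphaL2-ASSO} after the substitution $2\|\bm{D}_{T^c}^{\top}\bm{x}\|_1\mapsto2\|\bm{D}_{T^c}^{\top}\bm{x}\|_1+\frac{(\alpha+1)^2d}{2}\frac{\lambda}{\rho}$.
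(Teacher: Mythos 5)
Your proposal is correct and follows essentially the same route as the paper's proof: the same split $\|\bm{h}\|_2\le\|\bm{D}_{S}^{\top}\bm{h}\|_2+\|\bm{D}_{S^c}^{\top}\bm{h}\|_2$, the cone inequality of Lemma~\ref{ConeTubeconstraint-L1-L2-RASSO} fed into Proposition~\ref{lem:CrossItem} with $(a,b,c,\eta,\gamma)=\bigl(1,\alpha,2,1,\tfrac{(\alpha+1)^2d\lambda}{2\rho}\bigr)$, items (iii)/(iv) and (v) with $\bar\varepsilon=1$, and the same quadratic inequality in $\|\bm{Ah}\|_2$ linearised to $\tfrac{b}{a}+\tfrac{c}{b}$. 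Your explicit transfer of the cone constraint from $T$ to $S$ is a detail the paper leaves implicit, and it is handled correctly.
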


\begin{remark}
The condition \eqref{RIPCondition1} 
reduces to
\begin{equation}\label{RIPCondition2}
\delta_{2s}<\frac{1}{\frac{\bigg(1+\frac{\sqrt{2}}{2}\bigg)(\sqrt{s}+\alpha)}{\sqrt{s}-1}+1}.
\end{equation}
It is clearly  weaker than the following  condition in \cite[Theorem 2]{ge2021dantzig}
\begin{equation*}
\delta_{2s}<\frac{1}{\sqrt{1+\frac{(\sqrt{s}+{\alpha})^2\Big(\Big(1+\frac{\sqrt{2}}{2}\Big)^2( s-\sqrt{s})+1\Big)}{s(\sqrt{s}-1)^2}}},
\end{equation*}
and
the following  condition in \cite[Theorem 3.4]{ge2021new}
\begin{equation*}
\delta_{2s}<\frac{1}{\sqrt{1+\frac{(\sqrt{s}+1)^2\Big(\Big(1+\frac{\sqrt{2}}{2}\Big)^2( s-\sqrt{s})+1\Big)}{s(\sqrt{s}-1)^2}}}.
\end{equation*}
\end{remark}

\begin{remark}
We notice that Wang and Wang \cite[Equation (6)]{wang2019improved} established the following condition
\begin{equation}\label{RIPCondition.wang2019improved}
\delta_{s}+\frac{\sqrt{s}+\sqrt{2}-1}{\sqrt{s}-1}
\theta_{s,s}<1.
\end{equation}
Though it is weaker than our condition 
\begin{equation}\label{RIPCondition3}
\delta_{s}+\frac{(\sqrt{2}+1)(\sqrt{s}+\alpha)}{\sqrt{2}(\sqrt{s}-1)}
\theta_{s,s}<1
\end{equation}
for {$s\geq2$}, their condition is for the constraint $\ell_1-\ell_2$ model rather than unconstraint $\ell_1-\ell_2$ model.
\end{remark}

\begin{remark}
Our condition \eqref{RIPCondition2} is weaker than that of \cite[Corollary 2]{geng2020Unconstrained}
\begin{equation}\label{CoherenceCondition.geng2020Unconstrained}
\mu<\frac{1}{3s+6},
\end{equation}
owing to $\mu=\delta_{2s}$.
\end{remark}

Proofs of Theorems $1$ and $2$ is similar, so we only present
the detail proof of Theorem \ref{StableRecoveryviaVectorL1-alphaL2-RASSO}.

\begin{proof}[Proof of Theorem \ref{StableRecoveryviaVectorL1-alphaL2-RASSO}]
We first show the conclusion $(\bm{i})$. By $\bm{D}\bm{D}^{\top}=\bm{I}_n$, we know
\begin{align}\label{decomposition}
\|\bm{h}\|_2=\|\bm{D}^{\top}\bm{h}\|_2=\sqrt{\|\bm{D}_{S}^{\top}\bm{h}\|_2^2+\|\bm{D}_{S^c}^{\top}\bm{h}\|_2^2}
\leq \|\bm{D}_{S}^{\top}\bm{h}\|_2+ \|\bm{D}_{S^c}^{\top}\bm{h}\|_2,
\end{align}
where $\|\bm{D}_{S}^{\top}\bm{h}\|_2$ and $\|\bm{D}_{S^c}^{\top}\bm{h}\|_2$ are needed to estimated, respectively.

We first estimate $\|\bm{D}_{S}^{\top}\bm{h}\|_2$.
 By Lemma \ref{ConeTubeconstraint-L1-L2-RASSO}, then
 the condition \eqref{lem:CrossItem.eq1} holds with $a=1,~b=\alpha,~c=2,~\eta=1,~\gamma=(\alpha+1)^{2}d\lambda/(2\rho)$.
 Using Proposition \ref{lem:CrossItem} $(iii)$, one has
 \begin{align}\label{prop:NSP.eq1cge}
\|\bm{D}_{S}^{\top}\bm{h}\|_2\leq&\frac{(\sqrt{2}+1)\theta_{ts,(t-1)s}}{\sqrt{2}(1-\rho_{s,t})}
\frac{\sqrt{\lceil(t-1)s\rceil}}{(t-1)(s-\sqrt{s})}
\Big(2\|\bm{D}_{T^c}^{\top}\bm{x}\|_1+\frac{(\alpha+1)^{2}d\lambda}{2\rho}\Big)
\nonumber\\
&+
\Big(\frac{(\sqrt{2}+1)\theta_{ts,(t-1)s}}{\sqrt{2}(1-\rho_{s,t})}\frac{\sqrt{\lceil(t-1)s\rceil}}
{(t-1)(s-\sqrt{s})}+\frac{\sqrt{1+\delta_{ts}}}{1-\rho_{s,t}}\Big)\|\bm{Ah}\|_{2}\nonumber\\
=&
\tau
\Big(2\|\bm{D}_{T^c}^{\top}\bm{x}\|_1+\frac{(\alpha+1)^{2}d\lambda}{2\rho}\Big)
+\bar{\tau}\|\bm{Ah}\|_{2},
\end{align}
where
\begin{align}\label{tau}
\begin{cases}
\tau=\frac{(\sqrt{2}+1)\theta_{ts,(t-1)s}}{\sqrt{2}(1-\rho_{s,t})}\frac{\sqrt{\lceil(t-1)s\rceil}}{(t-1)(s-\sqrt{s})},&\\
\bar{\tau}=\tau+\frac{\sqrt{1+\delta_{ts}}}{1-\rho_{s,t}}.
\end{cases}
\end{align}

In order to estimate $\|\bm{D}_{S}^{\top}\bm{h}\|_2$, we need an upper bound of $\|\bm{Ah}\|_{2}$. Using Lemma \ref{ConeTubeconstraint-L1-L2-RASSO}, we derive
\begin{align*}
&\|\bm{Ah}\|_2^2-2\lambda\|\bm{Ah}\|_2\nonumber\\
&\leq2\lambda\left(\|\bm{D}_{S}^{\top}\bm{h}\|_1+2\|\bm{D}_{T^c}^{\top}\bm{x}\|_1+\frac{(\alpha+1)^2d\lambda}{2\rho}+\alpha\|\bm{D}^{\top }\bm{h}\|_2-\|\bm{D}_{S^c}^{\top}\bm{h}\|_1\right)\nonumber\\
&\leq2\sqrt{s}\lambda\|\bm{D}_{S}^{\top}\bm{h}\|_2
+2\lambda\left(2\|\bm{D}_{T^c}^{\top}\bm{x}\|_1+\frac{(\alpha+1)^2d\lambda}{2\rho}+\alpha\|\bm{D}_{S}^{\top}\bm{h}\|_2+\alpha\|\bm{D}_{S^c}^{\top }\bm{h}\|_2-\|\bm{D}_{S^c}^{\top}\bm{h}\|_1\right)\nonumber\\
&\overset{(a)}{\leq} 2\lambda(\sqrt{s}+\alpha)\|\bm{D}_{S}^{\top}\bm{h}\|_2+2\lambda\left(2\|\bm{D}_{T^c}^{\top }\bm{x}\|_1+\frac{(\alpha+1)^2d\lambda}{2\rho}\right)\nonumber\\
&\overset{(b)}{\leq}2\lambda(\sqrt{s}+\alpha)\left(\bar{\tau}\|\bm{Ah}\|_2+\tau\left(2\|\bm{D}_{T^c}^{\top }\bm{x}\|_1+\frac{(\alpha+1)^2d\lambda}{2\rho}\right)\right)\nonumber\\
&\hspace*{12pt}+2\lambda\left(2\|\bm{D}_{T^c}^{\top}\bm{x}\|_1+\frac{(\alpha+1)^2d\lambda}{2\rho}\right)\nonumber\\
&=2\lambda(\sqrt{s}+\alpha)\bar{\tau}\|\bm{Ah}\|_2+2\lambda(1+(\sqrt{s}+\alpha)\tau)\left(2\|\bm{D}_{T^c}^{\top }\bm{x}\|_1+\frac{(\alpha+1)^2d\lambda}{2\rho}\right),
\end{align*}
where (a) follows from $\alpha\|\bm{D}_{S^c}^{\top}\bm{h}\|_2\leq\|\bm{D}_{S^c}^{\top}\bm{h}\|_2\leq\|\bm{D}_{S^c}^{\top}\bm{h}\|_1$,
(b) is because of \eqref{prop:NSP.eq1cge}. Thus
\begin{align*}
\|\bm{Ah}\|_2^2-2\lambda(1+(\sqrt{s}+\alpha)\bar{\tau})\|\bm{Ah}\|_2-2\lambda(1+(\sqrt{s}+\alpha)\tau )\left(2\|\bm{D}_{T^c}^{\top }\bm{x}\|_1+\frac{(\alpha+1)^2d\lambda}{2\rho}\right)\leq0.
\end{align*}
By the fact that the second order inequality $aX^2-bX-c\leq0$ for $a,b,c>0$ has the solution
$$X\leq \frac{b+\sqrt{b^2+4ac}}{2a}\leq\frac{b+\sqrt{(b+2ac/b)^2}}{2a}
=\frac{b}{a}+\frac{c}{b},$$
then
\begin{align}\label{Ah}
\|\bm{Ah}\|_2
&\leq(1+(\sqrt{s}+\alpha)\bar{\tau})2\lambda+\frac{(1+(\sqrt{s}+\alpha)\tau )\left(2\|\bm{D}_{T^c}^{\top }\bm{x}\|_1+\frac{(\alpha+1)^2d\lambda}{2\rho}\right)}{1+(\sqrt{s}+\alpha)\bar{\tau}}\nonumber\\
&=2 \bar{C}\lambda+\frac{C}{\bar{C}}\Big(2\|\bm{D}_{T^c}^{\top}\bm{x}\|_1+\frac{(\alpha+1)^2d\lambda}{2\rho}\Big),
\end{align}
where
\begin{align}\label{C.constant}
\begin{cases}
C=1+(\sqrt{s}+\alpha)\tau
=1+\frac{(\sqrt{2}+1)\theta_{ts,(t-1)s}}{\sqrt{2}(1-\rho_{s,t})}\frac{(\sqrt{s}+\alpha)}{(s-\sqrt{s})}\frac{\sqrt{(t-1)s}}{t-1},&\\
\bar{C}=1+(\sqrt{s}+\alpha)\bar{\tau}=C + \frac{(\sqrt{s}+\alpha)\sqrt{1+\delta_{ts}}}{1-\rho_{s,t}}.
\end{cases}
\end{align}

Combining \eqref{prop:NSP.eq1cge} with \eqref{Ah}, one has
\begin{eqnarray}\label{StableRecoveryviaVectorL1-alphaL2-ASSO.eq1}
&&\|\bm{D}_{S}^{\top}\bm{h}\|_2\nonumber\\
&&\leq\tau \left(2\|\bm{D}_{T^c}^{\top}\bm{x}\|_1+\frac{(\alpha+1)^2d}{2}\frac{\lambda}{\rho}\right)+\bar{\tau}
\left(2\lambda \bar{C}+\frac{C}{\bar{C}}\Big(2\|\bm{D}_{T^c}^{\top}\bm{x}\|_1+\frac{(\alpha+1)^2d\lambda}{2\rho}\Big)\right)\nonumber\\
&&=\left(\tau +\frac{C}{\bar{C}}\bar{\tau}\right)\left(2\|\bm{D}_{T^c}^{\top}\bm{x}\|_1+\frac{(\alpha+1)^2d}{2}\frac{\lambda}{\rho}\right)
+2\bar{\tau}\bar{C}\lambda.
\end{eqnarray}

Next, we consider an upper bound of $\|\bm{D}_{S^c}^{\top}\bm{h}\|_2$.
And from Proposition \eqref{lem:CrossItem} $(v)$ with $\bar{\varepsilon}=1$, it follows that
\begin{align}\label{DSCge}
&\|\bm{D}_{S^c}^{\top}\bm{h}\|_2\nonumber\\
&\leq\Bigg(\sqrt{\frac{\sqrt{s}+\alpha}{\sqrt{s}}
+\frac{\alpha^2}{4s}}+\frac{\alpha+1}{2\sqrt{s}}\Bigg)\|\bm{D}_{S}^{\top}\bm{h}\|_2
+\frac{1}{2}\Big(2\|\bm{D}_{T^c}^{\top}\bm{x}\|_1+\|\bm{Ah}\|_2+\frac{(\alpha+1)^{2}d\lambda}{2\rho}\Big).
\end{align}
Similarly, combining \eqref{Ah} with \eqref{StableRecoveryviaVectorL1-alphaL2-ASSO.eq1}, \eqref{DSCge} reduces to
\begin{align}\label{StableRecoveryviaVectorL1-alphaL2-ASSO.eq2}
&\|\bm{D}_{S^c}^{\top}\bm{h}\|_2
\leq\Bigg(\sqrt{\frac{\alpha+\sqrt{s}}{\sqrt{s}}+\frac{\alpha^2}{4s}}+\frac{\alpha+1}{2\sqrt{s}}\Bigg)\|\bm{D}_{S}^{\top}\bm{h}\|_2
+\frac{1}{2}\left(2\|\bm{D}_{T^c}^{\top}\bm{x}\|_1+\frac{(\alpha+1)^2d}{2}\frac{\lambda}{\rho}\right)\nonumber\\
&\hspace*{12pt}+\frac{1}{2}\Big(2\bar{C} \lambda +\frac{C}{\bar{C}}\Big(2\|\bm{D}_{T^c}^{\top }\bm{x}\|_1+\frac{(\alpha+1)^2d\lambda}{2\rho}\Big)\Big)\nonumber\\
&\leq\Bigg(\sqrt{\frac{\alpha+\sqrt{s}}{\sqrt{s}}+\frac{\alpha^2}{4s}}+\frac{\alpha+1}{2\sqrt{s}}
\Bigg)\|\bm{D}_{S}^{\top}\bm{h}\|_2
+\frac{1}{2}\bigg(1+\frac{C}{\bar{C}}\bigg)\left(2\|\bm{D}_{T^c}^{\top}\bm{x}\|_1+\frac{(\alpha+1)^2d}{2}\frac{\lambda}{\rho}\right)\nonumber\\
&\hspace*{12pt}+\bar{C}\lambda.
\end{align}

Therefore, substituting the estimation of $\|\bm{D}_{S}^{\top}\bm{h}\|_2$ and $\|\bm{D}_{S^c}^{\top}\bm{h}\|_2$  into \eqref{decomposition}, one has
\begin{align*}
\|\bm {h}\|_2
&\overset{(a)}{\leq}\Bigg(\sqrt{\frac{\alpha+\sqrt{s}}{\sqrt{s}}+\frac{\alpha^2}{4s}}+\frac{\alpha+1}{2\sqrt{s}}+1
\Bigg)\|\bm{D}_{S}^{\top}\bm{h}\|_2\nonumber\\
&\hspace*{12pt}+\frac{1}{2}\bigg(1+\frac{C}{\bar{C}}\bigg)\left(2\|\bm{D}_{T^c}^{\top}\bm{x}\|_1+\frac{(\alpha+1)^2d}{2}\frac{\lambda}{\rho}\right)
+\bar{C}\lambda\nonumber\\
&\overset{(b)}{\leq}\Bigg(\sqrt{\frac{\alpha+\sqrt{s}}{\sqrt{s}}+\frac{\alpha^2}{4s}}+\frac{\alpha+1}{2\sqrt{s}}+1
\Bigg)\left(\left(\tau +\bar{\tau}\frac{C}{\bar{C}}\right)\left(2\|\bm{D}_{T^c}^{\top}\bm{x}\|_1+\frac{(\alpha+1)^2d}{2}\frac{\lambda}{\rho}\right)
+\bar{\tau}\bar{C}2\lambda\right)\nonumber\\
&\hspace{24pt}+\frac{1}{2}\bigg(1+\frac{C}{\bar{C}}\bigg)\left(2\|\bm{D}_{T^c}^{\top}\bm{x}\|_1+\frac{(\alpha+1)^2d}{2}\frac{\lambda}{\rho}\right)
+\bar{C}\lambda\nonumber\\
&=
\Bigg(\Bigg(\sqrt{\frac{\alpha+\sqrt{s}}{\sqrt{s}}+\frac{\alpha^2}{4s}}+\frac{\alpha+1}{2\sqrt{s}}+1
\Bigg)\left(\tau +\bar{\tau}\frac{C}{\bar{C}}\right)+\frac{1}{2}\bigg(1+\frac{C}{\bar{C}}\bigg)\Bigg)\nonumber\\
&\hspace*{12pt}\times\left(2\|\bm{D}_{T^c}^{\top}\bm{x}\|_1+\frac{(\alpha+1)^2d}{2}\frac{\lambda}{\rho}\right)
+\left(\left(\sqrt{\frac{\alpha+\sqrt{s}}{\sqrt{s}}+\frac{\alpha^2}{4s}}+\frac{\alpha+1}{2\sqrt{s}}+1
\right)\bar{\tau}+\frac{1}{2}\right)2\bar{C}\lambda\nonumber\\
&=:E_{1}\left(2\|\bm{D}_{T^c}^{\top}\bm{x}\|_1+\frac{(\alpha+1)^2d}{2}\frac{\lambda}{\rho}\right)
+E_{2}2\lambda,
\end{align*}
where (a) is due to \eqref{StableRecoveryviaVectorL1-alphaL2-ASSO.eq2},  and (b) is from \eqref{StableRecoveryviaVectorL1-alphaL2-ASSO.eq1}.
Recall the definition of the constants $\tau,\bar{\tau}$ in \eqref{tau} and $C,\bar{C}$ in \eqref{C.constant},
we get
\begin{align}
&E_{1}
=\Bigg(\sqrt{\frac{\alpha+\sqrt{s}}{\sqrt{s}}+\frac{\alpha^2}{4s}}+\frac{\alpha+1}{2\sqrt{s}}+1\Bigg)\nonumber\\
&\hspace*{12pt}\times\left(\tau +\left(\tau+\frac{\sqrt{1+\delta_{ts}}}{1-\rho_{s,t}}\right)\left(\frac{C(1-\rho_{s,t})}{C(1-\rho_{s,t})+(\sqrt{s}+\alpha)\sqrt{1+\delta_{ts}}}\right)\right)
\nonumber\\
&+\frac{1}{2}\bigg(1+\frac{C(1-\rho_{s,t})}{C(1-\rho_{s,t})+(\sqrt{s}+\alpha)\sqrt{1+\delta_{ts}}}\bigg)\nonumber\\
&=\Bigg(\sqrt{\frac{\alpha+\sqrt{s}}{\sqrt{s}}+\frac{\alpha^2}{4s}}+\frac{\alpha+1}{2\sqrt{s}}+1
\Bigg)\left(\tau +\frac{\big(\tau(1-\rho_{s,t})+\sqrt{1+\delta_{ts}}\big)C}{C(1-\rho_{s,t})+(\sqrt{s}+\alpha)\sqrt{1+\delta_{ts}}}\right)\nonumber\\
&+\frac{1}{2}\bigg(1+\frac{C(1-\rho_{s,t})}{C(1-\rho_{s,t})+(\sqrt{s}+\alpha)\sqrt{1+\delta_{ts}}}\bigg),\nonumber\\
E_{2}&=\left(\left(\sqrt{\frac{\alpha+\sqrt{s}}{\sqrt{s}}+\frac{\alpha^2}{4s}}+\frac{\alpha+1}{2\sqrt{s}}+1
\right)\frac{\tau(1-\rho_{s,t})+\sqrt{1+\delta_{ts}}}{1-\rho_{s,t}}+\frac{1}{2}\right)\nonumber\\
&\hspace*{12pt}\times\left(C+\frac{(\sqrt{s}+\alpha)\sqrt{1+\delta_{ts}}}{1-\rho_{s,t}}\right).
\end{align}
Therefore, we complete the proof of item ($i$).

($ii$) We can prove the conclusion ($ii$) in a similar way only by replacing Proposition \ref{lem:CrossItem} (iii) with Proposition \ref{lem:CrossItem} (iv).

\end{proof}

\section{Numerical Algorithm }\label{s4}
\noindent

In this section, we develop an efficient algorithm to solve the $\ell_1-\alpha\ell_2$-ASSO
\eqref{VectorL1-alphaL2-ASSO}.

The projected fast iterative soft-thresholding algorithm (pFISTA) for tight frames  in  \cite{liu2016projected} is
suited solving the $\ell_1$-analysis problem \eqref{VectorL1-ana} for 
 MRI reconstruction.
 Compared to the common iterative reconstruction methods such as iterative soft-thresholding algorithm (ISTA) in \cite{daubechies2004iterative} and  fast iterative soft-thresholding algorithm (FISTA) in \cite{beck2009fast},
the pFISTA algorithm  achieves better reconstruction and  converges faster.
Inspired by the pFISTA algorithm,
 an efficient algorithm is  introduced to solve the nonconvex-ASSO problem \eqref{VectorL1-alphaL2-ASSO} in this section.

For the given frame $\bm{D}\in\mathbb{R}^{n\times d}$,
there are many dual frames.
Here, we only consider its canonical dual frame
\begin{equation}\label{canonicaldual}
\bm{\Phi}=(\bm{D}\bm{D}^{\top})^{-1}\bm{D},
\end{equation}
which  satisfies
$$
\bm{\Phi}\bm{D}^{\top}=\bm{I}_{n},
$$
and  is also the pseudo-inverse of $\bm{D}$ \cite{elad2007analysis}.
Taking $\bm{z}=\bm{D}^{\top}\bm{x}$,
the $\ell_1-\alpha\ell_2$-ASSO
\eqref{VectorL1-alphaL2-ASSO} is written as
\begin{equation}\label{VectorL1-alphaL2-ASSO.equi1}
\min_{\bm{ z}\in\text{Range}(\bm{D}^{\top})}~\lambda(\|\bm{z}\|_{1}-\alpha\|\bm{z}\|_{2})+\frac{1}{2}\|\bm{ A}\bm{\Phi}\bm{  z}-\bm{ b}\|_{2}^{2}.
\end{equation}

Now, we  solve  \eqref{VectorL1-alphaL2-ASSO.equi1} by the idea of  the pFISTA algorithm.
 First, we introduce an indicator function
\begin{equation*}
\chi(\bf{z})=
\begin{cases}
\bm{0},~\bm{z}\in\text{Range}(\bm{D}^{\top}),&\\
+\bm{\infty},~\text{otherwise},
\end{cases}
\end{equation*}
then  an equivalent unconstrained model of \eqref{VectorL1-alphaL2-ASSO.equi1} is
\begin{equation}\label{VectorL1-alphaL2-ASSO.equi2}
\min_{\bm{ z}\in\mathbb{R}^{d}}~\lambda(\|\bm{ z}\|_{1}-\alpha\|\bm{ z}\|_{2})+\chi(\bm{z})+\frac{1}{2}\|\bm{ A}\bm{\Phi}\bm{ z}-\bm{ b}\|_{2}^{2}.
\end{equation}
Let
$$
g(\bm{z})=\lambda(\|\bm{ z}\|_{1}-\alpha\|\bm{ z}\|_{2})+\chi(\bm{z}),~~h(\bm{z})=\frac{1}{2}\|{\bm  A}\bm{\Phi}{\bm  z}-{\bm  b}\|_{2}^{2},
$$
then \eqref{VectorL1-alphaL2-ASSO.equi2} can be rewritten as
\begin{equation}\label{OptimizationProblem}
\min_{{\bm  z}\in\mathbb{R}^{d}}~g(\bm{z})+h(\bm{z}),
\end{equation}
where $g$ is a non-smooth function, and $h$ is a smooth
function with a $\ell_{h}$-Lipschitz continuous gradient ($\ell_{h}>0$), i.e
$$
\|\nabla h(\bm{z}_1)- \nabla h(\bm{z}_2)\|_{2}\leq\ell_{h}\|\bm{z}_1-\bm{z}_2\|_{2}.
$$

Next, we solve \eqref{VectorL1-alphaL2-ASSO.equi2} via ISTA by incorporating the proximal mapping
\begin{align}\label{ISTA1}
\bm{z}^{k+1}&=\text{Prox}_{\gamma g}(\bm{z}^{k}-\gamma \nabla h(\bm{z}^{k}))\nonumber\\
&=\arg\min_{\bm{ z}\in\text{Range}(\bm{D}^{\top})}\gamma\lambda(\|\bm{ z}\|_{1}-\alpha\|\bm{z}\|_{2})
+\frac{1}{2}\left\|\bm{z}-\left(\bm{z}^{k}-\gamma \nabla h(\bm{z}^{k})\right)\right\|_{2}^{2},
\end{align}
where $\gamma$ is the step size
and $\text{Prox}_{\gamma g}(\cdot)$ is the proximal operator of
the function $\gamma g$.
The  proximal operator of $\mu_1\ell_1-\mu_2\ell_2$ in \cite[Proposition 7.1]{liu2017further} and \cite[Section 2]{lou2018fast}
is
\begin{equation}\label{L1L2-ProximalMap4}
\text{Prox}_{\lambda(\ell_1-\alpha\ell_2)}({\bm  b})=\arg\min_{\bm{x}}\frac{1}{2}\|\bm{x}-\bm{b}\|_2^2+\lambda(\|\bm{  x}\|_1-\alpha\|\bm{ x}\|_2), \ \ \ \ 0<\alpha\leq 1,
\end{equation}
which has an explicit formula for ${\bm  x}$.
And the solution in \eqref{L1L2-ProximalMap4} is unique  in some special cases. Therefore the problem \eqref{ISTA1} is just as follows
\begin{align}\label{ISTA1.2}
\bm{z}^{k+1}
=\text{Proj}_{\text{Range}(\bm{D}^{\top})}\left(\text{Prox}_{\lambda(\ell_1-\alpha\ell_2)}\left(\left(\bm{z}^{k}-\gamma \nabla h(\bm{z}^{k})\right)\right)\right),
\end{align}
where $\text{Proj}_{\mathcal{C}}(\cdot)$ is a projection operator on the set $\mathcal{C}$.

So far, the  original analysis model
\eqref{VectorL1-alphaL2-ASSO} has been converted into a much simpler form \eqref{ISTA1}.
However, it is a challenge to find an analytical solution of \eqref{ISTA1} since there is  the constraint ${\bm  z}\in\text{Range}(\bm{D}^{\top})$.
Note that the orthogonal projection operator on $\text{Range}(\bm{D}^{\top})=\{\bm{\Phi}\bm{z}:\bm{z}\in\mathbb{R}^{d}\}$ is
$$
\text{Proj}_{\text{Range}(\bm{D}^{\top})}(\bm{z})=\bm{D}^{\top}\bm{\Phi}\bm{z}.
$$
Therefore, we propose to replace \eqref{ISTA1} by
\begin{align}\label{ISTA2}
\begin{cases}
\tilde{\bm{z}}^{k+1}=\text{Prox}_{\lambda\gamma(\ell_1-\alpha\ell_2)}\left(\bm{z}^{k}-\gamma \bm{\Phi}^{\top}\bm{A}^{\top}(\bm{A}\bm{\Phi}\bm{z}^{k}-\bm{b})\right),&\\
\bm{z}^{k+1}=\text{Proj}_{\text{Range}(\bm{D}^{\top})}(\tilde{\bm{z}}^{k+1})=\bm{D}^{\top}\bm{\Phi}\tilde{\bm{z}}^{k+1}.&
\end{cases}
\end{align}
By the fact that $\bm{\Phi}\bm{D}^{\top}=\bm{I}_{n}$ and \eqref{canonicaldual}, the two steps in \eqref{ISTA2} can be recast as
\begin{align}\label{ISTA3}
\tilde{\bm{z}}^{k+1}=\text{Prox}_{\lambda\gamma(\ell_1-\alpha\ell_2)}\left(\bm{D}^{\top}\left(\bm{\Phi}\tilde{\bm{z}}^{k}-\gamma (\bm{D}\bm{D}^{\top})^{-1}\bm{A}^{\top}(\bm{A}\bm{\Phi}\tilde{\bm{z}}^{k}-\bm{b})\right)\right).
\end{align}

Now, let us turn our attention to how to get the formulation of $\bm{x}^{k+1}$.
By substituting the coefficients $\bm{x}^{k}=\bm{\Phi}\bm{z}^{k}=\bm{\Phi}\bm{D}^{\top}\bm{\Phi}\tilde{\bm{z}}^{k}=
\bm{\Phi}\tilde{\bm{z}}^{k}$ into \eqref{ISTA3}, one has
\begin{align}\label{ISTA4}
\bm{x}^{k+1}=\bm{\Phi}\text{Prox}_{\lambda\gamma(\ell_1-\alpha\ell_2)}\left(\bm{D}^{\top}\left(\bm{x}^{k}-\gamma (\bm{D}\bm{D}^{\top})^{-1}\bm{A}^{\top}(\bm{A}\bm{x}^{k}-\bm{b})\right)\right),
\end{align}
which is a solution of  the $\ell_1-\alpha\ell_2$-ASSO \eqref{VectorL1-alphaL2-ASSO}.
For a tight frame, we have  $\bm{\Phi}=\bm{D}$ and $\bm{D}\bm{D}^{\top}=\bm{I}_{n}$, then \eqref{ISTA4} reduces to
\begin{align}\label{ISTA5}
\bm{x}^{k+1}=\bm{D}\text{Prox}_{\lambda\gamma(\ell_1-\alpha\ell_2)}\left(\bm{D}^{\top}\left(\bm{x}^{k}-\gamma \bm{A}^{\top}(\bm{A}\bm{x}^{k}-\bm{b})\right)\right).
\end{align}

Based on all the above derivations, the
 efficient algorithm of the $\ell_1-\alpha\ell_2$-ASSO \eqref{VectorL1-alphaL2-ASSO}
 is proposed and summarized
in Algorithm $1$ as follows.

\medskip
\noindent\rule[0.25\baselineskip]{\textwidth}{1pt}
\label{al:pFISTA}
\centerline {\bf Algorithm $1$: the $\ell_1-\alpha\ell_2$-pFISTA for solving \eqref{VectorL1-alphaL2-ASSO}}\\
{\bf Input:}\ ${\bm  A}$,${\bm  D}$, ${\bm  b}$,  $0<\alpha\leq 1$,  $\lambda$,  $\gamma$. \\
{\bf Initials:}\  $\bm{ x}=\bm{x}^0$, $\bm{ y}=\bm{y}^0=\bm{x}^0$, $t=t^0=1$, $k=0$.\\
{\bf Circulate} Step 1--Step 4 until ``some stopping criterion is satisfied":  

 ~{\bf Step 1:} Update ${\bm  x}^{k+1}$ according to
\begin{equation}\label{FISTA1}
\bm{x}^{k+1}=\bm{D}~\text{Prox}_{\lambda\gamma(\ell_1-\alpha\ell_2)}\left(\bm{D}^{\top}\left(\bm{y}^{k}-\gamma \bm{A}^{\top}(\bm{A}\bm{y}^{k}-\bm{b})\right)\right).
\end{equation}

~{\bf Step 2:} Update ${\bm  t}^{k+1}$ as follows
\begin{equation}\label{FISTA2}
t_{k+1}=\frac{1+\sqrt{1+4t_{k}^2}}{2}.
\end{equation}

~{\bf Step 3:} Update $\bm{y}^{k+1}$ as follows
\begin{equation}\label{FISTA3}
\bm{y}_{k+1}=\bm{x}^{k+1}+\frac{t_k-1}{t_{k+1}}(\bm{x}^{k+1}-\bm{x}^{k}).
\end{equation}

~{\bf Step 4:} Update $k$ to $k+1$.\\
{\bf Output:}  $\hat{\bm {x}}$.\\
\noindent\rule[0.25\baselineskip]{\textwidth}{1pt}

\begin{remark}
In our algorithm, we set the total iterated number $K=1000$, and take the stopping criterion $\|\bm{x}^{k+1}-\bm{x}^{k}\|_{2}/\|\bm{x}^{k}\|_{2}<\epsilon$ with the tolerate error $\epsilon=10^{-6}$.
\end{remark}
\section{Numerical Experiments}\label{s5}
\noindent

In this section,
we  demonstrate the performance of the $\ell_1-\alpha\ell_2$-ASSO \eqref{VectorL1-alphaL2-ASSO} via simulation experiments and compare the proposed  $\ell_1-\alpha\ell_2$-ASSO \eqref{VectorL1-alphaL2-ASSO} to
the state-of-art the $\ell_1$-analysis and $\ell_p$-analysis minimization methods.

All experiments were performed under Windows
Vista Premium and MATLAB v7.8 (R2016b) running on a Huawei laptop with an Intel(R)
Core(TM)i5-8250U CPU at 1.8 GHz and 8195MB RAM of memory.

\subsection{Signal Reconstruct Under Tight Frame}\label{s5.1}
\noindent

In this subsection, we evaluate the performance of  the $\ell_1-\alpha\ell_2$-ASSO \eqref{VectorL1-alphaL2-ASSO} and compared our method with the following models:
\begin{equation}\label{VectorL1-ABP}
\min_{{\bm  x}\in\mathbb{R}^n}~\lambda\|{\bm D}^{\top}{\bm x}\|_{p}^{p}~\text{subject~to~}{\bm  A}{\bm  x}={\bm  b}.
\end{equation}
When $p=1$, the method \eqref{VectorL1-ABP} is Analysis Basis Pursuit, which is solved by CVX package (see \cite{genzel2021analysis,nam2013cosparse}).   When  $0<p<1$,  Lin and Li \cite{lin2016restricted} present  an algorithm based on iteratively reweighted least
squares (IRLS) to solve the $\ell_{p}$-analysis model \eqref{VectorL1-ABP}.
Many papers have showed that IRLS method with smaller value of $p$ (for example $p= 0.1, 0.5$) perform better than that of larger value of $p$ (for example $p= 0.7, 0.9$). In addition, $p=0.5$ gave slightly higher success frequency than $p=0.1$. Please refer to \cite[Section 4]{chartrand2008restricted}, \cite[Section 8.1]{daubechies2010iteratively}, and \cite[Section 4.1]{lai2013improved}. Therefore,
we only compare our method with the $\ell_{p}$ model  \eqref{VectorL1-ABP}  for $p=0.5$.

First of all, we roughly follow a construction of tight random frames from \cite{nam2013cosparse}:
\begin{enumerate}
\item[(i)]First, draw a $n\times d$ Gaussian random matrix $\bm{E}$ and compute its singular value decomposition $\bm{E}=\bm{U}\bm{\Sigma}\bm{V}^{\top}$.
\item[(ii)] If $n\leq d$, we replace $\bm{\Sigma}$ by the matrix $\tilde{\bm{\Sigma}}=[\tau\bm{I}_n, \bm{0}]\in\mathbb{R}^{n\times d}$ with $\tau=\sqrt{d/n}$, which yields a tight frame $\bm{D}=\bm{U}\tilde{\bm{\Sigma}}\bm{V}^{\top}$. If $n> d$, we replace $\bm{\Sigma}$ by the matrix $\tilde{\bm{\Sigma}}=[\tau\bm{I}_d, \bm{0}]^{\top}\in\mathbb{R}^{n\times d}$ with $\tau=\sqrt{d/n}$, which yields a tight frame $\bm{D}=\bm{U}\tilde{\bm{\Sigma}}\bm{V}^{\top}$.\\
\end{enumerate}

Nam et.al. \cite{nam2013cosparse} also showed us how to generate cosparse signal $\bm{x}_{0}\in\mathbb{R}^{n}$.
We adopt their scheme and produce an $s$-cosparse signal in the following way:
\begin{enumerate}
\item[(a)] First, choose $s$ rows of the analysis operator $\bm{D}^{\top}=\bm{\Omega}\in\mathbb{R}^{n\times d}$ at random, and those are
denoted by an index set $|S|$ (thus, $|S|=s$).
\item[(b)] Second, form an arbitrary signal $\bm{y}$ in $\mathbb{R}^n$--e.g., a random vector with Gaussian i.i.d. entries.
\item[(c)] Then, project $\bm{y}$ onto the orthogonal complement of the subspace generated by the rows of $\bm{\Omega}$ that are indexed by $S$, this way getting the cosparse signal $\bm{x}_0$. Explicitly, $\bm{x}_0=\left(\bm{I}_n-\bm{\Omega}_{S}^{\top }(\bm{\Omega}_{S}\bm{\Omega}_{S}^{\top})^{-1}\bm{\Omega}_{S}\right)\bm{y}$. In fact, $\bm{D}^{\top}\bm{x}_0=[\bm{0};\bm{\Omega}_{S^c}\bm{x}_0]\in\mathbb{R}^{d}$ is $(d-s)$-sparse.
\end{enumerate}
Alternatively, one could first find a
basis for the orthogonal complement and then generate a random coefficient vector for the basis.
In the experiment, the entries of $\bm{A}\in\mathbb{R}^{m\times n}$ are drawn independently from the normal distribution. The observation is obtained by $\bm{b} =\bm{A}\bm{x}_{0}$.

Let $\hat{\bm x}$ be the reconstructed signal.
We record the success rate over $100$  independent trials. The recovery is regarded as successful if
\begin{equation}\label{rel.err}
\text{rel-err}(\hat{\bm  x},{\bm  x}_0)=\frac{\|\hat{\bm  x}-{\bm  x}_0\|_2}{\|{\bm  x}_0\|_2}<\varepsilon,
\end{equation}
for $\varepsilon=10^{-2}$.
We display success rate of different algorithms to recover sparse signals over $100$ repeated trials for  different cosparsity $s$.

For fairness of comparison, the key parameters of our proposed method and compared algorithms have been tuned in all experiments according to \cite{nam2013cosparse}. In all cases, the signal dimension $n$ is set to 100. We then varied the number $m$ of measurements, the cosparsity $\ell$ of the target signal, and the operator size $d$ according to the following formulae:
\begin{equation}\label{cosparsity.setup}
m=\varrho n, d=\varsigma n,  \ell=n-\rho m
\end{equation}
where $0<\varrho \leq 1$, $\varsigma\geq 1$, $0<\rho\leq1$. Here we take  $\varsigma=1, \rho=\{0.05,0.10,0.15,\ldots,1\}$ and $\varrho=\{0.05,0.1,0.15,\ldots,1\}$, i.e., the measurement $m=\{5,10,15,\dots,100\}$.

In Figure \ref{figure.SuccnumberPhaseTrransition-Comparion}, we plot the phase transition diagram, which characterizes sharp shifts in the success probability of reconstruction when the dimension parameter crosses a threshold. The $x$-axis and the $y$-axis represent the under-sampling ratio and co-sparsity ratio, respectively. Yellow and blue denote perfect recovery and
failure in all experiments, respectively. It can be clearly seen that Success Rate (the yellow) of the proposed $\ell_1-\alpha\ell_2$ methods are the highest in all experiments. Experimental results show that $\ell_1-\alpha\ell_2$ methods outperform $\ell_1$ method and $\ell_p$ methods.

Figure \ref{figure.CPUtimePhaseTrransition-Comparion} shows the average CPU time of all methods at different $\varrho$ and $\rho$. We can observe that the CPU time of the proposed method is significantly lower than those of $\ell_{1}$ method at whole, and  higher than those of $\ell_p$ method for $p=0.5$. Thus, $\ell_1-\alpha\ell_2$-ASSO method can achieve a good balance between CPU time and recovery performance.

\begin{figure*}[t] 
\begin{tabular}{ccc}
\includegraphics[width=4.5cm,height=4.5cm]{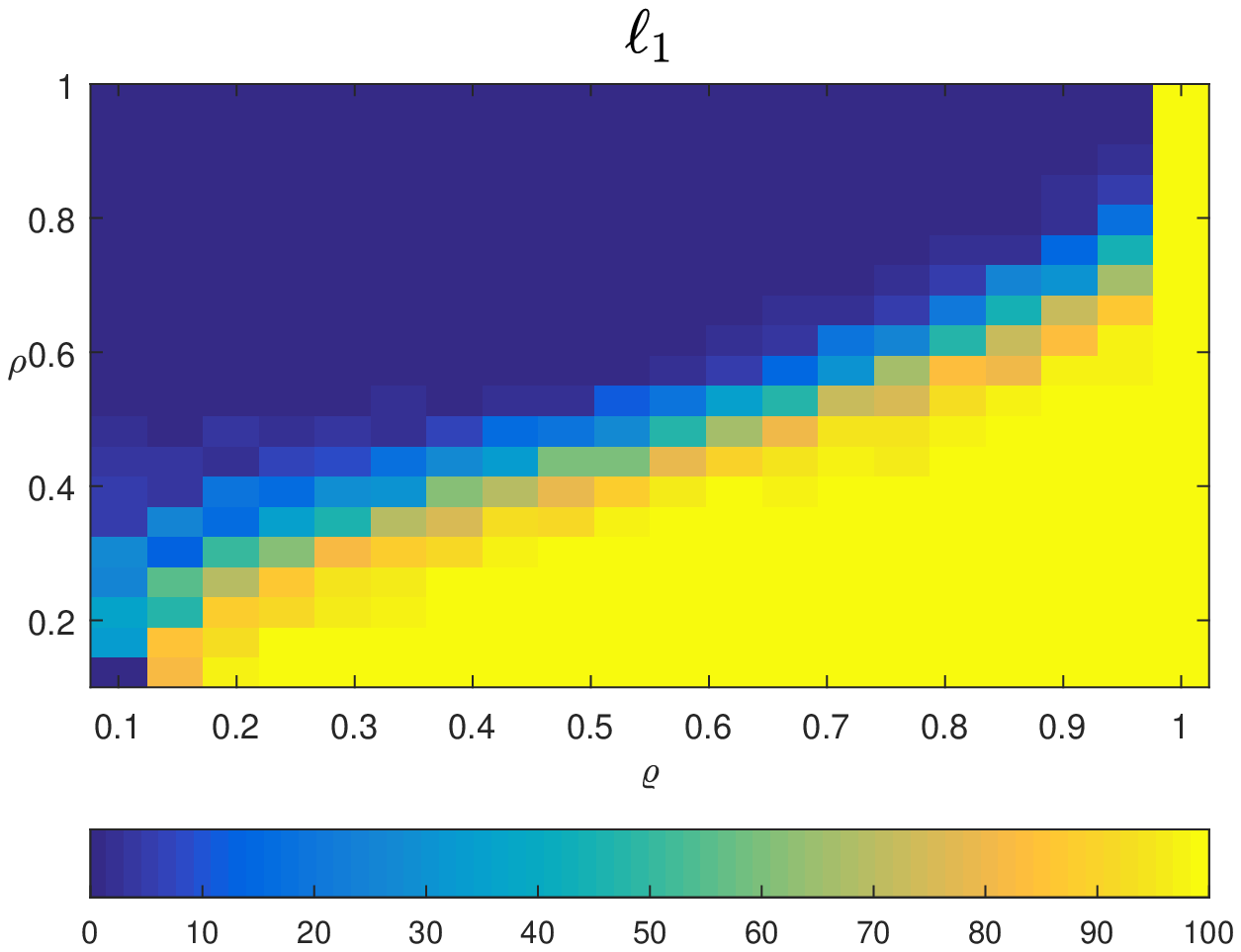}&
\includegraphics[width=4.5cm,height=4.5cm]{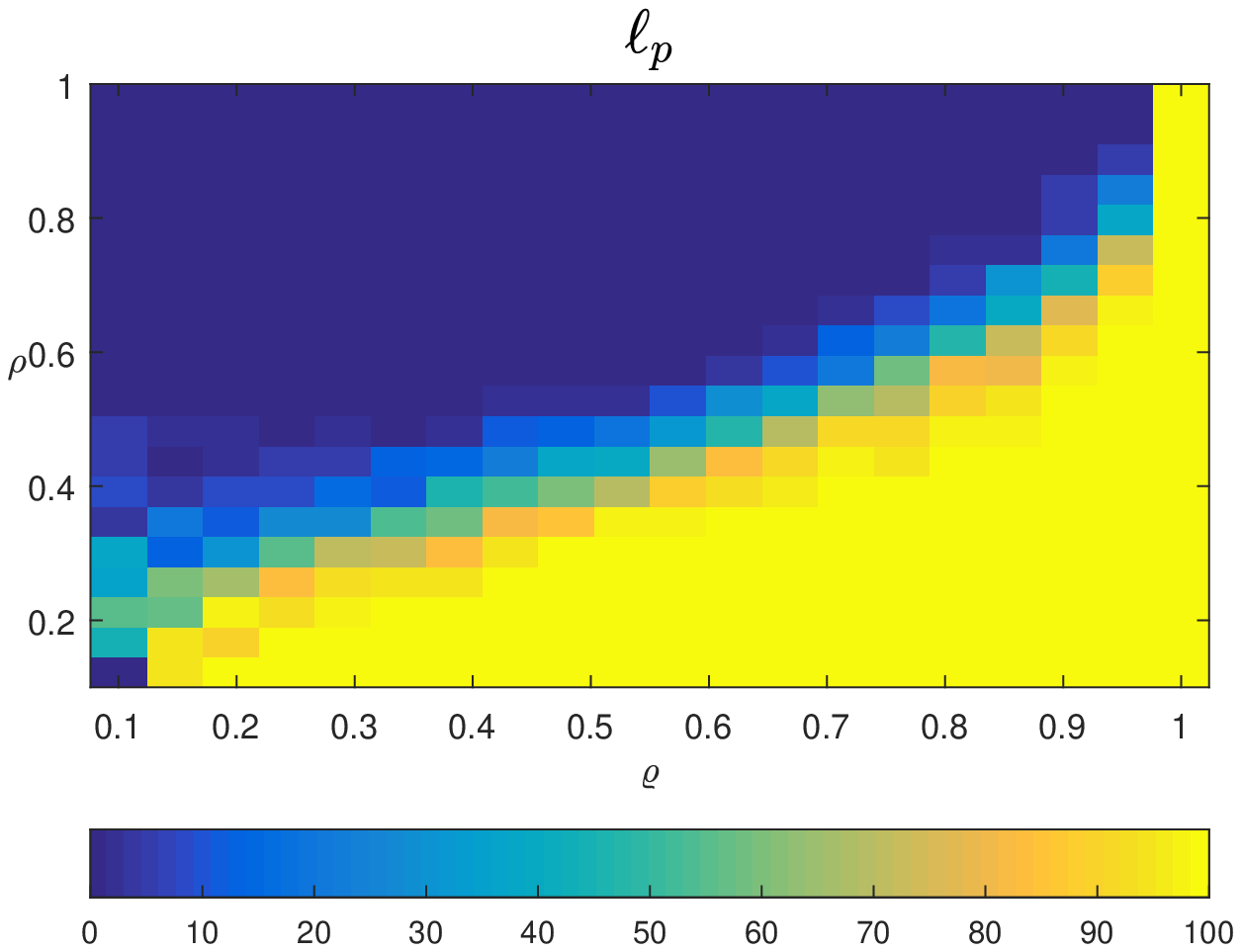}&
\includegraphics[width=4.5cm,height=4.5cm]{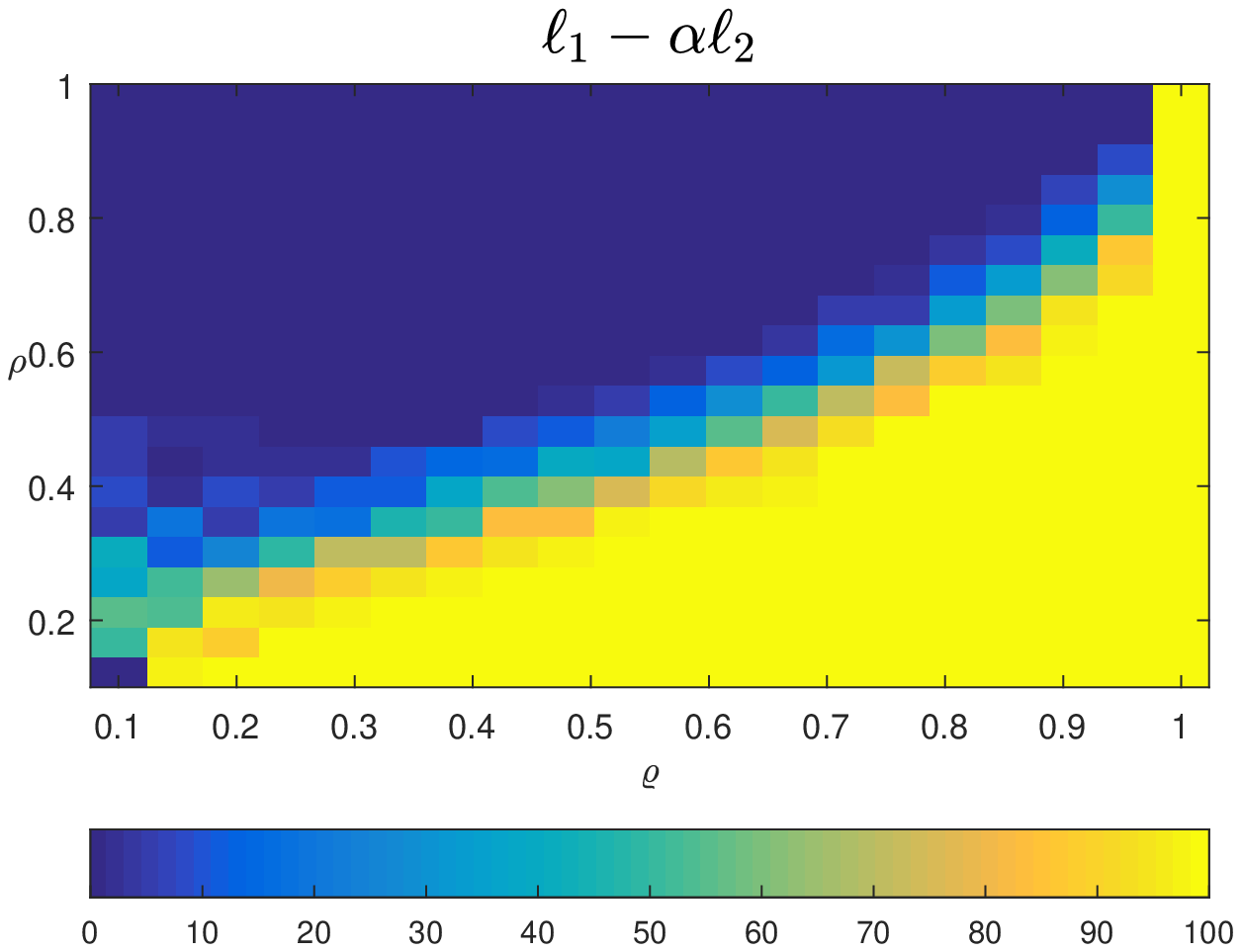}
\end{tabular}
\centering
\caption{Success  percentage of the  $\ell_1$-, $\ell_p$-~($p=0.5)$ and $\ell_{1}-\alpha\ell_{2}$-analysis for recover sparse signals
versus the ratios $\varrho$ and $\rho$.
}
\label{figure.SuccnumberPhaseTrransition-Comparion}
\end{figure*}

\begin{figure*}[t] 
\begin{tabular}{c}
\includegraphics[width=16.0cm]{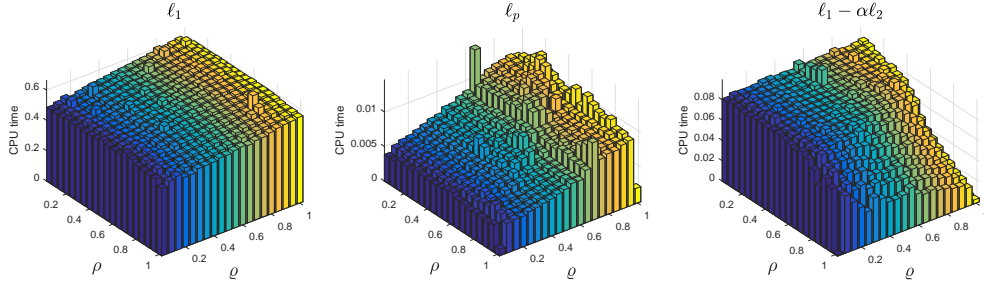}
\end{tabular}
\centering
\caption{  CPU time of the  $\ell_1$-, $\ell_p$-~($p=0.5)$ and $\ell_{1}-\alpha\ell_{2}$-analysis  for the sparse signal recovery
versus the ratios $\varrho$ and $\rho$.}
\label{figure.CPUtimePhaseTrransition-Comparion}
\end{figure*}

\subsection{Reconstruction of Compressed Sensing Magnetic Resonance Imaging under Tight Frame}\label{s5.2}
\noindent

 In this subsection, we consider the  shift-invariant discrete wavelet transform (SIDWT) for tight frame $\bm{D}$, which
is  a typical tight frame in simulation \cite{baker2011translational,baraniukrice,coifman1995translation,kayvanrad2014stationary}. And SIDWT is also called as undecimated, translation-invariant, or
fully redundant wavelets. In all the experiments,  we utilize Daubechies wavelets with 4 decomposition levels in SIDWT.

In CS-MRI, the sampling operator  is
$$\bm{A}=\bm{U}\mathcal{F},$$
where $\mathcal{F}$  is  the discrete Fourier transform,  and $\bm{U}$ is the sampling mask  in the frequency
space. The matrix $\bm{U}$ is also called the undersampling matrix. We keep samples along certain radial lines passing through the center of the Fourier data ($k$-space).
We reconstruct magnetic resonance images (MRI) from  incomplete spectral Fourier data: $256\times 256$ Brain MRI and $512\times 512$ Foot MRI (see  \cite[Section 4.1]{li2020compressive}) via the $\ell_1-\alpha\ell_2$-ASSO \eqref{VectorL1-alphaL2-ASSO}.

Similarly, we compare the $\ell_1-\alpha\ell_2$-ASSO \eqref{VectorL1-alphaL2-ASSO} to the $\ell_1$-analysis and $\ell_p$($0<p<1$)-analysis minimization methods.
 We adopt the pFISTA for tight frames  in  \cite{liu2016projected}  to solve the  $\ell_1$-analysis minimization problem.
The $\ell_p$($0<p<1$)-analysis model is
\begin{equation}\label{Lp-Analysis}
\min_{{\bm  x}}~\lambda\|{\bm D}^{\top}{\bm x}\|_{p}^{p}+\frac{1}{2}\|\bm{UF}{\bm  x}-{\bm  b}\|_{2}^{2},
\end{equation}
which is solved by the idea of the pFISTA for tight frames.  In fact,  as shown in section \ref{s4},
the equality  \eqref{FISTA1} is replaced by
\begin{equation}\label{FISTALp}
\bm{x}^{k+1}=\bm{D}~\text{Prox}_{\lambda\gamma\ell_p}\left(\bm{D}^{\top}\left(\bm{y}^{k}-\gamma \bm{A}^{*}(\bm{A}\bm{y}^{k}-\bm{b})\right)\right),
\end{equation}
where $0<p<1$ and the notation ${Prox}_{\lambda\ell_p}(\bf{b})$ is the proximal operator of $\ell_{p}$ norm, see \cite{marjanovic2012optimization}.

The quantitative comparison is done in terms of the relative error (RE) defined as
$$
\text{RE}=\frac{\|\hat{\bm{x}}-\bm{x}_0\|_{2}}{\|\bm{x}_0\|_{2}},
$$
where $\bm{x}_0$ is the truth image and $\hat{\bm{x}}$ is the reconstructed
image. To demonstrate how  $\ell_1-\alpha\ell_2$-ASSO method compares with other methods in terms of image quality,
 we show the restored versions of  Brain images and Foot images and reconstruction errors in Figures
\ref{figure.Reconstruct-image-comparsion}, \ref{figure.Reconstruct-FootMRI-comparsion} and Table \ref{tab:MRI-Time}, respectively.

In Figures \ref{figure.Reconstruct-image-comparsion} and \ref{figure.Reconstruct-FootMRI-comparsion},
 we show the reconstructed images of different methods for $76$ radial sampling lines
 (sampling rate 30.81$\%$ and 16.17$\%$ for Brain-MRI and Foot-MRI, respectively). By inspecting the recovered images of brain, it can be seen that $\ell_1-\alpha\ell_2$ method can obtain better performance than other methods.
We also record the CPU time of all methods in Table \ref{tab:MRI-Time}.

\section{Conclusions }\label{s6}
\noindent

In this paper, we consider the signal   and compressed sensing magnetic resonance imaging reconstruction under tight frame. We propose the unconstrained $\ell_{1}-\alpha\ell_{2}$-analysis model
\eqref{VectorL1-alphaL2-ASSO} and \eqref{VectorL1-alphaL2-RASSO}. Based on the restricted isometry
property and restricted orthogonality constant adapted to tight frame $\bm{D}$ ($\bm{D}$-RIP and $\bm{D}$-ROC), we develop new vital auxiliary tools (see Propositions \ref{prop.DROC} and \ref{NonsparseROC}) and sufficient conditions of stable recovery (see Theorems \ref{StableRecoveryviaVectorL1-alphaL2-ASSO} and \ref{StableRecoveryviaVectorL1-alphaL2-RASSO}). Based on the Projected FISTA \cite{liu2016projected},
 we  establish the  fast and  efficient algorithm
 to solve the unconstrained $\ell_{1}-\alpha\ell_{2}$-analysis model
    in Section \ref{s4}.
The proposed  method has better performance than the  $\ell_p$-analysis model with $0<p\leq 1$
in  numerical examples for  the  signal and compresses sensing MRI recovery.

\begin{figure*}[htbp!]
\setlength{\tabcolsep}{4.0pt}\small
\begin{tabular}{c}
\includegraphics[width=16.0cm,height=8.0cm]{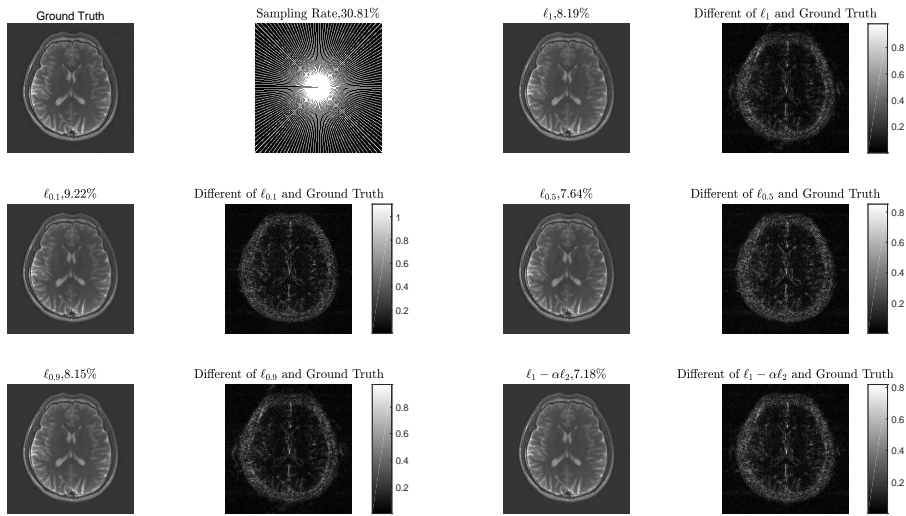}
\end{tabular}
\centering
\caption{\label{figure.Reconstruct-image-comparsion} Reconstructed Brain-MRI by the $\ell_1$-, $\ell_p$-~($0<p<1$) and $\ell_{1}-\alpha\ell_{2}$-analysis. From left to right in the first line: Ground truth,  sample lines, $\ell_1$ reconstruction image, difference images of $\ell_1$ to the ground truth image. From left to right in the second line: $\ell_{0.1}$ reconstruction image, difference images of $\ell_{0.1}$ to the ground truth image, $\ell_{0.5}$ reconstruction image, difference images of $\ell_{0.5}$ to the ground truth image. From left to right in the third line: $\ell_{0.9}$ reconstruction image, difference images of $\ell_{0.9}$ to the ground truth image, $\ell_1-\alpha\ell_2$ reconstruction image, difference images of $\ell_1-\alpha\ell_2$ to the ground truth image.
}
\vspace{-0.1cm}
\end{figure*}

\begin{figure*}[htbp!]
\setlength{\tabcolsep}{4.0pt}\small
\begin{tabular}{c}
\includegraphics[width=16.0cm,height=8.0cm]{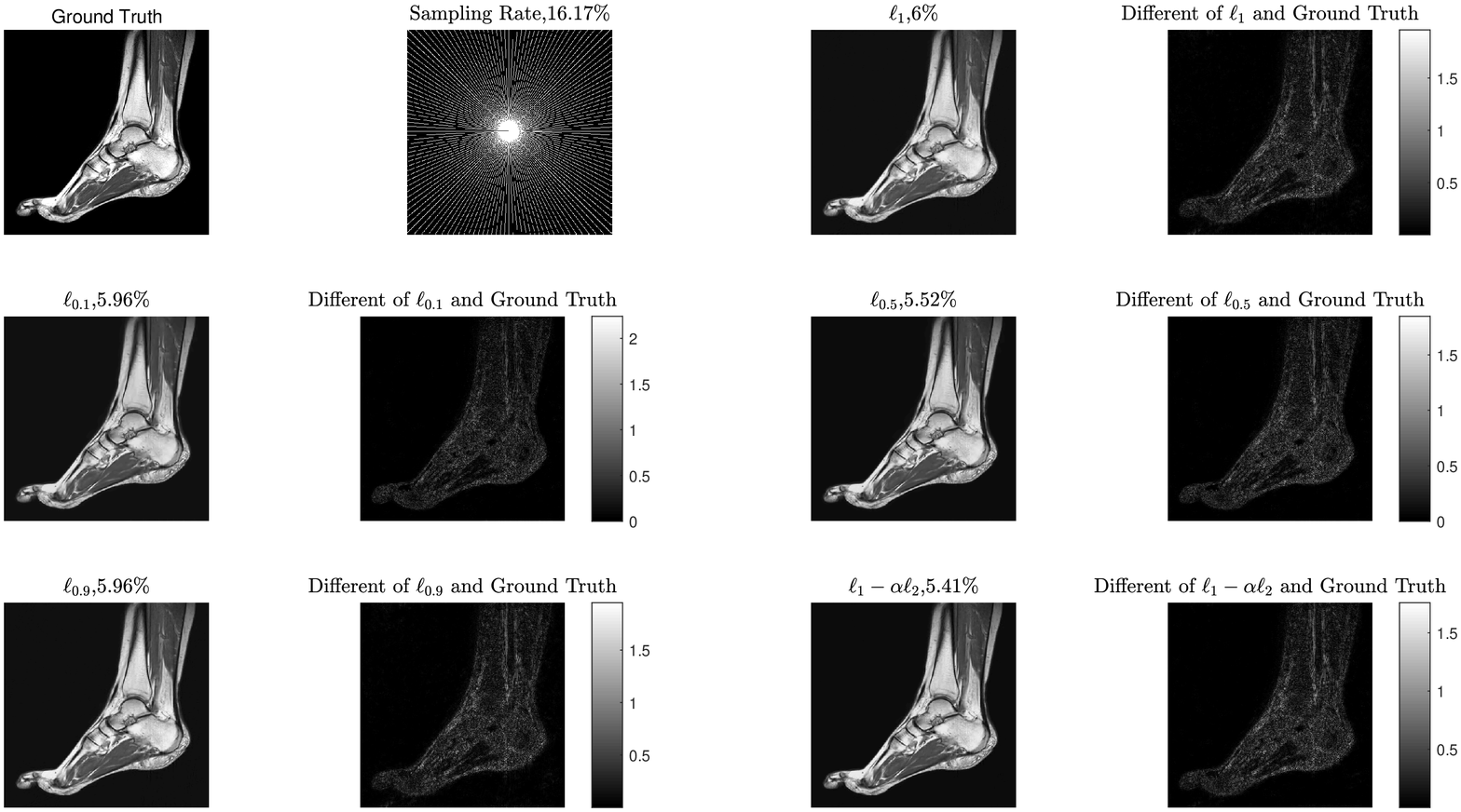}
\end{tabular}
\centering
\caption{\label{figure.Reconstruct-FootMRI-comparsion} Reconstructed Foot-MRI by the $\ell_1$-, $\ell_p$-~($0<p<1$) and $\ell_{1}-\alpha\ell_{2}$-analysis. From left to right in the first line: Ground truth,  sample lines, $\ell_1$ reconstruction image, difference images of $\ell_1$ to the ground truth image. From left to right in the second line: $\ell_{0.1}$ reconstruction image, difference images of $\ell_{0.1}$ to the ground truth image, $\ell_{0.5}$ reconstruction image, difference images of $\ell_{0.5}$ to the ground truth image. From left to right in the third line: $\ell_{0.9}$ reconstruction image, difference images of $\ell_{0.9}$ to the ground truth image, $\ell_1-\alpha\ell_2$ reconstruction image, difference images of $\ell_1-\alpha\ell_2$ to the ground truth image.
}
\vspace{-0.1cm}
\end{figure*}

\begin{table}[htbp]
\setlength{\tabcolsep}{5pt}\small
\begin{center}
\caption{The CPU Time (s) of Different reconstruction Models}\label{tab:MRI-Time}
\begin{tabular}{|c|c|c|c|c|c|c|}\hline
Image    & Sampling Rate &$\ell_{1}$  &$\ell_{0.1}$ &$\ell_{0.5}$ &$\ell_{0.9}$ &$\ell_{1}-\alpha\ell_{2}$  \\\hline
Brain-MRI    &30.08$\%$  &80.9519      &164.4825      &238.4073   &322.2761   &85.1567  \\\hline
Foot-MRI     &16.17$\%$ &308.3546         &441.8828      &657.7717  &857.9363   &340.5498 \\\hline
\end{tabular}
\end{center}
\end{table}

\newpage
\begin{appendices}
\section{The proof of Lemma \ref{lem:CrossItem}}\label{appendx1}
\noindent

\begin{proof}
$(\bm i)$ From  the condition \eqref{lem:CrossItem.eq1}, 
it follows that
\begin{align}\label{Coneconstraintinequality}
\|\bm{D}_{S^c}^{\top}\bm{h}\|_1-\|\bm{D}_{S^c}^{\top}\bm{h}\|_2
&\overset{(a)}{\leq}
\|\bm{D}_{S^c}^{\top}\bm{h}\|_1-\alpha\|\bm{D}_{S^c}^{\top}\bm{h}\|_2\nonumber\\
&\leq a\|\bm{D}_{S}^{\top}\bm{h}\|_1+b\|\bm{D}_{S}^{\top}\bm{h}\|_2+c\|\bm{D}_{T^c}^{\top}\bm{x}\|_1+\eta\|{\bm Ah}\|_2
+\gamma\nonumber\\
&\overset{(b)}{\leq}(a\sqrt{s}+b)\|\bm{D}_{S}^{\top}\bm{h}\|_2+c\|\bm{D}_{T^c}^{\top}\bm{x}\|_1+\eta\|{\bm Ah}\|_2
+\gamma\nonumber\\
&\leq(s-\sqrt{s})\bigg(
\frac{a\sqrt{s}+b}{\sqrt{s}-1}\frac{\|\bm{D}_{S}^{\top}\bm{h}\|_2}{\sqrt{s}}+\frac{c\|\bm{D}_{T^c}^{\top}\bm{x}\|_1+\eta\|{\bm Ah}\|_2
+\gamma}{s-\sqrt{s}}\bigg)\nonumber\\
&=:(s-\sqrt{s})\varrho,
\end{align}
where  $(a)$ is due to $0<\alpha\leq 1$, and
$(b)$ follows from the fact $\|\bm{D}_{S}^{\top}\bm{h}\|_1\leq \sqrt{s}\|\bm{D}_{S}^{\top}\bm{h}\|_2$.

 Furthermore, using the fact that $(a-1)\sqrt{s}+(b+1)\geq0$, i.e.,
$\frac{a\sqrt{s}+b}{\sqrt{s}-1}\geq1$,
one has
\begin{align}\label{e:etainftynew}
\|\bm{D}_{S^c}^{\top}\bm{h}\|_{\infty}\leq \frac{\|\bm{D}_{S}^{\top}\bm{h}\|_{1}}{s}\leq\frac{\|\bm{D}_{S}^{\top}\bm{h}\|_{2}}{\sqrt{s}}\leq\frac{\sqrt{s}+\alpha}{\sqrt{s}-1}\frac{\|\bm{D}_{S}^{\top}\bm{h}\|_2}{\sqrt{s}}\leq \varrho,
\end{align}
where the last inequality is due to the definition of $\varrho$ in \eqref{Coneconstraintinequality}.
By Proposition \ref{NonsparseROC} with $\bm{u}=\|\bm{D}_{S}^{\top}\bm{h}\|_1$
and $\bm v=\|\bm{D}_{S^c}^{\top}\bm{h}\|_1$,  the desired inequality  \eqref{e:CrossItem1} is clear.

$(\bm{ii})$
From the definition of $\tilde{S}$ in \eqref{def:S} and $\varrho$ in \eqref{Coneconstraintinequality}, it follows that
\begin{align}\label{e:etaW2infty}
\|\bm{D}_{{\tilde{S}}^c}^{\top}\bm{h}\|_{\infty}\leq\frac{\varrho}{t-1},
\end{align}
and
\begin{align}\label{e:etaW2}
\|\bm{D}_{{\tilde{S}}^c}^{\top}\bm{h}\|_1-\|\bm{D}_{{\tilde{S}}^c}^{\top}\bm{h}\|_2
=&\|\bm{D}_{S^c}^{\top}\bm{h}-\bm{D}_{\tilde{S}\setminus S}^{\top}\bm{h}\|_{1}
-\|\bm{D}_{S^c}^{\top}\bm{h}-\bm{D}_{\tilde{S}\setminus S}^{\top}\bm{h}\|_{2}\nonumber\\
\overset{(a)}{=}&\|\bm{D}_{S^c}^{\top}\bm{h}\|_1-\|\bm{D}_{\tilde{S}\setminus S}^{\top}\bm{h}\|_{1}
-\|\bm{D}_{S^c}^{\top}\bm{h}-\bm{D}_{\tilde{S}\setminus S}^{\top}\bm{h}\|_{2}\nonumber\\
\overset{(b)}{\leq}&\big(\|\bm{D}_{S^c}^{\top}\bm{h}\|_1-\|\bm{D}_{S^c}^{\top}\bm{h}\|_2\big)
-(\|\bm{D}_{\tilde{S}\setminus S}^{\top}\bm{h}\|_{1}-\|\bm{D}_{\tilde{S}\setminus S}^{\top}\bm{h}\|_{2})\nonumber\\
\overset{(c)}{\leq}&(s-\sqrt{s})\varrho-(\|\bm{D}_{\tilde{S}\setminus S}^{\top}\bm{h}\|_{1}-\|\bm{D}_{\tilde{S}\setminus S}^{\top}\bm{h}\|_{2}),
\end{align}
where $(a)$, $(b)$ and $(c)$  follow from  $\tilde{S}\setminus S \subseteq S^{c}$,  the triangle inequality on $\|\cdot\|_2$, and \eqref{Coneconstraintinequality}, respectively.

For the second term of the above inequality,  using  Lemma \ref{LocalEstimateL1-L2} (b) with $S_1=\tilde{S}\setminus S$ and $S_2=\tilde{S}^c$, we derive that
\begin{align}\label{e:eta-maxkuplowbound}
\|\bm{D}_{S^c}^{\top}\bm{h}\|_1-\|\bm{D}_{S^c}^{\top}\bm{h}\|_2
\geq&\big(\|\bm{D}_{\tilde{S}\setminus S}^{\top}\bm{h}\|_1-\|\bm{D}_{\tilde{S}\setminus S}^{\top}\bm{h}\|_2\big)
+\big(\|\bm{D}_{\tilde{S}^c}^{\top}\bm{h}\|_1-\|\bm{D}_{\tilde{S}^c}^{\top}\bm{h}\|_2\big)\nonumber\\
\geq&\|\bm{D}_{\tilde{S}\setminus S}^{\top}\bm{h}\|_1-\|\bm{D}_{\tilde{S}\setminus S}^{\top}\bm{h}\|_2\nonumber\\
\overset{(a)}{\geq}&(|\tilde{S}\setminus S|-\sqrt{|\tilde{S}\setminus S|})\min_{i\in \tilde{S}\setminus S}|(\bm{D}_{\tilde{S}\setminus S}^{\top}\bm{h})(i)|\nonumber\\
\overset{(b)}{\geq}&(|\tilde{S}\setminus S|-\sqrt{|\tilde{S}\setminus S|})\frac{\varrho}{t-1},
\end{align}
where we use Lemma \ref{LocalEstimateL1-L2} (a) and  the definition of $\tilde{S}$ in $(a)$ and $(b)$, respectively.
Substituting   \eqref{e:eta-maxkuplowbound} into  \eqref{e:etaW2},  there is
\begin{align}\label{e:l1-2upperbounds}
\|\bm{D}_{\tilde{S}^c}^{\top}\bm{h}\|_1-\|\bm{D}_{\tilde{S}^c}^{\top}\bm{h}\|_2
\leq \Big((s(t-1)-|\tilde{S}\setminus S|)-
(\sqrt{s}(t-1)-\sqrt{|\tilde{S}\setminus S|})\Big)\frac{\varrho}{t-1}.
\end{align}
Since $t\geq3$ and $s\geq 2$,
 as shown in the items (a) and (b) of
\cite[Page 18]{ge2021dantzig}, we have
$$|\tilde{S}\setminus S|< s(t-1),\ \ \ \ \sqrt{s(t-1)-|\tilde{S}\setminus S|}\leq \sqrt{s}(t-1)-\sqrt{|\tilde{S}\setminus S|}.$$
Then,
\begin{align}\label{e:etaW2upperbound}
\|\bm{D}_{\tilde{S}^c}^{\top}\bm{h}\|_1-\|\bm{D}_{\tilde{S}^c}^{\top}\bm{h}\|_2
\leq
\Big(s(t-1)-|\tilde{S}\setminus S|-\sqrt{s(t-1)-|\tilde{S}\setminus S|}\Big)\frac{\varrho}{t-1}.
\end{align}

Therefore,
 from  \eqref{e:etaW2infty}, \eqref{e:etaW2upperbound} and Proposition \ref{NonsparseROC} with  $\bm u=\bm{D}_{\tilde{S}}^{\top}{\bm h}$,
$\bm {v}=\bm{D}_{\tilde{S}^c}^{\top}\bm{h}$,
it follows that
\begin{align}\label{e:omegal2high}
&|\langle \bm{AD}\bm{D}_{\tilde{S}}^{\top}{\bm h},\bm{AD}\bm{D}_{\tilde{S}^c}^{\top}{\bm h}\rangle
+\langle\bar{\bm{ D}}\bm{D}_{\tilde{S}}^{\top}{\bm h},\bar{\bm{ D}}\bm{D}_{\tilde{S}^c}^{\top}{\bm h} \rangle|\nonumber\\
&\leq\bigg(1+\frac{\sqrt{2}}{2}\bigg)\theta_{ts, (t-1)s-|\tilde{S}\setminus S|}\sqrt{\lceil(t-1)s\rceil-|\tilde{S}\setminus S|} \frac{\varrho}{t-1}\|\bm{D}_{\tilde{S}}^{\top}{\bm h}\|_2\nonumber\\
&\leq \bigg(1+\frac{\sqrt{2}}{2}\bigg)\theta_{ts, (t-1)s}\sqrt{\lceil(t-1)s\rceil} \frac{\varrho}{t-1}\|\bm{D}_{\tilde{S}}^{\top}{\bm h}\|_2.
\end{align}
Based on the fact $\bar{\bm{ D}}\bm{D}^{\top}=\bm{0}$, the above inequality reduces to  the desired \eqref{e:CrossItem2}.

$(\bm{iii})$
For the term 
$\langle \bm{Ah}, \bm{AD}\bm{D}_{\tilde{S}}^{\top}\bm{h}\rangle$, there is
\begin{align}\label{Upperbound}
&|\langle \bm{Ah}, \bm{AD}\bm{D}_{\tilde{S}}^{\top}\bm{h}\rangle\big|\leq\| \bm{Ah}\|_{2}\|\bm{AD}\bm{D}_{\tilde{S}}^{\top}\bm{h}\|_{2}
\overset{(a)}{\leq}\sqrt{1+\delta_{ts}}\|\bm{D}\bm{D}_{\tilde{S}}^{\top}\bm{h}\|_{2}\|\bm{Ah}\|_{2}\nonumber\\
&\leq\sqrt{1+\delta_{ts}}\|\bm{D}\|_{2\rightarrow 2}^{1/2}\|\bm{D}_{\tilde{S}}^{\top}\bm{h}\|_{2}\|\bm{Ah}\|_{2}
\overset{(b)}{=} \sqrt{1+\delta_{ts}}\|\bm{D}_{\tilde{S}}^{\top}\bm{h}\|_{2}\|\bm{Ah}\|_{2},
\end{align}
where $(a)$ is because of the matrix $\bm{A}$ satisfying the $\bm D$-RIP of $ts$ order,
 and $(b)$ follows from $\|\bm{D}\|_{2\rightarrow 2}=\|\bm{D}\bm{D}^{\top}\|_{2\rightarrow 2}^{1/2}=1$.

Next, we will establish the lower bound of $\big|\langle \bm{Ah}, \bm{AD}\bm{D}_{\tilde{S}}^{\top}\bm{h}\rangle\big|$. Note that
\begin{align*}
\big|\langle \bm{Ah}, \bm{AD}\bm{D}_{\tilde{S}}^{\top}\bm{h}\rangle\big|\geq
\|  \bm{AD}\bm{D}_{\tilde{S}}^{\top}\bm{h}\|_2^2-\big|\langle  \bm{AD}\bm{D}_{\tilde{S}^c}^{\top}\bm{h}, \bm{AD}\bm{D}_{\tilde{S}}^{\top}\bm{h}\rangle\big|.
\end{align*}
From Proposition \ref{prop.DROC} with $\bm v=\bm{D}_{\tilde{S}^c}^{\top}\bm{h}$ and $\bar{\bm{ D}}\bm{D}^{\top}=\bm{0}$, it follows  that
\begin{align*}
\| \bm{AD}\bm{D}_{\tilde{S}}^{\top}\bm{h}\|_2^2
=\| \bm{AD}\bm{D}_{\tilde{S}}^{\top}\bm{h}\|_2^2+\|\bar{\bm{D}}\bm{D}_{\tilde{S}}^{\top}\bm{h}\|_2^2
\geq(1-\delta_{ts})\|\bm{D}_{\tilde{S}}^{\top}\bm{h}\|_2^2.
\end{align*}
By \eqref{e:CrossItem2} in item (ii) and $\varrho$ in \eqref{Coneconstraintinequality}, we have
\begin{align*}
\big|\langle  \bm{AD}\bm{D}_{\tilde{S}^c}^{\top}\bm{h}, \bm{AD}\bm{D}_{\tilde{S}}^{\top}\bm{h}\rangle\big|
\leq \theta_{ts, (t-1)s}\sqrt{\lceil(t-1)s\rceil}\bigg(1+\frac{\sqrt{2}}{2}\bigg) \frac{\varrho}{t-1}\|\bm{D}_{\tilde{S}}^{\top}\bm{h}\|_2
\end{align*}
Then, 
\begin{align}\label{Lowerbound}
&\big|\langle \bm{Ah}, \bm{AD}\bm{D}_{\tilde{S}}^{\top}\bm{h}\rangle\big|\nonumber\\
&\geq(1-\delta_{ts})\|\bm{D}_{\tilde{S}}^{\top}\bm{h}\|_2^2-\theta_{ts, (t-1)s}
\bigg(1+\frac{\sqrt{2}}{2}\bigg)\frac{\sqrt{\lceil(t-1)s\rceil}}{t-1}\|\bm{D}_{\tilde{S}}^{\top}\bm{h}\|_2\nonumber\\
&\cdot\bigg(\frac{a\sqrt{s}+b}{\sqrt{s}-1}\frac{\|\bm{D}_{S}^{\top}\bm{h}\|_2}{\sqrt{s}}
+\frac{c\|\bm{D}_{T^c}^{\top}\bm{x}\|_1+\eta\|{\bm Ah}\|_2+\gamma}{s-\sqrt{s}}\bigg)\nonumber\\
&\geq\bigg( 1-\delta_{ts}-\bigg(1+\frac{\sqrt{2}}{2}\bigg)\sqrt{\frac{\lceil(t-1)s\rceil}{(t-1)^2s}}\frac{a\sqrt{s}+b}{\sqrt{s}-1}
\theta_{ts, (t-1)s }\bigg)\|\bm{D}_{\tilde{S}}^{\top}\bm{h}\|_2^2\nonumber\\
&-\theta_{ts,(t-1)s}\bigg(1+\frac{\sqrt{2}}{2}\bigg)\frac{\sqrt{\lceil(t-1)s\rceil}}{t-1}\frac{c\|\bm{D}_{T^c}^{\top }\bm{x}\|_1+\eta\|{\bm Ah}\|_2+\gamma}{s-\sqrt{s}}\|\bm{D}_{\tilde{S}}^{\top}\bm{h}\|_2,
\end{align}
where the last inequality is due to $S\subseteq \tilde{S}$.
Combining \eqref{Lowerbound} with \eqref{Upperbound}, one has
\begin{align*}
&\bigg( 1-\delta_{ts}-\sqrt{\frac{\lceil(t-1)s\rceil}{(t-1)^2s}}\frac{(\sqrt{2}+1)(\sqrt{s}+\alpha)}{\sqrt{2}(\sqrt{s}-1)}
\theta_{ts,(t-1)s}\bigg)\|\bm{D}_{\tilde{S}}^{\top}\bm{h}\|_2^2\nonumber\\
&-\bigg(\theta_{ts,(t-1)s}\frac{\sqrt{2}+1}{\sqrt{2}}\frac{\sqrt{\lceil(t-1)s\rceil}}{t-1}\frac{c\|\bm{D}_{T^c}^{\top}\bm{x}\|_1+\eta\|\bm{ Ah}\|_2+\gamma}{s-\sqrt{s}}\nonumber\\
&+\sqrt{1+\delta_{ts}}\|\bm{Ah}\|_{2}\bigg)\|\bm{D}_{\tilde{S}}^{\top}\bm{h}\|_2\leq 0.
\end{align*}
Therefore, using \eqref{RIPConditiona1add} we derive that
\begin{align}\label{Estimate.hmax(s).eq1}
&\|\bm{D}_{S}^{\top}\bm{h}\|_2\leq\|\bm{D}_{\tilde{S}}^{\top}\bm{h}\|_2\nonumber\\
&\leq\frac{\theta_{ts,(t-1)s}}{(1-\rho_{s,t})}\frac{\sqrt{2}+1}{\sqrt{2}}\frac{\sqrt{\lceil(t-1)s\rceil}}{(t-1)(s-\sqrt{s})}
\left(c\|\bm{D}_{T^c}^{\top}\bm{x}\|_1+\gamma\right)\nonumber\\
&+\bigg(\frac{\theta_{ts,(t-1)s}}{(1-\rho_{s,t})}\frac{\sqrt{2}+1}{\sqrt{2}}\frac{
\sqrt{\lceil(t-1)s\rceil}}{(t-1)(s-\sqrt{s})} +\frac{\sqrt{1+\delta_{ts}}}{1-\rho_{s,t}}\bigg)\eta\|\bm{Ah}\|_{2},
\end{align}
where $\eta\geq 1$.

$(\bm{iv})$ 
As shown in the proof of $(iii)$, we can prove the item (iv) by item (i) and  \eqref{RIPCondition1}. We here omit  the detail proof.

$(\bm{v})$  The idea of the proof is the argument in \cite[ Step 2]{ge2021new}. 
By \eqref{lem:CrossItem.eq1} and the fact that
$\|\bm{D}_{S^c}^{\top}\bm{h}\|_{\infty}\leq \|\bm{D}_{S}^{\top}\bm{h}\|_{1}/s\leq\|\bm{D}_{S}^{\top}\bm{h}\|_{2}/\sqrt{s}$, we have
\begin{align*}
&\|\bm{D}_{S^c}^{\top}\bm{h}\|_2^2
\leq \|\bm{D}_{S^c}^{\top}\bm{h}\|_{1}\|\bm{D}_{S^c}^{\top}\bm{h}\|_{\infty}\nonumber\\
&\leq\big(a\|\bm{D}_{S}^{\top}\bm{h}\|_1+b\|\bm{D}_{S}^{\top}\bm{h}\|_2+c\|\bm{D}_{T^c}^{\top}\bm{x}\|_1+\eta\|{\bm {Ah}}\|_2+\gamma+\alpha\|\bm{D}_{S^c}^{\top}\bm{h}\|_2\big)
\frac{\|\bm{D}_{S}^{\top}\bm{h}\|_2}{\sqrt{s}}\nonumber\\
&\leq\frac{\alpha\|\bm{D}_{S}^{\top}\bm{h}\|_2}{\sqrt{s}}\|\bm{D}_{S^c}^{\top}\bm{h}\|_2+\frac{a\sqrt{s}+b}{\sqrt{s}}\|\bm{D}_{S}^{\top}\bm{h}\|_2^2+\frac{c\|\bm{D}_{T^c}^{\top
}\bm{x}\|_1+\eta\|{\bm {Ah}}\|_2+\gamma}{\sqrt{s}}\|\bm{D}_{S}^{\top}\bm{h}\|_2 .
\end{align*}
That is,
\begin{align*}
&\bigg(\|\bm{D}_{S^c}^{\top}\bm{h}\|_2-\frac{\alpha\|\bm{D}_{S}^{\top}\bm{h}\|_2}{2\sqrt{s}}\bigg)^2\\
&\leq\bigg(\frac{\alpha^2}{4s}+\frac{a\sqrt{s}+b}{\sqrt{s}}\bigg)\|\bm{D}_{S}^{\top}\bm{h}\|_2^2
+\frac{c\|\bm{D}_{T^c}^{\top}\bm{x}\|_1+\eta\|{\bm {Ah}}\|_2+\gamma}{\sqrt{s}}\|\bm{D}_{S}^{\top}\bm{h}\|_2.
\end{align*}
Then, we obtain
\begin{align}\label{e:anotherupperbound}
&\|\bm{D}_{S^c}^{\top}\bm{h}\|_2\nonumber\\
&\leq \Bigg(\sqrt{\frac{a\sqrt{s}+b}{\sqrt{s}}+\frac{\alpha^2}{4s}}+\frac{\alpha}{2\sqrt{s}}\Bigg)\|\bm{D}_{S}^{\top}\bm{h}\|_2
+\sqrt{\frac{c\|\bm{D}_{T^c}^{\top}\bm{x}\|_1+\eta\|{\bm {Ah}}\|_2+\gamma}{\sqrt{s}}\|\bm{D}_{S}^{\top}\bm{h}\|_2}\nonumber\\
&\leq\Bigg(\sqrt{\frac{a\sqrt{s}+b}{\sqrt{s}}+\frac{\alpha^2}{4s}}+\frac{\alpha+\bar{\varepsilon}}{2\sqrt{s}}\Bigg)\|\bm{D}_{S}^{\top}\bm{h}\|_2
+\frac{1}{2\bar{\varepsilon}}\big(c\|\bm{D}_{T^c}^{\top
}\bm{x}\|_1+\eta\|\bm{Ah}\|_2+\gamma\big),
\end{align}
where the second inequality  comes from the basic inequality $\sqrt{|a||b|}\leq \frac{|a|+|b|}{2}$, and the constant $\bar{\varepsilon}>0$.
\end{proof}

\end{appendices}
\section*{Acknowledgments}
The  project is partially supported by the Natural Science Foundation of China (Nos. 11901037, 72071018), NSFC of Gansu Province, China (Grant No. 21JR7RA511), the NSAF (Grant No. U1830107) and the Science  Challenge Project (TZ2018001).
Authors thanks Professors Xiaobo Qu for  making the pFISTA code available online.


\hskip\parindent

\bibliographystyle{plain}
\bibliography{Unconstrainted_FrameL1_L2_Minimization-2021}

\begin{thebibliography}{10}

\bibitem{baker2011translational}
Christopher~A Baker, Kevin King, Dong Liang, and Leslie Ying.
\newblock Translational-invariant dictionaries for compressed sensing in
  magnetic resonance imaging.
\newblock In {\em 2011 IEEE International Symposium on Biomedical Imaging: From
  Nano to Macro}, pages 1602--1605. IEEE, 2011.

\bibitem{baraniukrice}
Richard Baraniuk et~al.
\newblock Rice wavelet toolbox, 2009.

\bibitem{beck2009fast}
Amir Beck and Marc Teboulle.
\newblock A fast iterative shrinkage-thresholding algorithm for linear inverse
  problems.
\newblock {\em SIAM journal on imaging sciences}, 2(1):183--202, 2009.

\bibitem{bruckstein2009sparse}
Alfred~M Bruckstein, David~L Donoho, and Michael Elad.
\newblock From sparse solutions of systems of equations to sparse modeling of
  signals and images.
\newblock {\em SIAM review}, 51(1):34--81, 2009.

\bibitem{cai2014data}
Jian-Feng Cai, Hui Ji, Zuowei Shen, and Gui-Bo Ye.
\newblock Data-driven tight frame construction and image denoising.
\newblock {\em Applied and Computational Harmonic Analysis}, 37(1):89--105,
  2014.

\bibitem{cai2013compressed}
T~Tony Cai and Anru Zhang.
\newblock Compressed sensing and affine rank minimization under restricted
  isometry.
\newblock {\em IEEE Transactions on Signal Processing}, 61(13):3279--3290,
  2013.

\bibitem{candes2011compressed}
Emmanuel~J Cand{\`e}s, Yonina~C Eldar, Deanna Needell, and Paige Randall.
\newblock Compressed sensing with coherent and redundant dictionaries.
\newblock {\em Applied and Computational Harmonic Analysis}, 31(1):59--73,
  2011.

\bibitem{candes2006stable}
Emmanuel~J Cand{\`e}s, Justin~K Romberg, and Terence Tao.
\newblock Stable signal recovery from incomplete and inaccurate measurements.
\newblock {\em Communications on Pure and Applied Mathematics: A Journal Issued
  by the Courant Institute of Mathematical Sciences}, 59(8):1207--1223, 2006.

\bibitem{candes2005decoding}
Emmanuel~J Cand{\`e}s and Terence Tao.
\newblock Decoding by linear programming.
\newblock {\em IEEE transactions on information theory}, 51(12):4203--4215,
  2005.

\bibitem{chartrand2008restricted}
Rick Chartrand and Valentina Staneva.
\newblock Restricted isometry properties and nonconvex compressive sensing.
\newblock {\em Inverse Problems}, 24(3):035020, 2008.

\bibitem{coifman1995translation}
Ronald~R Coifman and David~L Donoho.
\newblock Translation-invariant de-noising.
\newblock In {\em Wavelets and statistics}, pages 125--150. Springer, 1995.

\bibitem{daubechies2004iterative}
Ingrid Daubechies, Michel Defrise, and Christine De~Mol.
\newblock An iterative thresholding algorithm for linear inverse problems with
  a sparsity constraint.
\newblock {\em Communications on Pure and Applied Mathematics: A Journal Issued
  by the Courant Institute of Mathematical Sciences}, 57(11):1413--1457, 2004.

\bibitem{daubechies2010iteratively}
Ingrid Daubechies, Ronald DeVore, Massimo Fornasier, and C~Sinan
  G{\"u}nt{\"u}rk.
\newblock Iteratively reweighted least squares minimization for sparse
  recovery.
\newblock {\em Communications on Pure and Applied Mathematics: A Journal Issued
  by the Courant Institute of Mathematical Sciences}, 63(1):1--38, 2010.

\bibitem{donoho2006compressed}
David~L Donoho.
\newblock Compressed sensing.
\newblock {\em IEEE Transactions on information theory}, 52(4):1289--1306,
  2006.

\bibitem{Donoho2005Stable}
David~L Donoho, Michael Elad, and Vladimir~N Temlyakov.
\newblock Stable recovery of sparse overcomplete representations in the
  presence of noise.
\newblock {\em IEEE Transactions on Information Theory}, 52(1):6--18, 2005.

\bibitem{elad2007analysis}
Michael Elad, Peyman Milanfar, and Ron Rubinstein.
\newblock Analysis versus synthesis in signal priors.
\newblock {\em Inverse problems}, 23(3):947, 2007.

\bibitem{ge2021new}
Huanmin Ge, Wengu Chen, and Michael~K Ng.
\newblock New restricted isometry property analysis for $\ell_1-\ell_2$
  minimization methods.
\newblock {\em SIAM Journal on Imaging Sciences}, Accepted, 2021.

\bibitem{ge2021dantzig}
Huanmin Ge and Peng Li.
\newblock The dantzig selector: recovery of signal via $\ell_1-\alpha\ell_2$
  minimization.
\newblock {\em Inverse Problems}, 38(1):015006, 2021.

\bibitem{geng2020Unconstrained}
Pengbo Geng and Wengu Chen.
\newblock Unconstrained $\ell_1-\ell_2$ minimization for sparse recovery via
  mutual coherence.
\newblock {\em Mathematical Foundations of Computing}, 3(2):65--79, 2020.

\bibitem{genzel2021analysis}
Martin Genzel, Gitta Kutyniok, and Maximilian M{\"a}rz.
\newblock $\ell_1$-analysis minimization and generalized (co-) sparsity: When
  does recovery succeed?
\newblock {\em Applied and Computational Harmonic Analysis}, 52:82--140, 2021.

\bibitem{guerquin2011fast}
Matthieu Guerquin-Kern, M~Haberlin, Klaas~Paul Pruessmann, and Michael Unser.
\newblock A fast wavelet-based reconstruction method for magnetic resonance
  imaging.
\newblock {\em IEEE transactions on medical imaging}, 30(9):1649--1660, 2011.

\bibitem{han2007frames}
Deguang Han, Keri Kornelson, Eric Weber, and David Larson.
\newblock {\em Frames for undergraduates}, volume~40.
\newblock American Mathematical Soc., 2007.

\bibitem{huang2014bayesian}
Yue Huang, John Paisley, Qin Lin, Xinghao Ding, Xueyang Fu, and Xiao-Ping
  Zhang.
\newblock Bayesian nonparametric dictionary learning for compressed sensing
  mri.
\newblock {\em IEEE Transactions on Image Processing}, 23(12):5007--5019, 2014.

\bibitem{kayvanrad2014stationary}
Mohammad~H Kayvanrad, A~Jonathan McLeod, John~SH Baxter, Charles~A McKenzie,
  and Terry~M Peters.
\newblock Stationary wavelet transform for under-sampled mri reconstruction.
\newblock {\em Magnetic resonance imaging}, 32(10):1353--1364, 2014.

\bibitem{lai2013improved}
Ming-Jun Lai, Yangyang Xu, and Wotao Yin.
\newblock Improved iteratively reweighted least squares for unconstrained
  smoothed $\ell_q$ minimization.
\newblock {\em SIAM Journal on Numerical Analysis}, 51(2):927--957, 2013.

\bibitem{lai2016image}
Zongying Lai, Xiaobo Qu, Yunsong Liu, Di~Guo, Jing Ye, Zhifang Zhan, and Zhong
  Chen.
\newblock Image reconstruction of compressed sensing mri using graph-based
  redundant wavelet transform.
\newblock {\em Medical image analysis}, 27:93--104, 2016.

\bibitem{li2019signal}
Peng Li and Wengu Chen.
\newblock Signal recovery under cumulative coherence.
\newblock {\em Journal of Computational and Applied Mathematics}, 346:399--417,
  2019.

\bibitem{li2020minimization}
Peng Li, Wengu Chen, Huanmin Ge, and Michael~K. Ng.
\newblock $\ell_1-\alpha \ell_2$ minimization methods for signal and image
  reconstruction with impulsive noise removal.
\newblock {\em Inverse Problems}, 36(5):055009, 2020.

\bibitem{li2020compressive}
Peng Li, Wengu Chen, and Michael~K Ng.
\newblock Compressive total variation for image reconstruction and restoration.
\newblock {\em Computers \& Mathematics with Applications}, 80(5):874--893,
  2020.

\bibitem{li2014compressed}
Song Li and Junhong Lin.
\newblock Compressed sensing with coherent tight frames via
  $\ell_{q}$-minimization for $0<q\leq1$.
\newblock {\em Inverse Problems \& Imaging}, 8(3):761, 2014.

\bibitem{lin2016restricted}
Junhong Lin and Song Li.
\newblock Restricted $q$-isometry properties adapted to frames for nonconvex
  $l_p$-analysis.
\newblock {\em IEEE Transactions on Information Theory}, 62(8):4733--4747,
  2016.

\bibitem{lin2012new}
Junhong Lin, Song Li, and Yi~Shen.
\newblock New bounds for restricted isometry constants with coherent tight
  frames.
\newblock {\em IEEE Transactions on Signal Processing}, 61(3):611--621, 2013.

\bibitem{liu2017further}
Tianxiang Liu and Ting~Kei Pong.
\newblock Further properties of the forward--backward envelope with
  applications to difference-of-convex programming.
\newblock {\em Computational Optimization and Applications}, 67(3):489--520,
  2017.

\bibitem{liu2015balanced}
Yunsong Liu, Jian-Feng Cai, Zhifang Zhan, Di~Guo, Jing Ye, Zhong Chen, and
  Xiaobo Qu.
\newblock Balanced sparse model for tight frames in compressed sensing magnetic
  resonance imaging.
\newblock {\em PloS one}, 10(4):e0119584, 2015.

\bibitem{liu2016projected}
Yunsong Liu, Zhifang Zhan, Jian-Feng Cai, Di~Guo, Zhong Chen, and Xiaobo Qu.
\newblock Projected iterative soft-thresholding algorithm for tight frames in
  compressed sensing magnetic resonance imaging.
\newblock {\em IEEE transactions on medical imaging}, 35(9):2130--2140, 2016.

\bibitem{lou2018fast}
Yifei Lou and Ming Yan.
\newblock Fast $l_{1}-l_{2}$ minimization via a proximal operator.
\newblock {\em Journal of Scientific Computing}, 74(2):767--785, 2018.

\bibitem{lou2015computing}
Yifei Lou, Penghang Yin, Qi~He, and Jack Xin.
\newblock Computing sparse representation in a highly coherent dictionary based
  on difference of $\ell_1$ and $\ell_2$.
\newblock {\em Journal of Scientific Computing}, 64(1):178--196, 2015.

\bibitem{lustig2007sparse}
Michael Lustig, David Donoho, and John~M Pauly.
\newblock Sparse mri: The application of compressed sensing for rapid mr
  imaging.
\newblock {\em Magnetic Resonance in Medicine: An Official Journal of the
  International Society for Magnetic Resonance in Medicine}, 58(6):1182--1195,
  2007.

\bibitem{lustig2008compressed}
Michael Lustig, David~L Donoho, Juan~M Santos, and John~M Pauly.
\newblock Compressed sensing mri.
\newblock {\em IEEE signal processing magazine}, 25(2):72--82, 2008.

\bibitem{majumdar2015energy}
Angshul Majumdar and Rabab~K Ward.
\newblock Energy efficient eeg sensing and transmission for wireless body area
  networks: A blind compressed sensing approach.
\newblock {\em Biomedical Signal Processing and Control}, 20:1--9, 2015.

\bibitem{marjanovic2012optimization}
Goran Marjanovic and Victor Solo.
\newblock On $\ell_q$ optimization and matrix completion.
\newblock {\em IEEE Transactions on signal processing}, 60(11):5714--5724,
  2012.

\bibitem{2006High}
Nicolai Meinshausen and Peter B{\"u}hlmann.
\newblock High-dimensional graphs and variable selection with the lasso.
\newblock {\em Annals of Statistics}, 34(3), 2006.

\bibitem{nam2013cosparse}
Sangnam Nam, Mike~E Davies, Michael Elad, and R{\'e}mi Gribonval.
\newblock The cosparse analysis model and algorithms.
\newblock {\em Applied and Computational Harmonic Analysis}, 34(1):30--56,
  2013.

\bibitem{qu2014magnetic}
Xiaobo Qu, Yingkun Hou, Fan Lam, Di~Guo, Jianhui Zhong, and Zhong Chen.
\newblock Magnetic resonance image reconstruction from undersampled
  measurements using a patch-based nonlocal operator.
\newblock {\em Medical image analysis}, 18(6):843--856, 2014.

\bibitem{ravishankar2010mr}
Saiprasad Ravishankar and Yoram Bresler.
\newblock Mr image reconstruction from highly undersampled $k$-space data by
  dictionary learning.
\newblock {\em IEEE transactions on medical imaging}, 30(5):1028--1041, 2011.

\bibitem{sandilya2017compressed}
Mrinmoy Sandilya and SR~Nirmala.
\newblock Compressed sensing trends in magnetic resonance imaging.
\newblock {\em Engineering science and technology, an international journal},
  20(4):1342--1352, 2017.

\bibitem{tan2014smoothing}
Zhao Tan, Yonina~C Eldar, Amir Beck, and Arye Nehorai.
\newblock Smoothing and decomposition for analysis sparse recovery.
\newblock {\em IEEE Transactions on Signal Processing}, 62(7):1762--1774, 2014.

\bibitem{vasanawala2011practical}
S.~S. Vasanawala, M.~J. Murphy, Marcus~T. Alley, P.~Lai, Kurt Keutzer, John~M.
  Pauly, and Michael Lustig.
\newblock Practical parallel imaging compressed sensing mri: Summary of two
  years of experience in accelerating body mri of pediatric patients.
\newblock In {\em 2011 ieee international symposium on biomedical imaging: From
  nano to macro}, pages 1039--1043. IEEE, 2011.

\bibitem{vetterli2012foundations}
Martin Vetterli, Jelena Kovacevic, and Vivek~K Goyal.
\newblock Foundations of signal processing tables.
\newblock 2012.

\bibitem{wang2019improved}
Wendong Wang and Jianjun Wang.
\newblock Improved sufficient condition of $\ell_{1-2}$-minimisation for robust
  signal recovery.
\newblock {\em Electronics Letters}, 55(22):1199--1201, 2019.

\bibitem{ye2019compressed}
Jong~Chul Ye.
\newblock Compressed sensing mri: a review from signal processing perspective.
\newblock {\em BMC Biomedical Engineering}, 1(1):1--17, 2019.

\bibitem{yin2015minimization}
Penghang Yin, Yifei Lou, Qi~He, and Jack Xin.
\newblock Minimization of $\ell_{1-2}$ for compressed sensing.
\newblock {\em SIAM Journal on Scientific Computing}, 37(1):A536--A563, 2015.

\bibitem{zhan2015fast}
Zhifang Zhan, Jian-Feng Cai, Di~Guo, Yunsong Liu, Zhong Chen, and Xiaobo Qu.
\newblock Fast multiclass dictionaries learning with geometrical directions in
  mri reconstruction.
\newblock {\em IEEE Transactions on biomedical engineering}, 63(9):1850--1861,
  2015.

\bibitem{2006On}
Peng Zhao and Bin Yu.
\newblock On model selection consistency of lasso.
\newblock {\em Journal of Machine Learning Research}, 7(12):2541--2563, 2006.

\end{thebibliography}

\end{document}